\documentclass{scrartcl}
\usepackage[utf8]{inputenc}
\usepackage{microtype}
\usepackage[bibliography=common]{apxproof}

\usepackage{graphicx}
\usepackage{booktabs} 
\usepackage{amssymb}
\usepackage{stmaryrd}
\usepackage{paralist}
\usepackage{amsmath}
\usepackage{mathrsfs}
\usepackage{xspace}
\usepackage{url}
\usepackage{bm}
\usepackage[table]{xcolor}
\usepackage{tikz}
\usepackage{float}
\usepackage{colonequals}
\usepackage{hhline}
\usepackage{verbatim}
\usepackage{latexsym}
\usepackage[breaklinks=true]{hyperref}
\usetikzlibrary{arrows.meta}

\hypersetup{
    colorlinks,
    linkcolor={red!50!black},
    citecolor={blue!50!black},
    urlcolor={blue!30!black}
}

\hyphenation{non-de-ter-mi-ni-stic}

\renewcommand{\Pr}{\mathrm{Pr}}

\makeatletter
\newcommand*{\defeq}{\mathrel{\rlap{%
  \raisebox{0.3ex}{$\m@th\cdot$}}%
  \raisebox{-0.3ex}{$\m@th\cdot$}}%
  =}
\makeatother

\newcommand{\card}[1]{\left|#1\right|}

\newcommand{\calH}{\mathcal{H}}
\newcommand{\calG}{\mathcal{G}}

\newcommand{\calX}{\mathcal{X}}
\newcommand{\calY}{\mathcal{Y}}

\newcommand\la{\langle}
\newcommand\ra{\rangle}

\newcommand\restr[2]{{
  \kern-\nulldelimiterspace 
  #1 
  _{|#2} 
  }}

\newcommand{\QQp}{\mathbb{Q}_+}
\newcommand{\ZZ}{\mathbb{Z}}

\newcommand{\class}[1]{\ensuremath{\mathsf{#1}}}
\newcommand{\DWT}{\class{DWT}\xspace}
\newcommand{\PT}{\class{PT}\xspace}
\newcommand{\OWP}{\class{1WP}\xspace}
\newcommand{\TWP}{\class{2WP}\xspace}
\newcommand{\Connected}{\class{Connected}\xspace}
\newcommand{\All}{\class{All}\xspace}

\newcommand{\UPT}{\class{\bigsqcup PT}\xspace}
\newcommand{\UDWT}{\class{\bigsqcup DWT}\xspace}
\newcommand{\UOWP}{\class{\bigsqcup1WP}\xspace}
\newcommand{\UTWP}{\class{\bigsqcup2WP}\xspace}

\newcommand{\problem}[1]{\texttt{\mdseries \upshape #1}}
\newcommand{\cspd}{\ensuremath{\problem{\#CSP}_\mathrm{d}}}
\newcommand{\pptdnf}{\problem{\#PP2DNF}}

\newcommand{\regline}{\arrayrulecolor{black}}
\newcommand{\hardline}{\arrayrulecolor{black!20!white}}
\newcommand{\hardcell}{\cellcolor{black!20!white}}
\newcommand{\tabexplanation}{{%
  \setlength{\fboxsep}{1pt}%
  \footnotesize
  {\fcolorbox{black}{white}{\vphantom{\#}PTIME}}
  {\fcolorbox{black}{black!20!white}{\#P-hard}}
  Numbers given correspond to propositions for border
  cases, remaining cells can be filled using
  the inclusions from Figure~\ref{fig:inclusion_diagram}.%
}}

\newcommand\phom{\ensuremath{\mathop{\mathsf{PHom}}}\xspace}

\newcommand\phomL{\ensuremath{\mathop{\mathsf{PHom}_{\mathsf{L}}}}\xspace}
\newcommand\phomU{\ensuremath{\mathop{\mathsf{PHom}_{\!\not\:\mathsf{L}}}}\xspace}

\newcommand\phoml[2]{\ensuremath{\mathop{\mathsf{PHom}_{\mathsf{L}}}(#1,#2)}}
\newcommand\phomu[2]{\ensuremath{\mathop{\mathsf{PHom}_{\!\not\:\mathsf{L}}}(#1,#2)}}

\renewcommand{\hom}{\leadsto}

\newcommand{\vlineright}[1]{\multicolumn{1}{c|}{#1}}

 \newtheorem{definition}{Definition}[section]
 
\newtheoremrep{proposition}[definition]{Proposition}
\newtheoremrep{theorem}[definition]{Theorem}
 
 \newtheoremrep{lemma}[definition]{Lemma}
 \newtheorem{example}[definition]{Example}

\title{Conjunctive Queries on Probabilistic Graphs:\\
Combined Complexity}

\date{}

\author{
\begin{tabular}[t]{c}
Antoine Amarilli \\
{\normalfont LTCI, Télécom ParisTech, Université Paris-Saclay} \\
{\normalfont antoine.amarilli@telecom-paristech.fr} \\[0.5em]
Mikaël Monet \\
{\normalfont LTCI, Télécom ParisTech, Université Paris-Saclay} \\
{\normalfont mikael.monet@telecom-paristech.fr} \\[0.5em]
Pierre Senellart \\
{\normalfont DI, École normale supérieure, PSL Research University} \\
{\normalfont \& Inria Paris} \\
{\normalfont pierre.senellart@telecom-paristech.fr} \\[0.5em]
\end{tabular}
}

\renewcommand{\phi}{\varphi}
\renewcommand{\epsilon}{\varepsilon}
\renewcommand{\leq}{\leqslant}

\begin{document}

\maketitle

\begin{abstract}
  Query evaluation over probabilistic databases is known to be intractable in many
cases, even in data complexity, i.e., when the query is fixed. Although some
restrictions of the queries \cite{dalvi2012dichotomy} and instances
\cite{amarilli2015provenance} have been proposed to lower the complexity, these
known tractable cases usually do not apply to combined complexity, i.e., when
the query is not fixed. This leaves open the question of which query and
instance languages ensure the tractability of probabilistic query evaluation in
combined complexity.

This paper proposes the first general study of the combined complexity of
conjunctive query evaluation on probabilistic instances over binary signatures,
which we can alternatively phrase as a probabilistic version of the graph
homomorphism problem, or of a constraint satisfaction problem (CSP) variant. We
study the complexity of this problem depending on whether instances and queries
can use features such as edge labels, disconnectedness, branching, and edges in
both directions. We show that the complexity landscape is surprisingly rich,
using a variety of technical tools: automata-based compilation to d-DNNF
lineages as in~\cite{amarilli2015provenance}, $\beta$-acyclic lineages
using~\cite{brault2015understanding}, the $\underline{X}$-property for tractable
CSP from~\cite{gutjahr1992polynomial}, graded DAGs~\cite{odagiri2014greatest}
and various coding techniques for hardness proofs.

\end{abstract}

\section{Introduction}\label{sec:introduction}
Uncertainty naturally arises in many data management applications, when
integrating data that may be untrustworthy, erroneous, or outdated; or when
generating or annotating data using information extraction or machine learning
approaches.
The framework of \emph{probabilistic databases} \cite{suciu2011probabilistic} 
has been introduced to answer such needs: it provides a natural semantics for
concise representations of probability distributions on data, and allows the
user to evaluate queries directly on the representations.
The simplest probabilistic framework is that of \emph{tuple-independent databases}~(TID), where
each tuple in the relational database is annotated with a probability of actually
being present, assuming independence across all tuples. Evaluating a Boolean query $Q$
over a TID instance $I$ means computing the probability that $Q$ is true
according to the distribution of~$I$, or in other words, the total probability mass of the
possible worlds of~$I$ that satisfy~$Q$.

As is usual in database theory, the complexity of this probabilistic query
evaluation problem (PQE) can be measured as a function of both $I$ and $Q$,
namely, \emph{combined complexity}~\cite{vardi1995complexity}, or as a function of $I$ when the query~$Q$
is fixed, called \emph{data complexity}. Almost all works on PQE so far
have focused on data complexity, where they have explored the general
intractability of PQE in this sense. Indeed, while non-probabilistic query evaluation 
of fixed queries in first-order logic
has
polynomial-time data complexity (specifically, $\text{AC}^0$), the PQE problem
is \#P-hard\footnote{\#P~is the class of counting
problems that can be expressed as the number of accepting paths of a
nondeterministic polynomial-time Turing machine.} already for some fixed
conjunctive queries~\cite{dalvi2007efficient}. Specifically,
the celebrated PQE result
by Dalvi and Suciu~\cite{dalvi2012dichotomy},
has shown a dichotomy on unions of conjunctive queries: some are 
\emph{safe queries}, enjoying PTIME data complexity (specifically, linear~\cite{ceylan2016open}),
and all other queries are \#P-hard. Earlier work by some
of the present authors has shown also a dichotomy on instance families for fixed
monadic second-order queries, with tractable data complexity for bounded-treewidth
families \cite{amarilli2015provenance}, and intractability otherwise under some
assumptions~\cite{amarilli2016tractable}.

However, even when PQE is tractable in data complexity, the task may still be
infeasible because of unrealistically large constants that depend on the query.
For instance, our approach in~\cite{amarilli2015provenance} is nonelementary in
the query, and the algorithm for safe queries in~\cite{dalvi2012dichotomy} is
generally super-exponential in the query~\cite{suciu2011probabilistic}.
For this reason, we believe that it is also important to achieve a good
understanding of the \emph{combined} complexity of PQE, and to isolate cases where PQE
is tractable in combined complexity; similarly to how, e.g., Yannakakis'
algorithm can evaluate $\alpha$-acyclic queries on non-probabilistic instances 
with tractable combined complexity~\cite{yannakakis1981algorithms}.
This motivates the question studied in this paper: \emph{for which
classes of queries and instances does PQE enjoy tractable combined complexity?}

\paragraph*{Related work.}
Surprisingly, the question of achieving combined tractability for PQE
does not seem to have been studied before. To our knowledge, the only exception
is in the setting of probabilistic XML~\cite{kimelfeld2013probabilistic}, where
deterministic tree automata queries were shown to enjoy
tractable combined complexity~\cite{cohen2009running}. In the context of
relational databases, our recent work~\cite{amarilli2017combined} shows the
combined tractability of \emph{provenance computation} for a specific Datalog
fragment on bounded-treewidth instances, but observes that these results do not
seem to give tractability of PQE, which is already intractable in much more
restricted settings. These results, however (Propositions~36 and~38
of~\cite{amarilli2017combinedb}), do not give a complete picture of the combined
complexity of PQE; in particular, they do not even give any non-trivial setting
where it is tractable.

Questions of combined tractability have also been studied in the setting of
\emph{constraint satisfaction problems} (CSP), following a well-known connection
between CSP and the conjunctive query evaluation problem in database
theory, or the study of the graph homomorphism problem (see, e.g.,
\cite{grohe2007complexity}). We can then see the restriction of PQE to conjunctive queries as a probabilistic, or weighted,
variant of these problems, but we are not aware of any existing study of this
variant. In the graph homomorphism setting, a related but
different problem is that of \emph{counting} graph
homomorphisms~\cite{bulatov2013complexity}: but this
amounts to counting the number of matches of a query in a database instance,
which is different from counting the possible worlds of an instance where the
query has some match, as we do.
A more related problem is \problem{\#SUB}~\cite{curticapean2014complexity},
which asks, given a query graph~$G$ and an instance graph~$H$, for the
\emph{number of subgraphs} of~$H$ which are \emph{isomorphic} to~$G$. When all facts
are labeled with $1/2$, our problem asks instead for the number of subgraphs of~$H$ to
which $G$ admits a \emph{homomorphism}. A~further difference is that we allow arbitrary
probability annotations, amounting to a form of weighted counting; in
particular,
facts can be given probability~1.

\paragraph*{Problem statement.}
Inspired by the connection to graph homomorphism and CSP, in this paper we investigate
the probabilistic query evaluation problem for conjunctive queries on
tuple-independent instances, over arity-two
signatures. 
To our knowledge, our paper is the first to focus on the combined complexity of
conjunctive query evaluation on probabilistic relational data.
For simplicity of exposition, we will phrase our problem in terms of graphs: given a
\emph{query graph} and a probabilistic \emph{instance graph}, where each edge is
annotated by a probability, we must determine the probability that the query graph has a
homomorphism to the instance graph, i.e., the total probability mass of the
subgraphs which ensure this, assuming independence between edges. We always
assume the query and instance graphs to be directed.

As we will see, the problem is generally intractable, so we will have to study
restricted settings. We accordingly study this problem under assumptions on the
query and input graphs.
Inspired by our prior intractability results~\cite{amarilli2016tractable},
one general assumption that we will make is to impose \emph{tree-likeness} of the
instance. In fact, we will generally restrict it to be a \emph{polytree}, i.e., a directed
graph whose underlying undirected graph is a tree. As we will see, however, even
this restriction does not suffice to ensure tractability, so we study the impact
of several other features:

\begin{itemize}
\item \emph{Labels}, i.e., whether edges of the query and instance can be labeled
  by a finite alphabet, as would be the case on a relational signature with
  more than one binary predicate.
\item \emph{Disconnectedness}, i.e., allowing disconnected queries and
  instances.
\item \emph{Branching}, i.e., allowing graphs to branch out, instead of
requiring them to be a path.
\item \emph{Two-wayness}, i.e., allowing edges with arbitrary orientation,
  instead of requiring all edges to have the same orientation (as in a one-way
  path, or downward tree).
\end{itemize}
We accordingly study our problem for \emph{labeled graphs} and \emph{unlabeled
graphs}, and when query and instance graphs are in the following classes, that cover
the possible combinations of the above characteristics: one-way and two-way
paths, downward trees and polytrees, and disjoint unions thereof.

\paragraph*{Results.}
This paper presents our combined complexity results for the probabilistic query evaluation problem in all
these settings. After introducing the preliminaries and defining the problem in 
Section~\ref{sec:preliminaries}, we first study the impact of disconnectedness
in instances and queries in Section~\ref{sec:disconnected-CQs}. While we can
easily show that disconnectedness does not matter for instances
(Lemma~\ref{lem:disconnected-instances}), we show that
disconnectedness of queries has an unexpected impact on complexity: in the
labeled case, even the simplest disconnected queries on the simplest
kinds of instances are intractable (Proposition~\ref{prp:l-conj_of_1WP-1WP}): this result is shown
via the hardness of counting edge covers in bipartite graphs.
The picture for
disconnected queries is more complex in the unlabeled case (see
Table~\ref{tab:unlabeled-conj_of}): indeed, the problem is still hard when allowing
two-wayness in the query and instance (as it can be used to simulate labels, see
Proposition~\ref{prp:u-conj_of_2WP-2WP}), but disallowing two-wayness in the
instance ensures tractability of all queries. This latter result
(Proposition~\ref{prp:u-all-DWT}) is established by showing that all queries
then
essentially collapse to a one-way path: we do so by assigning a \emph{level} to all
vertices of the query using a notion of \emph{graded
DAGs}~\cite{odagiri2014greatest,schroder2016graded}.

We then focus on connected queries, and first study the labeled setting in
Section~\ref{sec:l-connected-CQs}; see Table~\ref{tab:labeled} for a summary of
results. We show that disallowing instance branching ensures the tractability of
all connected queries (Proposition~\ref{prp:l-connected-2WP}), and that
disallowing branching in the query \emph{and} two-wayness in the instance and query
also does (Proposition~\ref{prp:l-1WP-DWT}). These two results are shown by
computing a \emph{Boolean lineage} of the query~\cite{suciu2011probabilistic},
and proving that we can tractably evaluate its probability because it is
$\beta$-acyclic~\cite{brault2015understanding}, thanks to the restricted
instance structure. For the first result, this process further relies on a CSP tool
to show the tractability of homomorphism testing in labeled two-way paths, a
condition dubbed the $\underline{X}$-property
\cite{gutjahr1992polynomial,gottlob2006conjunctive}. We show the intractability
of all other cases (Propositions~\ref{prp:l-1WP-PT}, \ref{prp:l-DWT-DWT}, and
\ref{prp:l-2WP-DWT}), by coding \#SAT-reduction, reusing in part a coding from
\cite{amarilli2017combinedb}.

We last study the unlabeled setting for connected queries in
Section~\ref{sec:u-connected_CQs}. We show that disallowing query branching and
two-wayness suffices to obtain tractability, provided that the instance is a
polytree (Proposition~\ref{prp:u-1WP-PT}): this result is proven by building in
PTIME a deterministic tree automaton to test the length of the longest path, and
compiling a d-DNNF lineage as in \cite{amarilli2015provenance}. This result
immediately extends to branching queries, as they are equivalent to paths in
this case (Proposition~\ref{prp:u-DWT-PT}). We complete the picture by showing
that, by contrast, allowing two-wayness in the query leads to intractability on
polytrees, by a variant of our coding technique
(Proposition~\ref{prp:u-2WP-PT}). We then conclude in
Section~\ref{sec:conclusion}.

Our results completely classify the complexity
of probabilistic conjunctive query evaluation for all combinations of instance
and query restrictions, in the labeled and unlabeled setting.
Full proofs are given in appendix.

\section{Preliminaries}\label{sec:preliminaries}
We first provide some formal definitions of the concepts we use in this
paper, and introduce the \emph{probabilistic graph homomorphism} problem
and the different classes of graphs that we consider.

\paragraph*{Graphs and homomorphisms.}
Let $\sigma$ be a finite non-empty set of labels. When $|\sigma|>1$, we
say that we are in the \emph{labeled setting}; when $|\sigma|=1$, in
the \emph{unlabeled setting}.

We consider \emph{directed graphs} with edge labels from~$\sigma$, i.e.,
triples $H = (V,E,\lambda)$ with $V$ a non-empty finite set of vertices, $E\subseteq
V^2$ a set of edges, and $\lambda: E\to\sigma$ a labeling function.
We write $a\xrightarrow{R} b$ for an edge $e=(a,b)$ with label
$\lambda(e)=R$.
Note that we do not allow multi-edges: an edge $e$ has a unique label $\lambda(e)$.
When $|\sigma| = 1$, i.e., in the unlabeled setting, we simply write 
$(V,E)$ for the graph and $a \rightarrow b$ for an edge.
Unless otherwise specified, all graphs that we consider
in this paper are directed.

A graph $H'=(V',E',\lambda')$ is a \emph{subgraph} of the
graph~$H=(V,E,\lambda)$, written $H'\subseteq H$, when we have $V'=V$,
$E'\subseteq E$, and when
$\lambda'$ is $\lambda_{|E'}$, i.e., the restriction of~$\lambda$ to~$E'$.
(Note that, in a slightly non-standard way, we impose that
subgraphs have the same set of vertices than the original graph; this
will simplify some notation.)

A \emph{graph homomorphism} $h$ from some graph~$G=(V_G,E_G,\lambda_G)$
to some
graph~$H=(V_H,E_H,\lambda_H)$ is a function $h: V_G\to V_H$ such that, for all
$(u,v)\in E_G$, we have $(h(u),h(v))\in E_H$ and further $\lambda_H((h(u),
h(v))) = \lambda_G((u, v))$.
A \emph{match} of~$G$ in~$H$ is the image in~$H$ of such a homomorphism~$h$, i.e.,
the graph with vertices $h(u)$ for~$u \in V_G$ and edges $(h(u),
h(v)))$ for~$(u, v) \in E_G$. Note that two different homomorphisms may define
the same match.
Also note that two distinct nodes of $G$ could have the same image by $h$, so a match
of~$G$ in~$H$ is not necessarily homomorphic to~$G$.
We write $G\hom H$ when there exists a homomorphism from~$G$ to~$H$. We call
two graphs $G$ and~$G'$ \emph{equivalent} if, for any graph $H$, we have $G \hom
H$ iff $G' \hom H$. It is easily seen that $G$ and $G'$ are equivalent if and only if $G \leadsto G'$ and $G' \leadsto G$.

\paragraph*{Probabilistic graphs.}
A \emph{probability distribution on graphs} is a function $\Pr$ from a finite set
$\mathcal{W}$ of graphs (called the \emph{possible worlds} of $\Pr$) to values
in $[0;1]$
represented as rational numbers, such that the probabilities of all possible
worlds sum to~$1$, namely, $\sum_{H \in \mathcal{W}} \Pr(H) = 1$.

A \emph{probabilistic graph} is intuitively a concise representation of a probability
distribution. Formally, it is a pair $(H,\pi)$ where $H$ is a 
graph with edge labels from~$\sigma$ and where $\pi$ is a probability function
$\pi: E\to [0;1]$ that maps every edge $e$ of $H$ to a probability~$\pi(e)$,
represented as a rational number. Note that each edge $(u, v)$
in a probabilistic graph~$(H, \pi)$
is annotated both with a label $\lambda((u, v)) \in \sigma$, and a
probability $\pi((u, v))$. 

The \emph{probability distribution} $\Pr$
defined by the probabilistic graph $(H,\pi)$ is obtained intuitively by
considering that edges are kept or deleted independently according to the
indicated probability. Formally, the possible worlds $\mathcal{W}$ of~$\Pr$ are
the subgraphs of~$H=(V,E,\lambda)$, and for $H'=(V,E',\lambda_{|E'}) \subseteq H$ we define $\Pr(H') \defeq
\prod_{e \in E'} \pi(e) \times \prod_{e \in E \setminus E'} (1-\pi(e))$. Note
that, when $H$ has edges labeled with~$0$ or~$1$, some possible worlds are given
probability~$0$ by $\pi$.

\begin{example}
  \label{exa:graph}
  Figure~\ref{fig:full-example} represents a probabilistic graph $(H, \pi)$ on
  signature $\sigma = \{R, S\}$, where each edge is annotated with its label and
  probability value. There are $2^6$ possible worlds, $2^5$ of which have
  non-zero probability. 

  The possible world where all $R$-edges are kept and all $S$-edges are removed
has probability $0.1 \times 1 \times 0.8 \times 0.1 \times 0.05 \times (1 - 0.7)$.
\end{example}

\paragraph*{Probabilistic graph homomorphism.}
The goal of this paper is to study the
\emph{probabilistic homomorphism problem} $\phom$, for the set of labels
$\sigma$ that we fixed: given a
graph $G$ on~$\sigma$ and a probabilistic graph~$(H,\pi)$ on~$\sigma$, compute the probability
that there exists a homomorphism from $G$ to $H$ under $\Pr$, i.e.,
the sum of the probabilities of all subgraphs of~$H'$ to which $G$ has a
homomorphism:
\[
  \Pr(G\leadsto H)\defeq \sum_{\substack{H'\subseteq H\\G\leadsto H'}}\Pr(H').
\]

\begin{example}
  Continuing the example, let us consider the $\phom$ problem for the graph
  ${G:
  {\xrightarrow{R} \xrightarrow{S} \xleftarrow{S}}}$ and the example probabilistic
  graph $(H, \pi)$ in Figure~\ref{fig:full-example}. 
  The graph $G$ intuitively corresponds to the relational calculus query $\exists x y z t ~ R(x,y) \land S(y,z)
  \land S(t,z)$.
  Of course, we can compute $\Pr(G \leadsto H)$ by summing over the possible
  worlds of~$H$, but this process is generally intractable.
  Here, by considering the possible matches of $G$ in $H$, we can
  see that $\Pr(G \leadsto H) = 0.7 \times (1 - (1 - 0.1) \times (1 -
  0.8))$. 
\end{example}

\begin{figure}
  \centering
  \begin{tikzpicture}[yscale=.4,xscale=1.5]
	\node[fill = white, name = H] at (-.5,2) {$H$:};

	\node[fill = white, name = 20] at (2,0) {$\bullet$};
	\node[fill = white, name = 42] at (4,2) {$\bullet$};
	\node[fill = white, name = 24] at (2,4) {$\bullet$};
	\node[fill = white, name = 02] at (0,2) {$\bullet$};
	\draw[line width = 1pt, ->]  (20) -- (02) node[midway,pos=.33,below=1.5pt] {$1$}
        node[pos=0.67,below=1.5pt] {$R$};
	\draw[line width = 1pt, ->]  (02) -- (24)
        node[midway,pos=.67,above=1.5pt] {$0.1$}
        node[pos=0.33,above=1.5pt] {$R$};
	\draw[line width = 1pt, ->]  (24) -- (42) node[midway,pos=.33,above=1.5pt] {$R$}
        node[pos=0.67,above=1.5pt] {$0.1$};
	\draw[line width = 1pt, ->]  (42) -- (20) node[midway,pos=.67,below=1.5pt] {$R$}
        node[pos=0.33,below=1.5pt] {$0.05$};

        \draw[line width = 1pt, ->]  ([xshift=1.7pt]24.south) to 
        node[right,midway,align=left] {$S$\\$0.7$} ([xshift=1.7pt]20.north);
        \draw[line width = 1pt, ->]  ([xshift=-1.7pt]20.north) to 
        node[left,midway,align=right]
        {$R$\\$0.8$}
        ([xshift=-1.7pt]24.south);

\end{tikzpicture}
  \caption{Example probabilistic graph $H$}
\label{fig:full-example}
\end{figure}
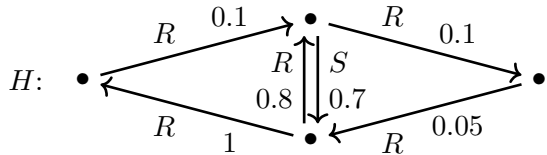

Following database terminology, we call $G$ the \emph{query graph} and $(H,
\pi)$ the \emph{(probabilistic) instance graph}. Indeed, the $\phom$ problem is
easily seen to be equivalent to conjunctive query evaluation on probabilistic
tuple-independent relational databases \cite{dalvi2012dichotomy},
over binary relational signatures.

Note that, in this paper, we measure the complexity of $\phom$ as a function
of \emph{both} the query graph $G$ and of the instance graph $(H, \pi)$, i.e.,
in database terminology, we measure the \emph{combined
complexity}~\cite{vardi1995complexity} of
probabilistic query evaluation. The $\phom$ problem is known to be
\mbox{\#P-hard} in
general \cite{dalvi2007efficient} (even for some fixed query graphs): by this,
we mean that it is hard (under polynomial-time reductions) for the class \#P of
counting problems that can be expressed as the number of accepting paths of a
nondeterministic polynomial-time Turing machine.
To achieve tractable complexity for $\phom$, we will classify the complexity of $\phom$
under various restrictions. We say that the complexity of some variant of the
problem is \emph{tractable} if the probability can be computed by a deterministic
polynomial-time Turing machine: by a slight abuse of terminology, we then say
that it is \emph{in PTIME}.
All $\phom$ variants that we study will be shown either to be PTIME in this
sense, or to be \#P-hard.

We will study restrictions of $\phom$ first by distinguishing the 
\emph{labeled} and \emph{unlabeled} settings. We write
$\phomL$ for the problem when the fixed label set $\sigma$ is such that
$|\sigma| > 1$, and $\phomU$ when the fixed $\sigma$ is such that $|\sigma|=1$.

The second restriction concerns the input query graphs and instance graphs.
We will model
restrictions on these graphs by requiring them to be taken from specific
graph classes,
where by \emph{graph class} we simply mean an infinite set of graphs.
Inspired by the notation used in CSP, for two classes $\mathcal{G}$ and
$\mathcal{H}$ of graphs in the labeled setting,
we denote $\phoml{\mathcal{G}}{\mathcal{H}}$
the problem that takes as input
a graph $G$ in class~$\mathcal{G}$ and a probabilistic graph $(H,\pi)$ with $H$ in class
$\mathcal{H}$,
and computes the probability $\Pr(G\leadsto H)$.
We denote the same problem in the unlabeled setting by 
$\phomu{\mathcal{G}}{\mathcal{H}}$.

\begin{figure}
  \centering
  \begin{tikzpicture}[yscale=.49]
	\node[fill = white, name = 1WP] at (0,0) {\OWP};
	\node[fill = white, name = 2WP] at (1.8,1) {\TWP};
	\node[fill = white, name = DWT] at (1.8,-1) {\DWT};
	\node[fill = white, name = PT] at (3.6,0) {\PT};
	\node[fill = white, name = connected] at (5.6,0) {\Connected};
	\node[fill = white, name = all] at (7.6,0) {\All};
\path[-stealth, auto] (1WP) edge[draw=none] node[sloped, auto=false] {$\subseteq$} (2WP);
\path[-stealth, auto] (2WP) edge[draw=none] node[sloped, auto=false, allow upside down] {$\subseteq$} 
(PT);
 \path[-stealth, auto] (PT) edge[draw=none] node[sloped, auto=false] {$\subseteq$} (connected);
 \path[-stealth, auto] (connected) edge[draw=none] node[sloped, auto=false] {$\subseteq$} (all);
 \path[-stealth, auto] (1WP) edge[draw=none] node[sloped, auto=false] {$\subseteq$} (DWT);
 \path[-stealth, auto] (DWT) edge[draw=none] node[sloped, auto=false] {$\subseteq$} (PT);
\end{tikzpicture}
\caption{Inclusions between classes of graphs}
\label{fig:inclusion_diagram}
\end{figure}

\paragraph*{Graph classes.}
The graph classes which we study in this paper are defined as follows, on a graph
$G$ with edge labels from $\sigma$:
\begin{itemize}
\item $G$ is a \emph{one-way
path} (\OWP) if it is of the form
$a_1 \xrightarrow{R_1} \cdots \xrightarrow{R_{m-1}} a_m$ for some~$m$, with all
$a_1, \ldots, a_m$ being pairwise distinct, and with $R_i \in \sigma$ for $1 \leq i < m$.
\item $G$ is a \emph{two-way path} (\TWP) if it is of the form $a_1 - \cdots -
  a_m$, with all $a_1, \ldots, a_m$ being pairwise distinct, and each $-$ being
  $\xrightarrow{R_i}$ or $\xleftarrow{R_i}$ (but not both) for some label $R_i \in \sigma$.
\item $G$ is a \emph{downwards tree} (\DWT) if it is a rooted unranked
  tree (each node can have an arbitrary number of children), with all edges going from parent to child in the tree.
\item $G$ is a \emph{polytree} (\PT) if its underlying undirected graph is an  unranked tree, without restriction on edge directions.
\end{itemize}
We also consider the class \Connected{} of connected graphs, and write 
\All{} the class of all graphs.
The inclusion diagram between our graph classes is shown in
Figure~\ref{fig:inclusion_diagram}.
See Figure~\ref{fig:prelim-example-paths} for an example
of a labeled one-way path and two-way path, and 
Figure~\ref{fig:prelim-example-trees} for an unlabeled downwards tree and polytree.

\begin{figure}
  \centering
  \begin{tikzpicture}[xscale=1.4,yscale=.41]
	\node[fill = white, name = a1] at (0,6) { };
	\node[fill = white, name = a2] at (1,6) { };
	\node[fill = white, name = a3] at (2,6) { };
	\node[fill = white, name = a4] at (3,6) { };
	\node[fill = white, name = a5] at (4,6) { };
	\node[fill = white, name = b1] at (0,4) { };
	\node[fill = white, name = b2] at (1,4) { };
	\node[fill = white, name = b3] at (2,4) { };
	\node[fill = white, name = b4] at (3,4) { };
	\node[fill = white, name = b5] at (4,4) { };
	\node[fill = white, name = b6] at (5,4) { };

	\draw[line width = 1pt, ->]  (a1) -- (a2) node[midway, above, fill=white] {$R$};
	\draw[line width = 1pt, ->]  (a2) -- (a3) node[midway, above, fill=white] {$S$};
	\draw[line width = 1pt, ->]  (a3) -- (a4) node[midway, above, fill=white] {$S$};
	\draw[line width = 1pt, ->]  (a4) -- (a5) node[midway, above, fill=white] {$T$};
	\draw[line width = 1pt, ->]  (b1) -- (b2) node[midway, above, fill=white] {$R$};
	\draw[line width = 1pt, <-]  (b2) -- (b3) node[midway, above, fill=white] {$S$};
	\draw[line width = 1pt, <-]  (b3) -- (b4) node[midway, above, fill=white] {$S$};
	\draw[line width = 1pt, ->]  (b4) -- (b5) node[midway, above, fill=white] {$T$};
	\draw[line width = 1pt, <-]  (b5) -- (b6) node[midway, above, fill=white] {$R$};
\end{tikzpicture}

\caption{Example of labeled \OWP{} (top) and \TWP{} (bottom)
for $\sigma = \{R,S,T\}$.}
\label{fig:prelim-example-paths}
\end{figure}
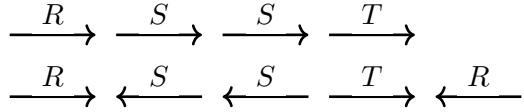

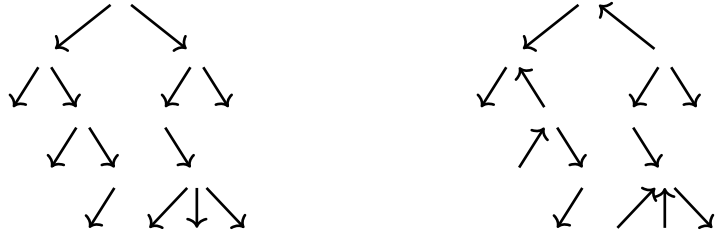
\begin{figure}
  \null\hfill\begin{tikzpicture}[xscale=1,yscale=.8]
	\node[fill = white, name = c1] at (7.5,6.5) { };
	\node[fill = white, name = c2] at (6.5,5.5) { };
	\node[fill = white, name = c3] at (8.5,5.5) { };
	\node[fill = white, name = c4] at (6,4.5) { };
	\node[fill = white, name = c9] at (7.5,3.5) { };
	\node[fill = white, name = c6] at (8,4.5) { };
	\node[fill = white, name = c7] at (9,4.5) { };
	\node[fill = white, name = c8] at (6.5,3.5) { };
	\node[fill = white, name = c5] at (7,4.5) { };
	\node[fill = white, name = c10] at (8.5,3.5) { };
	\node[fill = white, name = c11] at (7,2.5) { };
	\node[fill = white, name = c12] at (7.75,2.5) { };
	\node[fill = white, name = c13] at (8.5,2.5) { };
	\node[fill = white, name = c14] at (9.25,2.5) { };

	\draw[line width = 1pt, ->]  (c2) -- (c4);
	\draw[line width = 1pt, ->]  (c2) -- (c5);
	\draw[line width = 1pt, ->]  (c1) -- (c2);
	\draw[line width = 1pt, ->]  (c1) -- (c3);
	\draw[line width = 1pt, ->]  (c3) -- (c6);
	\draw[line width = 1pt, ->]  (c3) -- (c7);
	\draw[line width = 1pt, ->]  (c5) -- (c8);
	\draw[line width = 1pt, ->]  (c5) -- (c9);
	\draw[line width = 1pt, ->]  (c9) -- (c11);
	\draw[line width = 1pt, ->]  (c6) -- (c10);
	\draw[line width = 1pt, ->]  (c10) -- (c12);
	\draw[line width = 1pt, ->]  (c10) -- (c13);
	\draw[line width = 1pt, ->]  (c10) -- (c14);
  \end{tikzpicture}
  \hfill
  \begin{tikzpicture}[xscale=1,yscale=.8]
	\node[fill = white, name = d1] at (11,6.5) { };
	\node[fill = white, name = d2] at (10,5.5) { };
	\node[fill = white, name = d3] at (12,5.5) { };
	\node[fill = white, name = d4] at (9.5,4.5) { };
	\node[fill = white, name = d9] at (11,3.5) { };
	\node[fill = white, name = d6] at (11.5,4.5) { };
	\node[fill = white, name = d7] at (12.5,4.5) { };
	\node[fill = white, name = d8] at (10,3.5) { };
	\node[fill = white, name = d5] at (10.5,4.5) { };
	\node[fill = white, name = d10] at (12,3.5) { };
	\node[fill = white, name = d11] at (10.5,2.5) { };
	\node[fill = white, name = d12] at (11.25,2.5) { };
	\node[fill = white, name = d13] at (12,2.5) { };
	\node[fill = white, name = d14] at (12.75,2.5) { };

	\draw[line width = 1pt, ->]  (d2) -- (d4);
	\draw[line width = 1pt, <-]  (d2) -- (d5);
	\draw[line width = 1pt, ->]  (d1) -- (d2);
	\draw[line width = 1pt, <-]  (d1) -- (d3);
	\draw[line width = 1pt, ->]  (d3) -- (d6);
	\draw[line width = 1pt, ->]  (d3) -- (d7);
	\draw[line width = 1pt, <-]  (d5) -- (d8);
	\draw[line width = 1pt, ->]  (d5) -- (d9);
	\draw[line width = 1pt, ->]  (d9) -- (d11);
	\draw[line width = 1pt, ->]  (d6) -- (d10);
	\draw[line width = 1pt, <-]  (d10) -- (d12);
	\draw[line width = 1pt, <-]  (d10) -- (d13);
	\draw[line width = 1pt, ->]  (d10) -- (d14);
\end{tikzpicture}\hfill\null
\centering
\caption{Examples of unlabeled \DWT{} (left) and \PT{} (right)}
\label{fig:prelim-example-trees}
\end{figure}

We also introduce the classes \UOWP{} (resp., \UTWP{}, \UDWT{}, \UPT{}) of graphs that are
\emph{disjoint unions of \OWP{} \emph{(resp., \TWP{}, \DWT{}, \PT{})}}, that is, of
possibly disconnected graphs whose connected components are \OWP{} (resp.,
\TWP{}, \DWT{}, \PT{}).

Our graph classes were chosen to be representative of different features
of graphs that will have an impact in the complexity of the \phom{}
problem, namely, \emph{labeling}, \emph{two-wayness}, \emph{branching}, and
\emph{disconnectedness}. Indeed, \TWP{}
(resp., \PT{}) adds two-wayness to \OWP{} (resp., \DWT{}); \DWT{}
(resp., \PT{}) adds branching to \OWP{} (resp., \TWP{}); and \UOWP{}
(resp., \UTWP{}, \UDWT{}, \UPT{}) adds disconnectedness to \OWP{} (resp.,
\TWP{}, \DWT{}, \PT{}).

In the following sections, we investigate the complexity of 
probabilistic graph homomorphism for these various classes of conjunctive
queries and instances.

\section{Disconnected Case}\label{sec:disconnected-CQs}
We first consider the case where either the query or probabilistic
instance graph is \emph{disconnected}, i.e., not in the \Connected{}
class. When the query is disconnected, we show in this section that the
probabilistic homomorphism problem is \#P-hard in all but the most
restricted of cases (in particular in the labeled setting),
which justifies that we restrict to connected queries
in the rest of the paper. On the other hand, we will show that
disconnectedness in the probabilistic instance graph has essentially no
impact on combined complexity.

\subsection{Labeled Disconnected Queries}
\label{sec:labeled-disconnected-queries}

\newcommand{\bipec}{\problem{\#Bipartite-Edge-Cover}}

We establish our main intractability result on disconnected queries 
by reduction
from the \bipec{} problem on \emph{undirected} graphs:

\begin{definition}
  An undirected graph is \emph{bipartite} if its vertices can be partitioned
  into two classes such that no edge connects two vertices of the same class. An
  \emph{edge cover} of an undirected graph is a subset of its edges such that
  every vertex is incident to at least one edge of the subset.
  \bipec{} is the problem, given a bipartite undirected
  graph, of counting its number of edge covers.
\end{definition}

This problem was shown in~\cite{khanna2011queries} to be intractable. The result
can also be proven using Valiant's
holographic reductions~\cite{valiant2008holographic} and the results of Cai, Lu,
and Xia~\cite{cai2012holographic}: see Appendix~\ref{apx:holographic}.

\begin{theorem}
  \label{thm:bipec}
  \cite{khanna2011queries,cai2012holographic}
  The \bipec{} problem is \#P-complete.
\end{theorem}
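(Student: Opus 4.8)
Membership in \#P is immediate: an edge cover of an undirected graph $G=(V,E)$ is just a subset $E'\subseteq E$ such that every vertex is incident to some edge of $E'$, a condition checkable in linear time; hence a nondeterministic polynomial-time machine that guesses $E'$ and accepts exactly when $E'$ is an edge cover has a number of accepting paths equal to the number of edge covers of $G$. The content of the statement is thus the \#P-hardness, which is already proved in \cite{khanna2011queries}; below I sketch the alternative argument via holographic reductions, which is the one I would write out in detail (and which is reproduced in Appendix~\ref{apx:holographic}, following \cite{cstheory_tyson_williams}).

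The starting point is that counting edge covers is a Holant problem. Given $G=(V,E)$, put a Boolean variable on each edge, place on each vertex $v$ the symmetric signature $\mathrm{OR}_{\deg(v)}=[0,1,1,\dots,1]$ of arity $\deg(v)$ (value $0$ if none of the incident edges is selected, $1$ otherwise), and consider the Holant sum $\sum_{E'\subseteq E}\prod_{v\in V}\mathrm{OR}_{\deg(v)}(E'_v)$, where $E'_v$ denotes the set of selected edges incident to $v$: this is precisely the number of edge covers of $G$. When $G=(A\uplus B,E)$ is bipartite the sum is a \emph{bipartite} Holant instance with OR signatures on the $A$-side and on the $B$-side; restricting further to $d$-regular bipartite graphs, it becomes the single-signature symmetric Holant problem for the fixed arity-$d$ signature $\mathrm{OR}_d$, which is covered by the Holant dichotomy of Cai, Lu, and Xia \cite{cai2012holographic}.

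It then suffices to check that this signature is not tractable and to invoke the dichotomy, with Valiant's holographic transformations \cite{valiant2008holographic} allowed. Concretely: $\mathrm{OR}_d=[0,1,\dots,1]$ is not degenerate (it is not a tensor power of a unary signature), it is not of product type (its support, consisting of all nonzero assignments, does not decompose as a Cartesian product), it is not affine (that support has size $2^d-1$, not a power of two, hence is not an affine subspace), and it is not vanishing (two such vertices joined by $d$ parallel edges already give a nonzero Holant value); the same holds for all of its holographic images, so the dichotomy places $\bipec$ on the \#P-hard side. I expect the main obstacle to be precisely this bookkeeping: matching $\bipec$ to the exact form of the dichotomy theorem that applies — the bipartite symmetric Holant dichotomy, or the single-signature one after fixing the degree, possibly with a gadget reducing general graphs to bipartite ones — and verifying that $\mathrm{OR}_d$ together with all its holographic transforms avoids every tractable family. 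The combinatorial heart, namely the Holant reformulation, is essentially free, and one may alternatively just invoke \cite{khanna2011queries} directly for the hardness.
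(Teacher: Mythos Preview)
Your plan is sound in outline and shares the Holant framing with the paper, but the route diverges at the hardness step. You propose to place $\mathrm{OR}_d$ directly against a general symmetric Holant dichotomy and check that it (and all its holographic images) avoid every tractable family; you correctly flag this bookkeeping as the main obstacle, and indeed your claim that ``the same holds for all of its holographic images'' is asserted rather than verified.

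The paper sidesteps this entirely. It restricts to $2$--$3$ regular bipartite graphs, so that \bipec{} is exactly the problem $\#[0,1,1]\mid[0,1,1,1]$ in the notation of~\cite{cai2012holographic}. Then it uses a one-line bijection --- the \emph{reversal} $S\mapsto E\setminus S$ --- to show this has the same answer as $\#[1,1,0]\mid[1,1,1,0]$, a specific problem that~\cite{cai2008holographic,cai2012holographic} already single out as \#P-hard (even on planar inputs). So instead of checking dichotomy conditions for $\mathrm{OR}_d$ under arbitrary holographic transformations, the paper reduces by complementation to a concrete named hard problem and cites its hardness directly. This buys a shorter, verification-free argument and the bonus of planarity and $2$--$3$ regularity; your approach, if the dichotomy checks were filled in, would be more self-contained but longer. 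Your fallback of citing~\cite{khanna2011queries} is of course also valid.
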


We can then use this result to show intractability for the simplest forms of disconnected
query graphs
(\UOWP) on the simplest forms of probabilistic instance graphs (\OWP), in the
\emph{labeled} case:
\begin{proposition}
\label{prp:l-conj_of_1WP-1WP}
\phoml{\UOWP}{\OWP} is \#P-hard.
\end{proposition}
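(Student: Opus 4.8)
The plan is a single‑oracle‑call reduction from \bipec{}, which is \#P‑complete by Theorem~\ref{thm:bipec}. Given a bipartite graph $G$ with parts $A,B$ and edge set $E=\{e_1,\dots,e_N\}$, where $e_k=(a_{(k)},b_{(k)})$ with $a_{(k)}\in A$ and $b_{(k)}\in B$, I would build in polynomial time a query $Q\in\UOWP$ and a probabilistic instance $(H,\pi)$ with $H\in\OWP$ so that the possible worlds of $(H,\pi)$ of nonzero probability are in bijection with the subsets $F\subseteq E$, each of probability $2^{-N}$, and so that $Q\leadsto H_F$ holds exactly when $F$ is an edge cover of $G$. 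Then $\Pr(Q\leadsto H)=2^{-N}\cdot\#\{\text{edge covers of }G\}$, so one oracle call followed by a multiplication by $2^{N}$ (a number of at most $N+1$ bits) recovers the number of edge covers; this is a legitimate polynomial‑time reduction.

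\emph{Construction.} I would work over a $3$‑letter signature $\{0,1,\#\}$, the reduction to $|\sigma|=2$ being a routine re‑encoding discussed at the end. Put $d=\lceil\log_2\max(|A|,|B|)\rceil$ and fix injective codings $\mathrm{enc}_A\colon A\to\{0\}\{0,1\}^{d}$ and $\mathrm{enc}_B\colon B\to\{1\}\{0,1\}^{d}$, so that $\mathrm{enc}_A$‑ and $\mathrm{enc}_B$‑codewords have the common length $d+1$, never contain $\#$, and have disjoint ranges (distinguished by their leading bit). Let $H$ be the one‑way path whose edge‑label sequence is the concatenation, for $k=1,\dots,N$, of the \emph{slots} $\mathrm{enc}_A(a_{(k)})\cdot\#\cdot\mathrm{enc}_B(b_{(k)})$; write $\#_k$ for the single $\#$‑labeled edge of slot $k$, and set $\pi(\#_k)=\tfrac12$ for all $k$ and $\pi(e)=1$ for every other edge of $H$. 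Let $Q\defeq\bigsqcup_{v\in A\cup B}P_v$, where $P_a$ is the one‑way path with edge‑label sequence $\mathrm{enc}_A(a)\cdot\#$ for $a\in A$, and $P_b$ is the one‑way path with edge‑label sequence $\#\cdot\mathrm{enc}_B(b)$ for $b\in B$. Both graphs have size $O(N\log|A\cup B|)$, $Q\in\UOWP$, and $H\in\OWP$.

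\emph{Correctness, and the crux.} A possible world of nonzero probability keeps every probability‑$1$ edge and some subset of the switches; write $H_F$ for the one keeping exactly $\{\#_k:e_k\in F\}$, of probability $2^{-N}$. Deleting edges from a one‑way path leaves the disjoint union of its maximal surviving sub‑paths (plus isolated vertices), so the blocks of $H_F$ are the factors of $H$ cut at the deleted switches, and a connected one‑way path maps homomorphically into $H_F$ iff its label word is a factor of the label word of some block. Since $\#$ labels only the switch edges, and since the $d+1$ edges of $H$ immediately before $\#_k$ (resp.\ immediately after) are exactly $\mathrm{enc}_A(a_{(k)})$ (resp.\ $\mathrm{enc}_B(b_{(k)})$) and are probability‑$1$ — hence lie in $\#_k$'s block whenever $\#_k$ survives — the word $\mathrm{enc}_A(a)\cdot\#$ occurs in some block of $H_F$ iff some surviving switch $\#_k$ satisfies $a_{(k)}=a$ (the disjoint ranges and the absence of $\#$ inside codewords rule out all other, possibly unaligned, occurrences). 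Thus $P_a\leadsto H_F$ iff $a$ is covered by $F$, and symmetrically $P_b\leadsto H_F$ iff $b$ is covered by $F$; as $Q$ is a disjoint union, $Q\leadsto H_F$ iff every vertex of $G$ is covered by $F$, i.e.\ iff $F$ is an edge cover. The delicate point — which I expect to demand the most care — is forcing a single coin flip ($\#_k$) to simultaneously govern the detectability of \emph{both} endpoints of $e_k$; this is exactly what the choice of $P_{a_{(k)}}$ and $P_{b_{(k)}}$ achieves, by making $\#_k$ the last edge of one pattern and the first edge of the other, so that deleting $\#_k$ truncates the unique relevant occurrence of each. Finally, for $|\sigma|=2$ I would replace $0,1,\#$ by three suitably chosen distinct binary codewords of a common fixed length, picked so that no spurious (possibly unaligned) occurrence of a pattern is created; this blows up all sizes by a constant factor and preserves every argument above.
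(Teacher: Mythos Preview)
Your proposal is correct and follows essentially the same approach as the paper: both reduce from \bipec{}, build a \OWP{} instance by concatenating one ``slot'' per bipartite edge with a single probability-$\tfrac12$ edge in each slot, and take as query the disjoint union of one \OWP{} component per bipartite vertex that can match only at a surviving slot incident to that vertex. The only differences are cosmetic encoding choices---the paper uses unary vertex indices over four labels $L,R,V,C$ with explicit $C$-separators, whereas you use fixed-length binary codewords over $\{0,1,\#\}$ and let the single $\#$ per slot double as anchor and separator.
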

\begin{inlineproof}
	We reduce from \bipec.
	Let $\Gamma = (X \sqcup Y, E)$ be an input to \bipec, i.e., a
        bipartite undirected graph with parts $X$ and~$Y$;
        we write $X = (x_1,\ldots,x_{n_{\mathrm{l}}})$, $Y =
        (y_1,\ldots,y_{n_{\mathrm{r}}})$, $E
        = (e_1,\ldots,e_m)$, and for all $1 \leq i \leq m$ we write $e_i =
        (x_{l_i}, y_{r_i})$, with $1 \leq l_i \leq n_{\mathrm{l}}$ and $1 \leq
        r_i \leq n_{\mathrm{r}}$.
  \begin{figure*}[t]
\begin{minipage}[c]{0.35\linewidth}  
\begin{tikzpicture}[yscale=0.75,xscale=0.65]
	\node[fill = white] at (3,3.5) {$\Gamma=(X \sqcup Y, E)$:};

	\node[fill = white, name=x1] at (4,5) {$x_1$ };
	\node[fill = white, name=x2] at (4,2) {$x_2$ };

	\node[fill = white, name=y1] at (8,5) {$y_1$ };
	\node[fill = white, name=y2] at (8,3.5) {$y_2$ };
	\node[fill = white, name=y3] at (8,2) {$y_3$ };

	\draw[line width = 1pt] (x1) -- (y1) node[above, pos=.6] {$e_1$};
	\draw[line width = 1pt] (x1) -- (y2) node[above=-1, pos=0.45] {$e_2$};
	\draw[line width = 1pt] (x1) -- (y3) node[below=1, pos=0.7] {$e_3$};

	\draw[line width = 1pt] (x2) -- (y1) node[below=1, pos=0.3] {$e_4$};

\end{tikzpicture}
\end{minipage}
\hfill
\begin{minipage}[c]{0.05\linewidth}  
\begin{tikzpicture}[yscale=0.85,xscale=0.65]
	\node[fill = white] at (13,3.5) {$G$:};
\end{tikzpicture}
\end{minipage}
\begin{minipage}[c]{0.28\linewidth}  
\begin{tikzpicture}[yscale=0.85,xscale=0.65]
	\draw[line width = 1pt, ->]  (14,5.5) -- (15,5.5) node[midway, above] {$C$};
	\draw[line width = 1pt, ->]  (15,5.5) -- (16,5.5) node[midway, above] {$L$};
	\draw[line width = 1pt, ->]  (16,5.5) -- (17,5.5) node[midway, above] {$V$};
	\node[fill = white] at (18,5.5) {$(x_1)$};

	\draw[line width = 1pt, ->]  (14,4.5) -- (15,4.5) node[midway, above] {$C$};
	\draw[line width = 1pt, ->]  (15,4.5) -- (16,4.5) node[midway, above] {$L$};
	\draw[line width = 1pt, ->]  (16,4.5) -- (17,4.5) node[midway, above] {$L$};
	\draw[line width = 1pt, ->]  (17,4.5) -- (18,4.5) node[midway, above] {$V$};
	\node[fill = white] at (19,4.5) {$(x_2)$};
\end{tikzpicture}
\end{minipage}
\begin{minipage}[c]{0.28\linewidth}  
\begin{tikzpicture}[yscale=0.85,xscale=0.65]

	\draw[line width = 1pt, ->]  (14,3.5) -- (15,3.5) node[midway, above] {$V$};
	\draw[line width = 1pt, ->]  (15,3.5) -- (16,3.5) node[midway, above] {$R$};
	\draw[line width = 1pt, ->]  (16,3.5) -- (17,3.5) node[midway, above] {$C$};
	\node[fill = white] at (18,3.5) {$(y_1)$};

	\draw[line width = 1pt, ->]  (14,2.5) -- (15,2.5) node[midway, above] {$V$};
	\draw[line width = 1pt, ->]  (15,2.5) -- (16,2.5) node[midway, above] {$R$};
	\draw[line width = 1pt, ->]  (16,2.5) -- (17,2.5) node[midway, above] {$R$};
	\draw[line width = 1pt, ->]  (17,2.5) -- (18,2.5) node[midway, above] {$C$};
	\node[fill = white] at (19,2.5) {$(y_2)$};

	\draw[line width = 1pt, ->]  (14,1.5) -- (15,1.5) node[midway, above] {$V$};
	\draw[line width = 1pt, ->]  (15,1.5) -- (16,1.5) node[midway, above] {$R$};
	\draw[line width = 1pt, ->]  (16,1.5) -- (17,1.5) node[midway, above] {$R$};
	\draw[line width = 1pt, ->]  (17,1.5) -- (18,1.5) node[midway, above] {$R$};
	\draw[line width = 1pt, ->]  (18,1.5) -- (19,1.5) node[midway, above] {$C$};
	\node[fill = white] at (20,1.5) {$(y_3)$};
\end{tikzpicture}
\end{minipage}
\bigskip

\begin{tikzpicture}[yscale=0.85,xscale=0.65]
	\node[fill = white] at (-1,0) {$H$:};

	\draw[line width = 1pt, ->]  (0,0) -- (1,0) node[midway, above] {$C$};
	\draw[line width = 1pt, ->]  (1,0) -- (2,0) node[midway, above] {$L$};
	\draw[dashed, line width = 1pt, ->]  (2,0) -- (3,0) node[midway, above] {$V$};
	\node[fill = white] at (2.5,-0.5) {$(e_1)$ };
	\draw[line width = 1pt, ->]  (3,0) -- (4,0) node[midway, above] {$R$};
	\draw[line width = 1pt, ->]  (4,0) -- (5,0) node[midway, above] {$C$};
	\draw[line width = 1pt, ->]  (5,0) -- (6,0) node[midway, above] {$L$};
	\draw[dashed, line width = 1pt, ->]  (6,0) -- (7,0) node[midway, above] {$V$};
	\node[fill = white] at (6.5,-0.5) {$(e_2)$ };

	\draw[line width = 1pt, ->]  (7,0) -- (8,0) node[midway, above] {$R$};
	\draw[line width = 1pt, ->]  (8,0) -- (9,0) node[midway, above] {$R$};
	\draw[line width = 1pt, ->]  (9,0) -- (10,0) node[midway, above] {$C$};
	\draw[line width = 1pt, ->]  (10,0) -- (11,0) node[midway, above] {$L$};
	\draw[dashed, line width = 1pt, ->]  (11,0) -- (12,0) node[midway, above] {$V$};
	\node[fill = white] at (11.5,-0.5) {$(e_3)$ };

	\draw[line width = 1pt, ->]  (12,0) -- (13,0) node[midway, above] {$R$};
	\draw[line width = 1pt, ->]  (13,0) -- (14,0) node[midway, above] {$R$};
	\draw[line width = 1pt, ->]  (14,0) -- (15,0) node[midway, above] {$R$};
	\draw[line width = 1pt, ->]  (15,0) -- (16,0) node[midway, above] {$C$};
	\draw[line width = 1pt, ->]  (16,0) -- (17,0) node[midway, above] {$L$};
	\draw[line width = 1pt, ->]  (17,0) -- (18,0) node[midway, above] {$L$};
	\draw[dashed, line width = 1pt, ->]  (18,0) -- (19,0) node[midway, above] {$V$};
	\node[fill = white] at (18.5,-0.5) {$(e_4)$ };

	\draw[line width = 1pt, ->]  (19,0) -- (20,0) node[midway, above] {$R$};
	\draw[line width = 1pt, ->]  (20,0) -- (21,0) node[midway, above] {$C$};
\end{tikzpicture}
\centering
\caption{Illustration of the proof of Proposition~\ref{prp:l-conj_of_1WP-1WP},
for the bipartite graph $\Gamma$. Dashed edges have
probability $\frac{1}{2}$. We show (in parentheses) the edge of $\Gamma$ coded by each $V$-labeled
edge in the instance graph $H$, and the vertex of $\Gamma$ coded by each \OWP{} component of the query
graph $G$.}
\label{fig:reduction_l-conj_of_1WP-1WP}
\end{figure*}
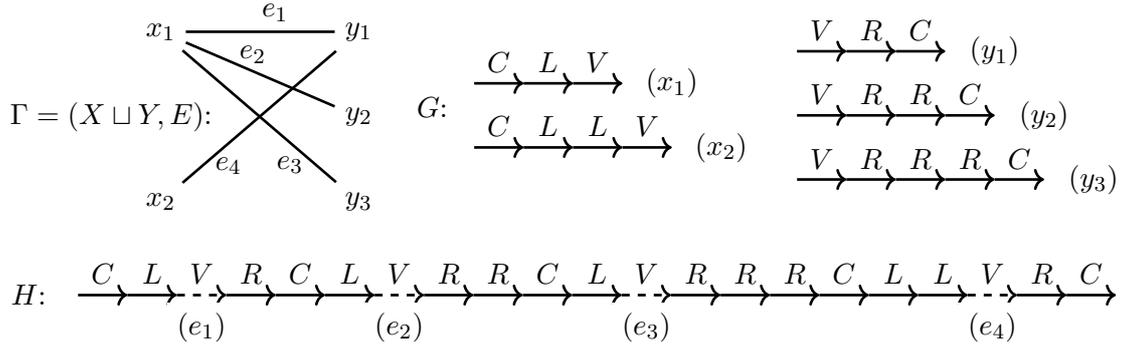

        We first construct in PTIME the \OWP{} probabilistic graph $(H,\pi)$:
        see Figure~\ref{fig:reduction_l-conj_of_1WP-1WP} for an
        illustration of the construction. Specifically,
        for $1 \leq j \leq m$, we construct the following \OWP{}:
        \[H_{e_j} \defeq {(\xrightarrow{L})^{l_j}} \xrightarrow{V}
        (\xrightarrow{R})^{r_j}.\]
        The graph $H$ is then defined as: \[\xrightarrow{C} H_{e_1}
        \xrightarrow{C} H_{e_2} \xrightarrow{C} \cdots \xrightarrow{C} H_{e_m} \xrightarrow{C}.\]
	We define $\pi$ as follows: edges labeled by $V$ have probability
        $\frac{1}{2}$ (intuitively coding whether an edge is part of the
        candidate cover), all others have probability~$1$.

	We then construct the query graph $G \in$ \UOWP, coding the edge covering constraints.
        For every $1 \leq i \leq n_{\mathrm{l}}$, the graph $G$ contains the \OWP{}  component
        $\xrightarrow{C} (\xrightarrow{L})^i \xrightarrow{V}$, and
        for every $1 \leq i \leq n_{\mathrm{r}}$, the graph $G$ contains the \OWP{}  component $\xrightarrow{V} (\xrightarrow{R})^i \xrightarrow{C}$.

        It is clear that $H$ is in \OWP{}, $G$ is in
        \UOWP{} and that both can be constructed in PTIME from $\Gamma$.
        We now show that $\Pr(G\leadsto H)$ is exactly the
        number of edge covers of $\Gamma$ divided by $2^m$, so that the
        computation of the latter reduces in PTIME to the 
	computation of the former, concluding the proof. 

	To see why, we define a bijection between the subsets of edges of
        $\Gamma$, seen as valuations $\nu: E \to \{0,1\}$, to the possible
        worlds $H'$ 
	of $H$ of non-zero probability. We do so in the expected way: keep
        the one $V$-edge $\xrightarrow{V}$ of $H_{e_i}$ iff $\nu(e_i)=1$.
        We now show that there is a homomorphism
	from $G$ to $H'$ if and only if $\nu$ is an edge cover of $\Gamma$.
        As the
        number of $H'$'s such that there is a homomorphism from $G$ to $H'$ is exactly
        $\Pr(G\hom H)\times 2^m$, this will allow us to conclude.
        
        Indeed, if there is a homomorphism $h$ from $G$ to $H'$, then,
        considering the \OWP{} component in $G$ that codes the
        constraint on $x_i$ (resp., on $y_i$),
	its image must be of the form $\xrightarrow{C}
        (\xrightarrow{L})^i \xrightarrow{V}$ (resp., $\xrightarrow{V} (\xrightarrow{R})^i \xrightarrow{C}$),
	but then by construction of $H$
        the $V$-fact must correspond to an edge $e$ such that $x_i$
        (resp., $y_i$) is adjacent to $e$,
        so that we have $\nu(e)=1$ and so $x_i$ (resp., $y_i$) is covered.
        As this is true for each \OWP{} component, all the vertices are covered and $\nu$ is indeed an edge cover of $\Gamma$.
	
	Conversely, suppose that $\nu$ is an edge cover of $\Gamma$, then for
        every vertex $x_i$ (resp., $y_i$) we know that there exists $1 \leq j \leq m$ such that $\nu(e_j)=1$
	and $l_j = i$ (resp., $r_j = i$),
        and we can use the $V$-fact corresponding to $e_j$ and the
        surrounding facts to build the homomorphism as above from each
        component of $G$ to~$H'$.
\end{inlineproof}

The proof of Proposition~\ref{prp:l-conj_of_1WP-1WP} crucially requires
multiple labels in the signature. Indeed, it is easy to see that, in the unlabeled
setting, a query graph in \UOWP{} (or even in \UDWT) is equivalent to
the longest path within the graph, and we will show further
(Proposition~\ref{prp:u-DWT-PT})
that \phomu{\OWP}{\OWP} (indeed, even \phomu{\UDWT}{\PT}) is PTIME.

\subsection{Unlabeled Disconnected Queries}
\label{sec:disconnected-CQs-unlabeled}
In light of this intractability result, let us now consider the unlabeled setting. We show in
Table~\ref{tab:unlabeled-conj_of} where the tractability frontier
lies. First, introducing two-wayness in both query and instance graphs
is enough to obtain an analogue of the intractability of
Proposition~\ref{prp:l-conj_of_1WP-1WP}:

\begin{table}[t]
  \caption{Tractability of \phomU{} for disconnected queries
  (Section~\ref{sec:disconnected-CQs-unlabeled}). Results
 also hold when instances are unions of the indicated classes.}

  \centering
\begin{minipage}{.63\linewidth}
\begin{tabular}{c|ccccc}
  $\downarrow$$G$\qquad $H$$\rightarrow$ & \OWP & \TWP & \DWT & \PT & \Connected \\
  \hline
    \UOWP & & &  & \vlineright{}&
    \hardcell\ref{prp:unlabeled_1wp_on_connected}\\ 
    \hhline{~|~|-|~|-|>{\hardline}-}\regline
    \UTWP & \vlineright{} &
    \vlineright{\hardcell\ref{prp:u-conj_of_2WP-2WP}}&
    \vlineright{}&
    \hardcell & \hardcell \\
    \hhline{~|~|-|~|-|>{\hardline}-}\regline
    \UDWT & & & & \vlineright{\ref{prp:u-DWT-PT}} & \hardcell \\
    \hhline{~|~|-|~|-|>{\hardline}-}\regline
    \UPT & \vlineright{}&
    \vlineright{\hardcell}&
    \vlineright{}&
    \hardcell & \hardcell \\
    \All & \vlineright{} & \vlineright{\hardcell} &
    \vlineright{\ref{prp:u-all-DWT}}&
    \hardcell & \hardcell\\
    \hhline{~|->{\hardline}->{\regline}|-|>{\hardline}--}
\end{tabular}
\label{tab:unlabeled-conj_of}

\tabexplanation
\end{minipage}
\end{table}

\begin{proposition}
\label{prp:u-conj_of_2WP-2WP}
\phomu{\UTWP}{\TWP} is \#P-hard.
\end{proposition}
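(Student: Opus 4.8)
The plan is to reduce from $\phoml{\UOWP}{\OWP}$, which is \#P-hard by Proposition~\ref{prp:l-conj_of_1WP-1WP}, by \emph{simulating} the finitely many labels of the fixed alphabet $\sigma$ used there (the proof of Proposition~\ref{prp:l-conj_of_1WP-1WP} uses the four-letter alphabet $C,L,V,R$) with unlabeled two-way patterns. Concretely, I would fix for each label $a\in\sigma$ an unlabeled two-way-path \emph{gadget} $\gamma_a$ and encode any labeled one-way path $P=a_1\xrightarrow{c_1}\cdots\xrightarrow{c_m}a_{m+1}$ as the unlabeled two-way path $\enc(P)$ obtained by concatenating $\gamma_{c_1},\dots,\gamma_{c_m}$ (identifying the last vertex of $\gamma_{c_i}$ with the first vertex of $\gamma_{c_{i+1}}$), extending $\enc$ to disjoint unions componentwise. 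Applying $\enc$ to the instance $H\in\OWP$ and the query $G\in\UOWP$ built in the proof of Proposition~\ref{prp:l-conj_of_1WP-1WP} then gives $\enc(H)\in\TWP$ and $\enc(G)\in\UTWP$; for the probabilities, I keep exactly one distinguished edge inside each copy of $\gamma_V$ probabilistic with value $\tfrac12$ and make every other edge deterministic, so that, as in the original proof, the non-zero-probability worlds are in bijection with valuations $\nu\colon E_\Gamma\to\{0,1\}$, each of probability $2^{-m}$.

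The gadgets must be chosen so that a homomorphism between encoded graphs is forced to respect the gadget decomposition and therefore projects back down to a homomorphism between the original labeled graphs. I would take the ``unit'' gadgets for $L$ and $R$ to be single edges of a fixed orientation (legitimate because in the instance of Proposition~\ref{prp:l-conj_of_1WP-1WP} an $L$-run and an $R$-run are always separated by a $V$), the separator gadget $\gamma_C$ to be a run of $B$ edges of the opposite orientation for a suitably large \emph{constant} $B$, and $\gamma_V$ to be a short direction-alternating pattern carrying the one probabilistic edge, engineered so that (a)~$\gamma_V$ cannot be confused with a fragment of $\gamma_C$, of a unit run, or with a junction between two gadgets, and (b)~when the probabilistic edge of a copy of $\gamma_V$ is deleted, the two leftover fragments of that gadget are too short and of the wrong shape to carry any match, so that deleting it has the same effect, as far as the existence of homomorphisms from $\enc(G)$ is concerned, as deleting the corresponding $V$-edge of $H$.

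The heart of the proof is the \emph{faithfulness} claim: for $\enc(G)$ and each possible world of $\enc(H)$, the encoded query has a homomorphism into that world iff $G$ has one into the corresponding world of $H$. One direction is routine: a homomorphism $P\leadsto Q$ maps each labeled edge to an edge of the same label, hence each gadget of $\enc(P)$ identically onto the matching gadget of $\enc(Q)$ and junctions to junctions. The delicate direction is that a homomorphism $\enc(P)\leadsto\enc(Q)$ cannot ``fold''. The key observation to exploit is that a homomorphism between two-way paths traces a walk in the target that copies, step by step, the target's edge-orientations read in the direction of traversal; since the separators $\gamma_C$ are long runs of one fixed orientation flanked by runs of the opposite orientation, such a walk has no room to reverse inside the image of a separator or a unit run, so it must traverse the gadgets monotonically, mapping separators to separators, markers to markers, and junctions to junctions, and reading off where the junctions go yields the desired homomorphism $P\leadsto Q$. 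Granting faithfulness, for every valuation $\nu$ we get $\enc(G)\leadsto\enc(H)_\nu$ iff $G\leadsto H_\nu$, hence $\Pr(\enc(G)\leadsto\enc(H))=\Pr(G\leadsto H)$, and by Proposition~\ref{prp:l-conj_of_1WP-1WP} together with Theorem~\ref{thm:bipec} this value lets one recover the number of edge covers of the bipartite graph~$\Gamma$; thus $\phomu{\UTWP}{\TWP}$ is \#P-hard. The main obstacle is exactly the backward direction of faithfulness together with point~(b) above: unlike one-way paths, two-way paths admit direction-changing homomorphisms, so one must design the separator and marker gadgets---and carefully verify---so that no folding can produce a spurious match and so that removing the single probabilistic edge per gadget behaves exactly like removing a $V$-edge in the original instance.
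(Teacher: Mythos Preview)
Your plan is essentially the paper's own proof: reduce from the labeled \UOWP/\OWP construction of Proposition~\ref{prp:l-conj_of_1WP-1WP} by replacing each label with a fixed unlabeled two-way gadget, keep one probabilistic edge per $V$-gadget, and argue that the long-run structure of the gadgets rules out folding so that homomorphisms on the encoded side correspond exactly to labeled homomorphisms. The paper simply commits to concrete gadgets where you leave them schematic: it encodes $L$ and $R$ each as $\rightarrow\rightarrow\leftarrow$, $C$ as $\leftarrow\leftarrow\leftarrow$, and $V$ as $\rightarrow\rightarrow\rightarrow\rightarrow\rightarrow\leftarrow$ with the first of those six edges probabilistic; the faithfulness argument is then the one-line observation that a run of five consecutive $\rightarrow$'s occurs in the encoded instance only as the prefix of a $V$-gadget, which anchors every query component and forces the rest of the match. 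This is precisely the ``long monochrome run'' mechanism you invoke, so your outline is correct and only the explicit choice of gadgets and the short anchoring check remain to be written down.
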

\begin{inlineproof}
	We reduce, again, from the \#P-hard problem \bipec.
	The idea of the reduction is similar to that used in the proof of Proposition~\ref{prp:l-conj_of_1WP-1WP}, but we face the additional difficulty of not being allowed to use labels.
	Fortunately, we can use two-wayness to simulate them.

        Let $\Gamma=(X\sqcup Y,E)$ be an input of \bipec.
        Consider the reduction from $\Gamma$ used in the proof of
        Proposition~\ref{prp:l-conj_of_1WP-1WP} and the
        \OWP{}  probabilistic graph $(H,\pi)$ and the 
        \UOWP{} query graph $G$ that were constructed.
	We construct from $H$ and $G$ the unlabeled probabilistic graph~$H'$ and
        unlabeled \UTWP{} query graph~$G'$ as follows:
        \smallskip
	\begin{itemize}
		\item replace each $L$- or $R$-labeled edge $a
                  \xrightarrow{L} b$ or $a \xrightarrow{R} b$
			in $H$ and $G$ by $3$ edges $a \rightarrow \rightarrow \leftarrow b$;
		\item replace each $C$-labeled edge $a \xrightarrow{C} b$ of $H$ and $G$ by $3$ edges $a \leftarrow \leftarrow \leftarrow b$;
		\item replace each $V$-labeled edge $a \xrightarrow{V} b$
                  of $H$ and $G$ by $6$ edges $a \rightarrow \rightarrow
                  \rightarrow \rightarrow \rightarrow \leftarrow b$.
	\end{itemize}
        \smallskip
	All edges of $H'$ have probability $1$, except the first edge
        of each sequence of 6 edges that replaced a \mbox{$V$-labeled
        edge},
        which has probability $\frac{1}{2}$.

        Consider a \OWP{}  component of $G$ that codes the constraint on a
        vertex from $Y$, e.g $\xrightarrow{V} (\xrightarrow{R})^i
        \xrightarrow{C}$, which was rewritten in~$G'$ into 
	$\rightarrow \rightarrow \rightarrow \rightarrow \rightarrow \leftarrow 
        {(\rightarrow \rightarrow \leftarrow)^{i}}
        \leftarrow \leftarrow \leftarrow$. A
        homomorphism from this component into a possible world~$J'$ of~$H'$ must
        actually map to a rewriting of a $\xrightarrow{V}
        (\xrightarrow{R})^i \xrightarrow{C}$ sequence in $H'$: indeed,
	the key observation is that the first $5$ $\rightarrow$ edges can only
        be matched to $5$ consecutive $\rightarrow$ in $J'$, which only
        exist as the first 5 edges of a sequence of 6 edges that replaced
        a $V$-labeled fact in $H$. There is no choice left to match the
        subsequent edges without failing. A similar observation holds for 
        components coding the constraints on vertices from~$X$ ($\xrightarrow{C}
        (\xrightarrow{L})^i \xrightarrow{V}$). Hence, we can show correctness of
        the reduction using the same argument as before.
\end{inlineproof}

Allowing two-wayness in both the query and the instance graphs thus allows
us to simulate labels, so that \phomU{} is intractable.
We will study in Section~\ref{sec:u-connected_CQs} what happens
for query graph classes without
two-wayness (i.e., \OWP{}, \DWT{}, and unions thereof); so let us now 
consider the case of instance graph classes where two-wayness is
forbidden, i.e., is in \UDWT. 
As we will show, \phomU{} of
\emph{arbitrary} query graphs on such \UDWT{} instance graphs is tractable.
To this end, we need to introduce \emph{level mappings} of acyclic directed
graphs (DAGs):

\begin{definition}
  \label{def:levelmap}
A \emph{level mapping} of a DAG~$G$ is
a mapping $\mu$ from the vertices of~$G$ to~$\ZZ$ such that for each
directed edge $u \rightarrow v$ of~$G$ we have $\mu(v) = \mu(u) - 1$.
  We call $G$ a \emph{graded DAG} if it has a level mapping.
\end{definition}

An example of graded DAG together with a level mapping is given in Figure~\ref{fig:u-all-DWT}.
It is easy to see (and shown in Proposition~1
of~\cite{odagiri2014greatest})
that a DAG $G$ is graded 
iff there are no two vertices $u$, $v$ and two directed paths $\chi$,
$\chi'$ in $G$ from $u$ to $v$ such that $\chi$ and $\chi'$ have
different lengths (in the terminology of~\cite{odagiri2014greatest},
$G$ does not have a \emph{jumping edge}).
Graded DAGs are related to the classical notion of graded ordered
set~\cite{schroder2016graded}, and the level mapping function 
has been called in the literature a \emph{depth
function}~\cite{odagiri2014greatest}, a \emph{grading
function}~\cite{schroder2016graded}, a \emph{set of
levels}~\cite{schroder2016graded}, or a \emph{rank
function}~\cite{stanley1997enumerative}.

To obtain such a level mapping, we can proceed by picking one vertex
in each connected component of~$G$,
mapping each of these vertices to level~0, and then
exploring $G$ by a breadth-first traversal and assigning the level of each
vertex according to the level of the vertex used to reach it, visiting
all edges and defining the image of each vertex. It is clear that this
process yields a level mapping of~$G$ unless it tries to assign two
different levels to the same vertex $v$, which cannot happen if there is no
jumping edge \cite[Proposition~1]{odagiri2014greatest}.

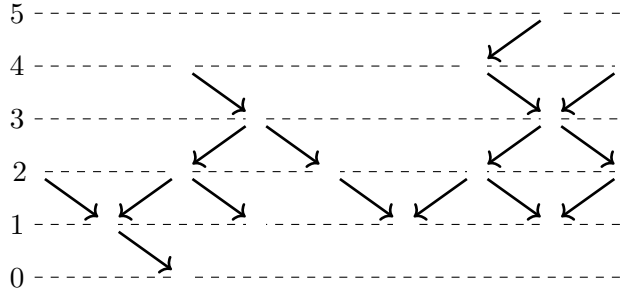
\begin{figure}
  \centering
  \begin{tikzpicture}[yscale=.7,xscale=.97]
        \node[fill = white, name = 02] at (0,2) {};
        \node[fill = white, name = 22] at (2,2) {};
	\node[fill = white, name = 33] at (3,3) {};
	\node[fill = white, name = 73] at (7,3) {};

	\node[fill = white, name = 2] at (-0.2,2) {$2$};

	\draw [dashed] (0,0) node[left] {$0$} -- (8,0);
	\draw [dashed] (0,1) node[left] {$1$} -- (8,1);
	\draw [dashed] (02)  -- (22);
	\draw [dashed] (22)  -- (8,2);
	\draw [dashed] (0,3) node[left] {$3$} -- (33);
	\draw [dashed] (33)  -- (73);
	\draw [dashed] (73) -- (8,3);

	\draw [dashed] (0,4) node[left] {$4$} -- (8,4);
	\draw [dashed] (0,5) node[left] {$5$} -- (8,5);

	\node[fill = white, name = 11] at (1,1) { };
	\node[fill = white, name = 20] at (2,0) { };
	\node[fill = white, name = 31] at (3,1) { };
	\node[fill = white, name = u1] at (2,4) { };
	\node[fill = white, name = u2] at (4,2) { };
	\node[fill = white, name = 51] at (5,1) { };
	\node[fill = white, name = 62] at (6,2) { };
	\node[fill = white, name = 71] at (7,1) { };
	\node[fill = white, name = v2] at (8,2) { };
	\node[fill = white, name = 64] at (6,4) { };
	\node[fill = white, name = v1] at (8,4) { };
	\node[fill = white, name = 75] at (7,5) { };

	\draw[line width = 1pt, ->]  (02) -- (11);
	\draw[line width = 1pt, ->]  (11) -- (20);
	\draw[line width = 1pt, ->]  (22) -- (11);
	\draw[line width = 1pt, ->]  (22) -- (31);
	\draw[line width = 1pt, ->]  (33) -- (22);
	\draw[line width = 1pt, ->]  (u1) -- (33);
	\draw[line width = 1pt, ->]  (33) -- (u2);
	\draw[line width = 1pt, ->]  (u2) -- (51);
	\draw[line width = 1pt, ->]  (62) -- (51);
	\draw[line width = 1pt, ->]  (62) -- (71);
	\draw[line width = 1pt, ->]  (v2) -- (71);
	\draw[line width = 1pt, ->]  (73) -- (62);
	\draw[line width = 1pt, ->]  (73) -- (v2);
	\draw[line width = 1pt, ->]  (64) -- (73);
	\draw[line width = 1pt, ->]  (v1) -- (73);
	\draw[line width = 1pt, ->]  (75) -- (64);
\end{tikzpicture}
  \caption{A DAG with 
  a level mapping (dashed lines), see
  Definition~\ref{def:levelmap}.}
\label{fig:u-all-DWT}
\end{figure}

  We will now use the notion of graded DAG to show:
\begin{propositionrep}
	\label{prp:u-all-DWT}
	\phomu{\All}{\UDWT} is PTIME.
\end{propositionrep}

\begin{proofsketch}
	We only give the idea when the query graph is connected and the graph
        instance $H$ is a \DWT{} (see Appendix for full proof).
        As we pointed out already, if the query graph $G$ is not a graded DAG,
        then it has a cycle or
        a pair of vertices joined by two directed paths of different lengths:
        then, from the structure of the \DWT instance graph, this clearly implies that $\Pr(G
        \leadsto H)=0$. So it suffices to study the case when $G$ is a graded
        DAG.

        As we explained earlier, we can then compute in PTIME a level mapping
        $\mu$ of~$G$.
        It is clear that, as $G$ is connected, the level
        mapping~$\mu$ is uniquely defined up to an additive constant. Hence, we shift $\mu$ so
        that the smallest value of its image is~$0$, and we then call the \emph{difference of
        levels} of~$G$ the largest value~$m$ in the image of~$\mu$.
      Note that
      $m$ is \emph{not} the maximal length
      from a root of $G$ to a leaf of $G$
      (see, e.g., Figure~\ref{fig:u-all-DWT}).
        We then claim that, on any possible world $H'$ of
        the \DWT{} instance graph
        $H$, the query graph $G$ is
        in fact equivalent to the \OWP query graph $\rightarrow^m$ of length~$m$. This allows us
        to conclude using Proposition~\ref{prp:u-DWT-PT}.

	One direction is easy to observe, because 
	$\mu$
        directly gives 
        a homomorphism from~$G$
        to~$\rightarrow^m$. For the converse, suppose that a homomorphism $h$ from $G$ to
	$H'$ exists. Because $G$ is connected and $H'$ is in \UDWT, the image of $h$ is actually a \DWT, call it $T$. Now it is easy to see that the image of a node that has level $m-i$ in
	$G$ has depth $i$ in $T$, so that $T$ (and so $H'$) contains the \OWP $\rightarrow^m$.
\end{proofsketch}

\begin{proof}
  Let $G$ be an arbitrary unlabeled graph and $(H,\pi)$ a probabilistic
  graph with $H\in \UDWT$. We
        observe that if $G$ contains a directed cycle, then it cannot have a
        homomorphism to a subgraph of~$H$ (which is necessarily acyclic), so
        $\Pr(G\hom H)=0$. Hence, it suffices to study the case
        where the query graph $G$ is a DAG.
        
        Likewise, if there are two vertices $u, v$ of~$G$ and directed paths
        $\chi, \chi'$ in $G$ from $u$ to~$v$ such that $\chi$ and~$\chi'$ have
        different lengths, then again $G$ cannot have a homomorphism to a
        subgraph of~$H$:
        indeed, any subgraph of $H$ is a directed forest and there is at most
        one directed path between each pair of nodes. So we can assume without
        loss of generality that there is no such pattern in~$G$, and $G$
        is therefore graded.

 Letting $\mu$ be a level mapping of $G$, we call the
 \emph{difference of levels} of $\mu$ the difference
 between the largest and smallest value of its image; the
 \emph{difference of levels} of~$G$ itself is the minimum
 difference of levels of a
 level mapping of~$G$. As the level mappings of~$G$ only differ in the 
 constant value that they add to all vertices of each connected
 component, the difference of levels can clearly be computed in PTIME by shifting each connected
 component so that its minimal level is zero, and computing the
 difference; we call the result of the shifting the \emph{minimal level
 mapping} of~$G$.

        \medskip

        Letting $m$ be the difference of levels of~$G$, we now make the
        following claim: \emph{in any subgraph $H'$ of~$H$, there is a
        homomorphism from $G$ to $H'$ if and only if $H'$ has a directed path of
        length $m$}.

	This claim implies the result. Indeed, we can first check in PTIME if
        $G$ has no cycles and has no pairs of paths of different lengths between
        two endpoints, and return 0 if the conditions are violated. We can then
        compute in PTIME the difference of levels $m$ of $G$ using the
        observations above. 
	Now, on any subgraph of~$H$, the query $G$ is equivalent to
        the \OWP{} 
        graph $\rightarrow^m$, so our result follows from
        Proposition~\ref{prp:u-DWT-PT}.

        \medskip

        All that remains is to prove the claim.
        We first note that it suffices
        to show the claim under the assumption that $G$ is connected. Indeed, if
        the claim is true for all connected $G$, then the claim is implied for
        arbitrary $G$ by considering each of its connected components, applying
        the claim, and observing that $G$ has a suitable homomorphism to $H'$
        iff each one of its connected components does, i.e., iff $H'$ has a
        directed path whose length is the maximal difference of levels of a
        connected component of~$G$, and this is precisely the difference of
        levels $m$ of~$G$.  Hence, we now prove the claim for connected $G$.

	We start with the backwards direction of the claim.
        It is easily seen that there is a
        homomorphism $h'$ from $G$ to the \OWP{} graph~$\rightarrow^m$. Indeed, we define $h'$
        according to the minimal level mapping $\mu$ of~$G$: we set $h'$ to map
        all the vertices whose level is $i$ to the $i$-th vertex
        of~$\rightarrow^m$. From the
        existence of~$h'$, we know that, whenever there is a homomorphism~$h$
        from $\rightarrow^m$ to~$H'$, then $h \circ h'$ is a homomorphism from
        $G$ to~$H'$, which shows the backwards implication.

	For the forward direction of the claim, suppose that there exists a homomorphism $h$ from $G$ to $H'$, and let
        $m$ be the difference of levels of $G$.
	Because $G$ is connected and $H'$ is in \UDWT, the image of $h$ is actually a \DWT, call it $T$. Now it is easy to see that the image of a node that has level $m-i$ in
	$G$ has depth $i$ in $T$, so that $T$ (and so $H'$) contains the \OWP $\rightarrow^m$.
	This finishes the proof of the converse and thus the proof of Proposition~\ref{prp:u-all-DWT}.
\end{proof}

\subsection{Disconnected Instances}
\label{sec:disconnected-instances}
We conclude our study of the disconnected case with 
the case of disconnected \emph{instance graphs}, which we show to be less
interesting than the disconnected \emph{query graphs} that we studied so far.
Specifically, when the query is \emph{connected}, \phom{}
on arbitrary instances can reduce in PTIME to
\phom{} of
the same queries on a corresponding class of connected instances:

\begin{lemma}
\label{lem:disconnected-instances}
        For any class of graphs $\calH$, let $\calH'$ be the class of 
        connected components of graphs in~$\calH$.
	Then for any class of \emph{connected} graphs $\calG$, \phoml{\calG}{\calH}  reduces in
        PTIME to \phoml{\calG}{\calH'},
        and \phomu{\calG}{\calH} reduces in PTIME to
        \phomu{\calG}{\calH'}.
\end{lemma}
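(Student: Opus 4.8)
The plan is to exploit that the query $G$ is connected: a homomorphism from $G$ into a subgraph $H' \subseteq H$ must land within a single connected component of $H'$, and since $H$ has no edges running between its connected components, that piece sits inside the restriction of $H'$ to one connected component of $H$. So fix an input $(H,\pi)$ with $H=(V,E,\lambda)\in\calH$. First I would compute in PTIME the partition $V = V_1 \sqcup \cdots \sqcup V_k$ of $V$ into the vertex sets of the connected components (of the underlying undirected graph), set $H_i$ to be the induced subgraph on $V_i$ with edge set $E_i$ and $\pi_i \defeq \pi|_{E_i}$; note that $E = E_1 \sqcup \cdots \sqcup E_k$, that each $H_i$ is connected and hence lies in $\calH'$ by definition, and that $G\in\calG$ is unchanged, so each pair $(G,(H_i,\pi_i))$ is a legal input to $\phoml{\calG}{\calH'}$.

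Next I would establish the structural claim: for every subgraph $H'\subseteq H$, writing $H'_i$ for the restriction of $H'$ to $V_i$ (which is a subgraph of $H_i$, since the edge set $E'$ of $H'$ decomposes as $E' = \bigsqcup_i E'_i$ with $E'_i \subseteq E_i$), we have $G \leadsto H'$ iff $G \leadsto H'_i$ for some $i$. The backward direction is immediate by composing a homomorphism $G \to H'_i$ with the inclusion $H'_i \hookrightarrow H'$. For the forward direction, if $G$ has at least one edge then, given $h\colon G\to H'$, connectedness of $G$ forces the whole image $h(V_G)$ into a single $V_i$ (follow a path in $G$: consecutive images lie in a common component of $H'$, which is contained in one $V_i$ as $H$ has no cross-component edges), so $h$ is in fact a homomorphism $G\to H'_i$; the degenerate case where $G$ is a single vertex with no edge is handled separately, as then $G\leadsto H'$ and $G\leadsto H'_i$ both hold trivially (every $V_i$ is nonempty and $k\geq 1$).

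Finally, since the edges of $H$ are partitioned among the $H_i$ and kept independently, the distribution $\Pr$ of $(H,\pi)$ pushes forward along $H' \mapsto (H'_1,\dots,H'_k)$ to the product of the distributions of the $(H_i,\pi_i)$. Combining this with the claim (so $G \not\leadsto H'$ iff $G\not\leadsto H'_i$ for all $i$, events depending on disjoint edge sets) gives
\[
  \Pr(G\leadsto H) \;=\; 1 - \prod_{i=1}^{k}\big(1 - \Pr(G\leadsto H_i)\big).
\]
The reduction therefore queries the $\phoml{\calG}{\calH'}$ oracle on each of the at most $|V|$ instances $(G,(H_i,\pi_i))$, each no larger than the original, and combines the answers by the displayed formula using polynomial-time rational arithmetic; this is a polynomial-time Turing reduction, and the unlabeled case is word-for-word identical. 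The only point requiring care — the closest thing to an obstacle — is our nonstandard convention that a subgraph retains all vertices of $H$: one must check that isolated vertices do not merge connected components of $H'$ and that the single-vertex query is not overlooked. Everything else (computing components, the product factorization of $\Pr$, the arithmetic) is routine.
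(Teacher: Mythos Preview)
Your proposal is correct and follows essentially the same approach as the paper: decompose $H$ into its connected components $H_1,\dots,H_k$, use connectedness of~$G$ to argue that any homomorphism lands in a single component, and conclude via independence that $\Pr(G\leadsto H)=1-\prod_i(1-\Pr(G\leadsto H_i))$. You are in fact more careful than the paper about the degenerate single-vertex query and about the convention that subgraphs keep all vertices, but these refinements do not change the argument.
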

\begin{inlineproof}
  Let $G \in \mathcal{G}$, $H \in \mathcal{H}$, and write $H = {H'_1 \sqcup \ldots
  \sqcup H'_n}$: we have 
  $H'_i \in \calH'$ for all $1 \leq i \leq n$.
  Let $\pi$ be a probability
 distribution over $H$:
  the independence assumption ensures that the
  edges of any $H'_i$ are pairwise independent from those of any~$H'_j$ for $i
  \neq j$.
 Now, as $G$ is connected, any image of a homomorphism from $G$ to~$H$ must actually
  be included in some~$H'_i$.
Thus, the computation of $\Pr(G\hom H)$ 
  reduces to that of the $\Pr(G\hom H'_i)$ for $1 \leq i \leq n$, as follows:
  \[\Pr(G\hom H)=1 - \prod_{1 \leq i \leq n} (1 -
\Pr(G\hom H'_i)). \qedhere \]
\end{inlineproof}

We last discuss the case when both the query and instance graphs are
disconnected.
Let us consider the results of Table~\ref{tab:unlabeled-conj_of} for connected
instance graphs. Clearly, any hardness results of a connected class
carries over to the corresponding disconnected class. Conversely, we have shown in
Proposition~\ref{prp:u-all-DWT} that \phomu{\All}{\UDWT} is PTIME; this implies
that
all tractable cases in Table~\ref{tab:unlabeled-conj_of} also hold
for
unions of the indicated instance classes, except \phomu{\UOWP}{\UPT} and \phomu{\UDWT}{\UPT}.
But we have noted
at the end of Section~\ref{sec:labeled-disconnected-queries}
that, in the unlabeled setting, \UOWP{} or \UDWT query graphs are equivalent to
\OWP{} query graphs: thus, Lemma~\ref{lem:disconnected-instances},
together with tractability of \phomu{\OWP}{\PT}, implies 
\phomu{\UOWP}{\UPT} and \phomu{\UDWT}{\UPT} are both in PTIME.
Hence, the results of
Table~\ref{tab:unlabeled-conj_of} also hold when instances are unions of the
indicated classes.

We have thus completed our study of 
$\phomL$ and $\phomU$ for
disconnected instances and/or disconnected
queries, We accordingly focus on connected queries and instances in the next two
sections.

\section{Labeled Connected Queries}\label{sec:l-connected-CQs}
In this section, we
focus on the \emph{labeled} setting, i.e., the $\phomL$ problem, for
classes of connected queries and instances.
Table~\ref{tab:labeled} shows the entire classification of the labeled setting 
for the classes that we consider.

\begin{table}[t]
  \caption{Tractability of \phomL{} in the connected case (Section~\ref{sec:l-connected-CQs})}

  \centering
  \begin{minipage}{.63\linewidth}
\begin{tabular}{c|ccccc}
    $\downarrow$$G$\qquad $H$$\rightarrow$ & \OWP & \TWP & \DWT & \PT & \Connected \\
    \hline
    \OWP & & & \vlineright{\ref{prp:l-1WP-DWT}} & \hardcell\ref{prp:l-1WP-PT} & \hardcell\\
    \hhline{~|~~|-|>{\hardline}--}\regline
    \TWP & & \vlineright{} & \hardcell\ref{prp:l-2WP-DWT} & \hardcell& \hardcell\\
    \DWT & & \vlineright{} & \hardcell\ref{prp:l-DWT-DWT} &  \hardcell& \hardcell\\
    \PT & & \vlineright{} & \hardcell & \hardcell & \hardcell\\
    \Connected & & \vlineright{\ref{prp:l-connected-2WP}} & \hardcell& \hardcell & \hardcell\\
    \hhline{~|--|>{\hardline}---}
\end{tabular}
\label{tab:labeled}

\tabexplanation
  \end{minipage}
\end{table}

Intuitively, we show intractability 
for polytree instance graphs, 
and for 
downward trees instance graphs when
the query graphs 
allow either two-wayness or branching.
Conversely, we
show tractability of one-way path query graphs on downward trees, and of
arbitrary connected queries on two-way path instances. We first present the
hardness results, and then the tractability results.

\subsection{Hardness Results}

We recall that, if we allow \emph{arbitrary}
connected unlabeled probabilistic instance graphs (or even just $4$-partite graphs), then computing the probability that there exists
a path of length $2$ is already \#P-hard: 
this is shown in~\cite{suciu2011probabilistic}, 
and we will state this result in our
context as Proposition~\ref{prp:unlabeled_1wp_on_connected} in the next section.
Hence, if we want to obtain PTIME complexity for \phom{}, we need to restrict the class of instances.
We can start by restricting the instances to be polytrees, but as we show, this
does not suffice to ensure tractability:

\begin{propositionrep}
	\label{prp:l-1WP-PT}
	\phoml{\OWP}{\PT} is \#P-hard.
\end{propositionrep}

To show this result, we will reduce from the problem of computing the
probability of a Boolean formula, which we now define:

\begin{definition}
        \label{def:booleanproba}
        Given a set of variables $\calX$ and a \emph{probability assignment} $\pi$
        mapping each variable $X$ in~$\calX$ to a rational probability $\pi(X) \in [0, 1]$, we define
        the \emph{probability} $\pi(\nu)$ of a valuation~$\nu: \calX \to \{0,1\}$ as 
        \[\pi(\nu) \colonequals \left(\prod_{X \in \calX,~\nu(X)=1}
        \pi(X)\right) \left(
        \prod_{X \in \calX,~\nu(X)=0} (1-\pi(X))\right).\]

        The \emph{Boolean probability computation problem} is defined as follows:
        given a Boolean formula $\phi$ on variables $\calX$ and a probability
        assignment $\pi$ on~$\calX$,
        compute the total probability of the valuations that satisfy~$\phi$, i.e., 
        $\Pr(\phi, \pi) = \sum_{\nu \text{~satisfies~} \phi} \pi(\nu)$.
\end{definition}

This problem is known to be \#P-hard, even under severe restrictions on the
formula $\phi$. We will use the \pptdnf{} formulation of the above problem, which
is \#P-hard~\cite{provan1983complexity,suciu2011probabilistic}:

\begin{definition}
        \label{def:pp2dnf}
        A \emph{positive DNF}
        is a Boolean formula~$\phi$ of the form
        \[\phi = \bigvee_{1 \leq i \leq m} \left( \bigwedge_{1 \leq j \leq n_i}
        X_{i,j} \right),\] i.e., it is
        a disjunction of
        (conjunctive) \emph{clauses} that are conjunctions of variables of~$\calX$.
        We assume that each variable of~$\calX$ occurs in~$\phi$, as we can
        eliminate the others without loss of generality.

        A \emph{positive partitioned 2-DNF} (PP2DNF) is intuitively a positive
        DNF $\phi$ on a
        partitioned set of variables where each clause contains one variable from
        each partition. Formally, the variables of~$\phi$ are $\calX \sqcup
        \calY$, where we write 
        $\mathcal{X} =
        \{X_1,\ldots,\allowbreak  X_{n_1}\}$ and $\mathcal{Y} = {\{Y_1,\ldots,
        \allowbreak Y_{n_2}\}}$,
        and 
        $\phi$ is of the form $\bigvee_{j=1 \ldots m} (X_{x_j} \land Y_{y_j})$
        with $1 \leq x_j \leq n_1$ and $1 \leq y_j \leq n_2$ for $1\leq j \leq
        m$.

        The \pptdnf{} problem is the Boolean probability computation problem
        when we impose
        that $\pi$ maps every variable to $1/2$, and
        that $\phi$ is a PP2DNF.
\end{definition}

We show Proposition~\ref{prp:l-1WP-PT}
by reducing from \pptdnf{}:

\begin{proofsketch}
  The full proof is in appendix; see Figure~\ref{fig:l-1WP-PT} for an
  illustration. From the PP2DNF formula $\phi$, we construct
  a $\PT$ probabilistic instance where each branch
  starting at the root describes a variable of the formula. The first edge is
  probabilistic and represents the choice of valuation. The edges are oriented
  upwards or downwards depending on whether the variable belongs to~$\calX$ or
  to~$\calY$. We add a special gadget at different depths of the branch
  to code the index of each of the clauses where the variable occurs.
  
  We code satisfaction of the formula by a query that tests for a path of a
  specific length that starts and ends with the gadget. The query has a match
  exactly on possible worlds where we have set two variables to true such that
  the sum of the depths of the gadgets corresponds to the query length: this
  happens iff the two variables occur in the same clause.
\end{proofsketch}

\begin{figure}
  \begin{tikzpicture}[xscale=1.52,yscale=1]
	\node[fill = white, name = H] at (-1.15,5) {$H$:};

	\node[fill = white, name = R] at (3.5,5) {$R$};

	\node[fill = white, name = X1] at (0.5,3) {$X_1$};
	\node[fill = white, name = X2] at (3,3) {$X_2$};

	\node[fill = white, name = Y1] at (4,3) {$Y_1$};
	\node[fill = white, name = Y2] at (6.5,3) {$Y_2$};

	\node[fill = white, name = X13] at (0.5,1.5) {$X_{1,3}$};
	\node[fill = white, name = X12] at (0.5,0) {$X_{1,2}$};
	\node[fill = white, name = X11] at (0.5,-1.5) {$X_{1,1}$};

	\node[fill = white, name = X23] at (3,1.5) {$X_{2,3}$};
	\node[fill = white, name = X22] at (3,0) {$X_{2,2}$};
	\node[fill = white, name = X21] at (3,-1.5) {$X_{2,1}$};

	\node[fill = white, name = Y11] at (4,1.5) {$Y_{1,1}$};
	\node[fill = white, name = Y12] at (4,0) {$Y_{1,2}$};
	\node[fill = white, name = Y13] at (4,-1.5) {$Y_{1,3}$};

	\node[fill = white, name = Y21] at (6.5,1.5) {$Y_{2,1}$};
	\node[fill = white, name = Y22] at (6.5,0) {$Y_{2,2}$};
	\node[fill = white, name = Y23] at (6.5,-1.5) {$Y_{2,3}$};

	\node[fill = white, name = A11] at (-1,-2.5) {$A_{1,1}$};
	\node[fill = white, name = A12] at (-1,-1) {$A_{1,2}$};

	\node[fill = white, name = A23] at (1.5,0.5) {$A_{2,3}$};

	\node[fill = white, name = B12] at (5.5,-1) {$B_{1,2}$};

	\node[fill = white, name = B21] at (8,0.5) {$B_{2,1}$};
	\node[fill = white, name = B23] at (8,-2.5) {$B_{2,3}$};

	\draw[dashed, line width = 1pt, ->]  (X1) -- (R) node[midway, left=2pt, above = 2pt, fill=white] {$S$};
	\draw[dashed, line width = 1pt, ->]  (X2) -- (R) node[midway, left=2pt, fill=white] {$S$};

	\draw[dashed, line width = 1pt, ->]  (R) -- (Y1) node[midway, right=2pt, fill=white] {$S$};

	\draw[dashed, line width = 1pt, ->]  (R) -- (Y2) node[midway, right=2pt, above = 2pt, fill=white] {$S$};

	\draw[line width = 1pt, <-]  (X1) -- (X13) node[midway, left=1pt, fill=white] {$S$};
	\draw[line width = 1pt, <-]  (X13) -- (X12) node[midway, left=1pt, fill=white] {$S$};
	\draw[line width = 1pt, <-]  (X12) -- (X11) node[midway, left=1pt, fill=white] {$S$};

	\draw[line width = 1pt, <-]  (X2) -- (X23) node[midway, left=1pt, fill=white] {$S$};
	\draw[line width = 1pt, <-]  (X23) -- (X22) node[midway,left=1pt, fill=white] {$S$};
	\draw[line width = 1pt, <-]  (X22) -- (X21) node[midway,left=1pt, fill=white] {$S$};

	\draw[line width = 1pt, ->]  (Y1) -- (Y11) node[midway, right=1pt, fill=white] {$S$};
	\draw[line width = 1pt, ->]  (Y11) -- (Y12) node[midway, right=1pt, fill=white] {$S$};
	\draw[line width = 1pt, ->]  (Y12) -- (Y13) node[midway, right=1pt, fill=white] {$S$};

	\draw[line width = 1pt, ->]  (Y2) -- (Y21) node[midway, right=1pt, fill=white] {$S$};
	\draw[line width = 1pt, ->]  (Y21) -- (Y22) node[midway, right=1pt, fill=white] {$S$};
	\draw[line width = 1pt, ->]  (Y22) -- (Y23) node[midway, right=1pt, fill=white] {$S$};

	\draw[line width = 1pt, ->]  (A11) -- (X11) node[midway, left=6pt, above = 1pt, fill=white] {$T$};
	\draw[line width = 1pt, ->]  (A12) -- (X12) node[midway, left=6pt, above = 1pt, fill=white] {$T$};

	\draw[line width = 1pt, ->]  (A23) -- (X23) node[midway, left=6pt, above = 1pt, fill=white] {$T$};

	\draw[line width = 1pt, ->]  (Y12) -- (B12) node[midway, right=6pt, above = 1pt, fill=white] {$T$};

	\draw[line width = 1pt, ->]  (Y21) -- (B21) node[midway, right=6pt, above = 1pt, fill=white] {$T$};
	\draw[line width = 1pt, ->]  (Y23) -- (B23) node[midway, right=6pt, above = 1pt, fill=white] {$T$};

        \node[fill = white, name = G] at (3.5,-3.4) {$G$:
        $\xrightarrow{\,~T\,~}\xrightarrow{\,~S\,~}\xrightarrow{\,~S\,~}\xrightarrow{\,~S\,~}\xrightarrow{\,~S\,~}\xrightarrow{\,~S\,~}\xrightarrow{\,~S\,~}\xrightarrow{\,~T\,~} $};

\end{tikzpicture}
\centering
\caption{Illustration of the proof of Proposition~\ref{def:pp2dnf} for the
PP2DNF formula $X_1 Y_2 \lor X_1 Y_1 \lor X_2 Y_2$. Dashed edges have probability $\frac{1}{2}$, all others have probability $1$.}
\label{fig:l-1WP-PT}
\end{figure}
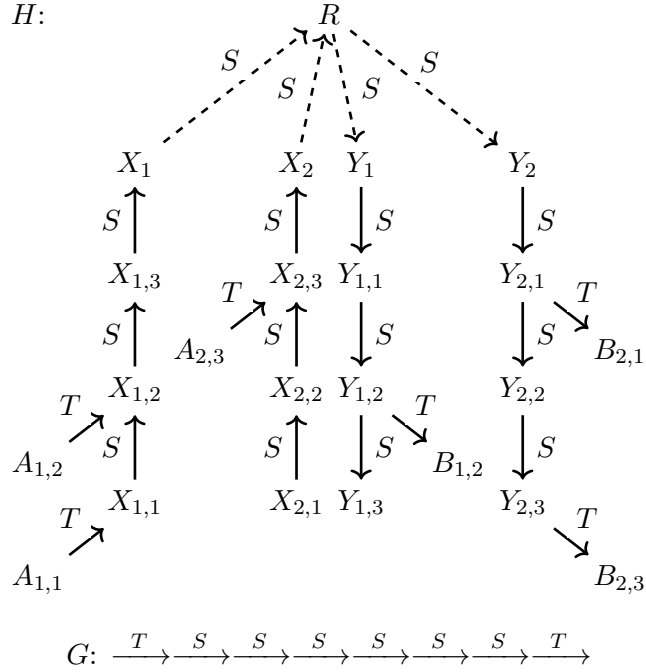

\begin{proof}
	We reduce from the \#P-hard problem \pptdnf.
	From the formula $\phi$,
        we construct the following $\{S, T\}$-labeled probabilistic graph $H$
        (an example of this construction is presented in
        Figure~\ref{fig:l-1WP-PT}):

	\begin{itemize}
		\item The vertices of $H$ are $\{R\} \sqcup \{X_1,\ldots
                  X_{n_1}\} \allowbreak \sqcup \{Y_1,\ldots,\allowbreak Y_{n_2}\}
                  \allowbreak \sqcup \{X_{i,j} \mid 1 \leq i \leq n_1,~1\leq j \leq m\}
			\sqcup \{Y_{i,j} \mid 1 \leq i \leq n_2,~ 1\leq j
                        \leq m\} \sqcup \{A_{x_j,j} \mid 1 \leq j \leq
                        m\} \sqcup \{B_{y_j,j} \mid 1 \leq j \leq m\}$.
		\item The edges of $H$, all of which have probability $1$ except when specified, are: 
			\begin{itemize}
				\item $X_i \xrightarrow{S} R$ for all $1 \leq i
                                  \leq n_1$ and $R \xrightarrow{S} Y_i$ for all $1 \leq i \leq n_2$, all having probability $\frac{1}{2}$ and intuitively coding the valuation of each variable; 
				\item For all $1 \leq i \leq n_1$, the edge
                                  $X_{i,m} \xrightarrow{S} X_i$ and the edges
                                  $X_{i,j} \xrightarrow{S} X_{i,j+1}$ for all $1 \leq j \leq m-1$;
				\item For all $1 \leq i \leq n_2$, the edge $Y_i
                                  \xrightarrow{S} Y_{i,1}$ and the edges
                                  $Y_{i,j} \xrightarrow{S} Y_{i,j+1}$ for all $1 \leq j \leq m-1$;
                                \item For all $1 \leq j \leq m$, the
                                  edges $A_{x_j,j} \xrightarrow{T}
                                  X_{x_j,j}$ and $Y_{y_j,j}
                                  \xrightarrow{T} B_{y_j,j}$, intuitively
                                  indicating that variables $X_{x_j}$ and
                                  $Y_{y_j}$ belong to clause $j$.
			\end{itemize}	
	\end{itemize}
	The $\{S, T\}$-labeled graph $G$ is then $\xrightarrow{T} (\xrightarrow{S})^{m+3} \xrightarrow{T}$.
	It is clear that $G$ is a \OWP{} query graph, $H$ is a polytree and that both can be constructed in PTIME from $\phi$.
        We now show that $\Pr(G\hom H)$ is exactly the number of satisfying assignments of~$\phi$ divided by $2^n$, so that the computation of one reduces in PTIME to the 
	computation of the other, concluding the proof. 
	To see why, we define a bijection between the valuations $\nu$ of $\{X_1,\ldots,X_{n_1}\} \sqcup \{Y_1,\ldots,Y_{n_2}\}$ to the possible worlds $H'$ 
	of $H$ that have non-zero probability, in the expected way: keep 
	the edge $X_i \xrightarrow{S} R$ (resp., $R \xrightarrow{S} Y_i$)
        iff $X_i$ (resp., $Y_i$) is assigned to true in the valuation. We then show that there is a homomorphism
	from $G$ to $H'$ if and only if $\phi$ evaluates to true under $\nu$.
        
        Indeed, if there is a homomorphism from $G$ to $H'$, then by considering the only possible matches of the $T$-edges, one can check easily that
	the image of the match in $H'$ must be of the following form for some 
	$1 \leq j \leq m$: $A_{x_j,j} \xrightarrow{T} X_{x_j,j} \xrightarrow{S} X_{x_j,j+1} \xrightarrow{S} \cdots \xrightarrow{S} X_{x_j,m} \xrightarrow{S} X_{x_j} \xrightarrow{S} R \xrightarrow{S}
        Y_{y_{j'}} \xrightarrow{S} Y_{y_{j'}, 1} \xrightarrow{S} Y_{y_{j'}, 2} \xrightarrow{S} \cdots \xrightarrow{S} Y_{y_{j'}, {j'}} \xrightarrow{T} B_{y_{j'}, {j'}}$; further, from the length of the $S$-path we must have $(m-j)+4+(j'-1) = m+3$, so that we must have $j = j'$. Then, by construction,
	$X_{x_j}$ and~$Y_{y_j}$ belong to clause $j$, so the valuation satisfies~$\phi$.
	Conversely, suppose that the valuation satisfies $\phi$, then for some $1 \leq j \leq m$ we know that $X_{x_j}$ and $Y_{y_j}$ are assigned to true by the valuation, 
	and so we can build the homomorphism as above from $G$ to $H'$.
\end{proof}

Hence, restricting instances to polytrees is not sufficient to ensure
tractability, even for $\OWP$ query graphs. We must thus restrict the instance
further, by disallowing one of the two remaining features, namely branching and
two-wayness.
The first option of disallowing branching, i.e., requiring the
instance to be a $\TWP$, is studied in Section~\ref{sec:tractability} below,
where we show that the problem is tractable for arbitrary
query graphs.

The second option is to forbid two-wayness on the instance,
i.e., restrict it to be a $\DWT$. In this case, we first show that
intractability holds even when we also forbid two-wayness in the query graph,
i.e., we also restrict it to be a $\DWT$.
The result follows from our earlier work on
the combined complexity of query
evaluation~\cite{amarilli2017combined,amarilli2017combinedb}:

\begin{propositionrep}[\cite{amarilli2017combinedb}]
\label{prp:l-DWT-DWT}
\phoml{\DWT}{\DWT} is \#P-hard.
\end{propositionrep}
\begin{proof}
  The proof is almost the same as that of Proposition~36
  of~\cite{amarilli2017combinedb}, straightforwardly adapted to our setting of probabilistic
  graphs (in particular replacing the unary relation~$R$ by a binary relation),
  by observing that the probabilistic instance defined in this proof is actually
  a $\DWT$ (beyond having treewidth~1), and that the query actually corresponds
  to a $\DWT$ graph (beyond being $\alpha$-acyclic).
\end{proof}

If we forbid branching in the query graph instead of two-wayness, requiring it
to be a $\TWP$, then intractability still holds, which also follows from our
earlier results:

\begin{propositionrep}[\cite{amarilli2017combinedb}]
\label{prp:l-2WP-DWT}
\phoml{\TWP}{\DWT} is \#P-hard.
\end{propositionrep}
\begin{proof}
  The proof is almost the same as that of Proposition~38
  of~\cite{amarilli2017combinedb}, again adapted to our setting of probabilistic
  graphs, with one small modification: we do not 
  materialize edges $b \xrightarrow{S_-} a$ in the instance graph for each edge
  $a \xrightarrow{S} b$ in the instance, and
  instead modify the query to replace all edges $x \xrightarrow{S_-} y$ by edges $x
  \xleftarrow{S} y$. This ensures that the query is a $\TWP$ and the instance is
  a $\DWT$, and hardness is shown similarly to the original proof.
\end{proof}

Thus, on $\DWT$ instances, the only remaining case is when the query is a
one-way path. We will now show in the section below that this case is tractable, in
addition to the case of arbitrary queries on $\TWP$ instances that we left open
above.

\subsection{Tractability Results}
\label{sec:tractability}
The general proof technique to obtain PTIME combined complexity in this section
is inspired by the probabilistic database literature~\cite{suciu2011probabilistic}:
compute the \emph{lineage} of~$G$ on $H$ as a Boolean formula in positive disjunctive
normal form (DNF), then compute its probability. Let us first define
\emph{lineages}:

\begin{definition}
  \label{def:lineage}
   Let $G$ be a query graph and $(H, \pi)$ be a probabilistic graph with edge set~$E$.
   For any valuation $\nu: E \to \{0, 1\}$, we denote by~$\nu(H)$ the possible
   world of~$H$ where each edge $e \in E$ is kept iff $\nu(e) = 1$.
   Letting $\phi$ be a
   Boolean function whose variables are the edges of~$E$, we say that $\phi$
   \emph{captures the lineage of $G$ on $H$} if, for any valuation $\nu: E \to
   \{0, 1\}$, the function $\phi$ evaluates to~$1$ under~$\nu$ iff we have $G
   \leadsto \nu(H)$.
\end{definition}

Lineage representations allow us to reduce the
$\phom$ problem to the Boolean probability computation problem on the lineage
function (recall Definition~\ref{def:booleanproba}).
Formally, for any query graph $G$ and
probabilistic graph $(H, \pi)$, given a Boolean function $\phi$ that captures the lineage of$~G$
on~$H$, we compute the answer to $\phom$ on~$G$ and $(H,\pi)$ as the probability $\Pr(\phi,
\pi)$ of $\phi$ under~$\pi$: it is immediate by definition that these two
quantities are equal.

Of course, computing a lineage representation does not generally suffice to show
tractability, because, as we explained earlier, the Boolean probability
computation problem is generally intractable. However, computing a Boolean
lineage allows us to leverage the known tractable classes of Boolean
formulas.
Specifically, we will show how to use the class of \emph{$\beta$-acyclic
  positive DNF
formulas}, which are known to be tractable~\cite{brault2015understanding}.
We define this notion, by first recalling
the notion of a \emph{$\beta$-acyclic hypergraph}, and then defining a
\emph{$\beta$-acyclic positive DNF}:

\begin{definition}
        \label{def:beta-acyclic}
	A \emph{hypergraph} $\mathcal{H} = (V,E)$ is a finite set~$V$ of vertices and a set~$E$ of non-empty subsets of~$V$, called \emph{hyperedges}.
	For $v \in V$, we write $\mathcal{H} \setminus v$ for the hypergraph $(V \setminus \{v\}, E')$ where $E'$ is
	$\{e \setminus \{v\} \mid e \in E\} \setminus \{ \emptyset \}$.

        A vertex $v \in V$ of~$\mathcal{H}$ is called a \emph{$\beta$-leaf}
        \cite{brault2014hypergraph} if the set of hyperedges that contain it,
        i.e., $\{e \in E \mid v \in e\}$, is totally
	ordered by inclusion. In other words, we can write $\{e \in E
          \mid v \in
        e\}$ as $(e_1, \ldots, e_k)$ in a way that ensures that $e_i \subseteq
        e_{i+1}$ for all $1 \leq i < k$.
        
	A $\beta$-elimination order for a hypergraph
        $\mathcal{H} = (V,E)$
        is defined inductively as follows:
        \smallskip
	\begin{itemize}
		\item if $E = \emptyset$, then the empty tuple is a $\beta$-elimination order for $\mathcal{H}$;	
		\item otherwise, a tuple $(v_1, \ldots, v_n)$ of vertices of $\mathcal{H}$ is a $\beta$-elimination order for $\mathcal{H}$
                  if $v_1$ is a $\beta$-leaf in~$\mathcal{H}$ and $(v_2, \ldots, v_n)$ is a $\beta$-elimination order for $\mathcal{H} \setminus v_1$.
	\end{itemize}
        \smallskip
        The hypergraph $\mathcal{H}$ is \emph{$\beta$-acyclic} if there is a $\beta$-elimination order for $\mathcal{H}$.
\end{definition}

We can see a positive DNF (recall Definition~\ref{def:pp2dnf}) as a hypergraph
of clauses on the variables, and introduce the notion of
\emph{$\beta$-acyclic positive DNFs} accordingly:

\begin{definition}
        The \emph{hypergraph} $\mathcal{H}(\phi)$ of a positive DNF 
	on variables~$\calX$
        has $\calX$ as vertex set and has one hyperedge per clause, i.e., we have
        $\mathcal{H}(\phi) \colonequals (\calX,
        E)$ with $E \colonequals \{ \{X_{i,j} \mid 1 \leq j \leq n_i\} \mid 1 \leq i \leq m\}$. 
        We say that the positive DNF $\phi$ is \emph{$\beta$-acyclic} if $\mathcal{H}(\phi)$ is $\beta$-acyclic.
\end{definition}

It follows directly from results by Brault-Baron, Capelli, and Mengel about the
$\beta$-acyclic \cspd{} problem~\cite{brault2015understanding}
that we can tractably compute the probability of $\beta$-acyclic positive
DNFs:

\begin{theoremrep}
	\label{thm:prob-acyclic-DNF}
	The Boolean probability computation problem is in PTIME
        when restricted to $\beta$-acyclic positive DNF formulas.
\end{theoremrep}

\begin{proofsketch}
        The \cspd{} problem studied in \cite{brault2015understanding} is about
        computing a partition function over the hypergraph, under weighted
        constraints on hyperedges: it generalizes the problem of counting the
        number of valuations of $\beta$-acyclic formulae in conjunctive normal
        form (CNF) by~\cite[Lemma~3]{brault2015understanding}.
        We show how the result extends
        to $\beta$-acyclic positive DNF, using de Morgan's law, and to probability computation
        for weighted variables, using additional constraints on singleton
        variable sets.
\end{proofsketch}

\begin{proof}
  We reduce our Boolean probability computation problem to the problem of
  $\beta$-acyclic \cspd{} of~\cite{brault2015understanding}, which they show to
  be in PTIME (Theorem~26 of~\cite{brault2015understanding}). We will explain
  how probability computation in the sense of Definition~\ref{def:booleanproba}
  can be encoded in their setting, by a variant of their own encoding
  (in Lemma~3 of~\cite{brault2015understanding}): we give a full proof for
  completeness.

	First, we recall their definition of \cspd{} (Definitions~1 and~2 in~\cite{brault2015understanding}) in the case of a Boolean domain.
        We denote by $\QQp$ the nonnegative rational numbers.
        Denote by $\{0, 1\}^\calX$ the set of functions from $\calX$
        to~$\{0,1\}$, i.e., the Boolean valuations of~$\calX$.
        For $\nu \in \{0, 1\}^\calX$ and $\calY \subseteq \calX$, we denote by
        $\restr{\nu}{\calY}$ the restriction of~$\nu$ to~$\calY$.
        A \emph{weighted constraint (with default value)} on variables~$\calX$
        is a pair
        $c=(f,\mu)$ that consists of a
        function $f: S \to \QQp$ for some subset~$S$ of~$\{0, 1\}^\calX$,
        called the \emph{support} of $c$, 
        and a \emph{default value} $\mu \in \QQp$;
        we write $\mathrm{var}(c) \colonequals \calX$.
        The constraint $c$ induces a total function on $\{0, 1\}^\calX$, also
        denoted $c$, that maps $\nu \in \{0,1\}^\calX$ to~$f(\nu)$ if $\nu \in S$, and to~$\mu$ otherwise.
        The \emph{size} of $c$ is $|c| = |S| \times |\calX|$. Intuitively, a
        constraint with default value
        assigns a weight in~$\QQp$ to all valuations of~$\calX$,
        but the default value mechanism allows us to avoid writing explicitly the complete table of this mapping.
	
	An instance of the \cspd{} problem then consists of a finite set $I$ of weighted constraints.
	The size of $I$ is $|I| \colonequals \sum_{c \in I} |c|$, and we write
        $\mathrm{var}(I) \colonequals \bigcup_{c \in I} \mathrm{var}(c)$.
	The output of the problem is the \emph{partition function}
        \[w(I) = \sum_{\nu \in \{0, 1\}^{\mathrm{var}(I)}} \prod_{c \in I}
        c(\restr{\nu}{\mathrm{var}(c)}).\]

        The \emph{hypergraph} $\mathcal{H}(I)$ of the \cspd{} instance $I$ (defined in Section~2.2 of~\cite{brault2015understanding}) is the hypergraph $(\mathrm{var}(I), E_I)$ where
        $E_I = \{\mathrm{var}(c) \mid c \in I \}$. We say that $I$ is \emph{$\beta$-acyclic} if $\mathcal{H}(I)$ is a $\beta$-acyclic hypergraph (recall Definition~\ref{def:beta-acyclic}), and we call
        \emph{$\beta$-acyclic \cspd{}} the problem \cspd{} restricted to $\beta$-acyclic instances.
	By Theorem~26 of~\cite{brault2015understanding}, the problem $\beta$-acyclic \cspd{} is in PTIME.

        We now explain how to reduce the probability computation problem to the $\beta$-acyclic \cspd{} problem.
	Let $\phi = \bigvee_{1 \leq i \leq m} \left( \bigwedge_{1 \leq j \leq
        n_i} X_{i,j} \right)$ be a Boolean 
	$\beta$-acyclic DNF on variables~$\calX$, with probabilities $\pi(X) \in
        [0,1]$ for each $X \in \calX$.
        We construct in linear time from~$\phi$ and~$\pi$ the variable set
        $\calX' \colonequals \{X' \mid X \in \calX\}$, the CNF formula $\phi' \colonequals \bigwedge_{1 \leq i \leq m} \left( \bigvee_{1 \leq j \leq n_i} X'_{i,j} \right)$, and the probability valuation
        $\pi'$ on~$\calX'$ defined by $\pi'(X'_{i,j}) = 1-\pi(X_{i,j})$. By De Morgan's duality law, $\phi'$ is equivalent to the negation of~$\phi$, so that we have
	 $\Pr(\phi, \pi) = 1-\Pr(\phi', \pi')$; hence, the probability
         computation problem for~$\phi$ and~$\pi$ reduces in PTIME to the
         same problem for $\phi'$ and~$\pi'$.

	We then construct in linear time a $\beta$-acyclic \cspd{} instance $I$ such that $\Pr(\phi', \pi') = w(I)$, which concludes the proof.
        For each variable $X' \in \calX'$, we define a weighted constraint
        $c_{X'}$ on variables $\{X'\}$
        by $c_{X'}(X' \mapsto 1) = \pi'(X')$ and $c_{X'}(X' \mapsto 0) =
        1-\pi'(X')$, which codes the probability of the variables.
        Now, for each clause $1 \leq i \leq m$, just like in
        Lemma~3 of~\cite{brault2015understanding}, we define a weighted
        constraint $c_i=(f_i,1)$ with default value~1 whose variables are
        $\{X'_{i,j} \mid 1 \leq j \leq n_i\}$, i.e., those that occur in the clause:
        $f_i(\nu)$ is $0$ for the (unique) valuation that sets all variables of the clause to~$0$, 
        intuitively coding the constraint of the clause.
        From the fact that $\phi$ was $\beta$-acyclic, it is clear that $I$ is also $\beta$-acyclic.
        Now, the result $w(I)$ of the partition function sums over all
        valuations of the variables of~$I$, namely the variables~$\calX'$
        of~$\phi'$. Whenever a valuation does not satisfy some clause $1 \leq i
        \leq m$, the weighted constraint $c_i$ will give it weight~0, hence
        ensuring that the product evaluates to~0, so we can restrict the sum to
        valuations that satisfy~$\phi'$: such valuations are given weight~$1$ by all weighted
        constraints $c_i$. Now, it is easy to see that the weight of valuations
        $\nu$ that satisfy $\phi$ is their probability $\pi'(\nu)$, as each
        $c_{X'}$ gives them weight $\pi'(X')$ or $1-(\pi'(X'))$ depending on
        whether $\nu(X')$ is~1 or~0.
        Hence, we have reduced the probability computation problem for $\beta$-acyclic DNF formulas
        to $\beta$-acyclic \cspd{} in PTIME, which concludes the proof.
\end{proof}

We will then use the tractability of $\beta$-acyclic formulas to show PTIME combined
complexity results for our $\phomL$ problem. The first result that we show
is tractability for labeled \OWP{} query graphs on \DWT{} probabilistic instance
graphs:\footnote{The connection to $\beta$-acyclicity in this context is due to Florent
Capelli.}

\begin{propositionrep}
	\label{prp:l-1WP-DWT}
	\phoml{\OWP}{\DWT} is PTIME.
\end{propositionrep}

\begin{proofsketch}
Intuitively, the proof proceeds in three steps. The first step is to enumerate
all candidate minimal matches of the query graph in the instance graph, i.e.,
subgraphs of the instance graph to which the query graph could have a homomorphism, and
which are minimal for inclusion. As the query
graph is a path, we know that the minimal matches are downward paths in the $\DWT$
instance: hence, as each vertex of the $\DWT$ instance can be the lowest vertex
of at most one match, there are polynomially many matches to consider.
  
  The
second step is to decide which ones of these matches are actually a match of the
  query, by considering the labels: as both the query graph and the match are a $\OWP$,
this is straightforward. These first two steps produce a positive DNF that captures the lineage
of the query graph on the instance in the standard sense.
  
  The third step is to notice that this
lineage expression is $\beta$-acyclic: this is because its variables can be eliminated
by considering the nodes of the instance $\DWT$ in a bottom-up fashion.
\end{proofsketch}

\begin{proof}
	Let $G \defeq u_1 \xrightarrow{R_1} \cdots \xrightarrow{R_{m-1}} u_m$ be the \OWP{} query (where all $R_i$ are not necessarily distinct), and $H$ be the downwards tree instance.
	The idea is to construct the lineage of~$G$ on~$H$ as a $\beta$-acyclic
        DNF $\phi$, so that we can conclude with Theorem~\ref{thm:prob-acyclic-DNF}.
	It is clear that any match of $G$ can only be a downwards path of $H$, hence we construct~$\phi$ as follows:
	for every downwards path $a_1 \xrightarrow{R'_1} \cdots \xrightarrow{R'_{m-1}} a_m$ of length $m$ of $H$ (their number is linear in $|H|$ because each path is uniquely defined by the choice of~$a_m$) check if the path is a match
	of $G$ (i.e, check that $R_i = R'_i$ for $1 \leq i \leq m-1$), and if it is the case then create a new clause of $\phi$ whose variables are all
	the facts $a_i \xrightarrow{R_i} a_{i+1}$ for $1 \leq i \leq m-1$.

        The formula $\phi$ thus obtained is then a DNF representation of the lineage of $H$ on~$G$, and has been built in time $O(\card{H} \cdot \card{G})$, i.e., in PTIME. 
	We now justify that $\phi$ is $\beta$-acyclic by giving a
        $\beta$-elimination order for~$\phi$: while $H$ still has edges,
        repeatedly pick a leaf $b$ of~$H$ and, letting $a$ be the parent of~$b$, eliminate
	the variable $a \xrightarrow{R} b$ from $\phi$. 
	Such a variable will always be a $\beta$-leaf, as any set of downwards paths of a downwards tree all ending at
	a leaf is necessarily ordered by inclusion. From the above, the fact
        that $\phi$ is $\beta$-acyclic suffices to conclude the proof.
\end{proof}

Interestingly, we were not able to prove this result using 
tree automata-based dynamic programming approach (like we will do later for
Proposition~\ref{prp:u-1WP-PT}).

The second result that we show is tractability when restricting the instance to be a \TWP{},
and allowing arbitrary \emph{connected} queries (remember from
Proposition~\ref{prp:l-conj_of_1WP-1WP} that the problem is hard even on $\OWP$
instances if we allow \emph{disconnected} queries):

\begin{propositionrep}
	\label{prp:l-connected-2WP}
	\phoml{\Connected}{\TWP} is PTIME.
\end{propositionrep}
\begin{proof}
  First of all, notice that, as the query graph $G$ is connected, the image of a homomorphism from the query $G$ to the \TWP instance $H$ is necessarily a connected component of $H$.
  Moreover, each connected component of $H$ is also a \TWP and there are $O(|H|^2)$ of them.
  We then proceed as follows.
  For every connected subpath ${C = a_1 - \cdots - a_n}$ (with each~$-$
  being either $\xrightarrow{R}$ or $\xleftarrow{R}$ for some binary relation $R$ in $H$) of~$H$,
  we check if there is a homomorphism from $G$ to $C$.
  This can be done in PTIME by Theorem~\ref{thm:xbar},
  because $C$ trivially has the
  $\underline{X}$-property w.r.t.\ the 
  total order $a_1 < a_2 < \cdots < a_n$: the only possibility for
  $(n_0,n_3)$ and $(n_1,n_2)$ to be edges of $C$ when $n_0$ comes before~$n_1$
  and $n_2$~comes before~$n_3$ is if $n_0=n_2$ and $n_1=n_3$, in
  which case it cannot hold that $n_0\xrightarrow{R}n_3$ and
    $n_1\xrightarrow{R}n_2$ at the same time, because we disallow multi-edges.
  If there is such a homomorphism, then we create a new clause of $\phi$ whose variables are all
  the facts that belong to~$C$.

  From this, we obtain in PTIME a positive DNF $\phi$ that captures the lineage of $G$ on $H$. 
  We now justify that $\phi$ is $\beta$-acyclic by giving a
  $\beta$-elimination order for $\phi$, by an argument similar to the
  proof of Proposition~\ref{prp:l-1WP-DWT}: repeatedly eliminate
  a variable $a - b$ from $\phi$ and this fact from $H$, where $b$ is an endpoint of $H$. 
  Indeed such a variable will always be a $\beta$-leaf, as any set of
  connected component of $H$ including $a - b$ is necessarily ordered by
  inclusion. Hence, $\phi$ is $\beta$-acyclic, which allows us to
  conclude.
\end{proof}

To show this result, we follow the same scheme as in the proof of
Proposition~\ref{prp:l-1WP-DWT} above:
\begin{inparaenum}[(i)]
\item enumerate all candidate matches;
\item check whether they are indeed matches; and
\item 
argue that the resulting lineage is $\beta$-acyclic.
\end{inparaenum} For the first step, there are polynomially many candidate
matches to consider, because matches are necessarily connected subgraphs
of the instance graph~$H$, that are uniquely defined by their endpoints: this is
where we use connectedness of the query. For the third step, the
resulting lineage is $\beta$-acyclic for the same reason as in
Proposition~\ref{prp:l-1WP-DWT}, as we can eliminate variables following the
order of the path~$H$: all connected subpaths containing an endpoint of the path are
ordered by inclusion. What changes, however, is the second step: from the
quadratically many possible matches, to
compute the lineage expression, we must decide which ones are actually matches.

Deciding this for each subpath amounts to testing, given the connected query
graph~$G$ and a candidate match~$H'$, whether $G \leadsto H'$, in the
non-probabilistic sense. This graph homomorphism problem is generally
intractable, but here the minimal match $H'$ is a $\TWP$ (as it is a subpath of~$H$),
so it turns out to enjoy combined
tractability. The corresponding result was first shown 
by Gutjahr~\cite{gutjahr1992polynomial} 
for \emph{unlabeled} graphs, when the instance graph is a path, or for more
general instances satisfying a condition called the $\underline{X}$-property;
this was generalized to labeled graphs by Gottlob, Koch, and Schulz
in~\cite{gottlob2006conjunctive}. We recall here the definition of this property:

\begin{definition}[(Definition~3.2 of \cite{gottlob2006conjunctive})]
\label{def:X-property}
  Let $H=(V,E,\lambda)$ be a directed graph with labels on $\sigma$, let
  $R\in\sigma$,
  and let $<$ be a total order on~$V$. 
  Then $R$ is said to have the \emph{$\underline{X}$-property} w.r.t.\ $<$ if
  for all $n_0,n_1,n_2,n_3 \in V$ such that $n_0 < n_1$
  and $n_2 < n_3$, if we have $n_0\xrightarrow{R}n_3$ and
  $n_1\xrightarrow{R}n_2$ then we also have
  $n_0\xrightarrow{R}n_2$.
  $H$ is said to have the \emph{$\underline{X}$-property} w.r.t.\ $<$ if it is the case
  of each label~$R$.
\end{definition}

\begin{theorem}(Theorem~3.5 of \cite{gottlob2006conjunctive}, extending
  Theorem~3.1 of \cite{gutjahr1992polynomial})
  \label{thm:xbar}
  Given a labeled query graph $G$, and given a labeled directed graph $H$ with the
  $\underline{X}$-property w.r.t.\ some order~$<$,
  we can determine in time $O(|H| \times |G|)$ whether $G \leadsto H$.
\end{theorem}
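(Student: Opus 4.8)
The plan is to decide $G \leadsto H$ by \emph{arc-consistency propagation}, exploiting the observation that the $\underline{X}$-property is exactly a \emph{min-closedness} condition. Fix a total order~$<$ on $V(H)$ witnessing the $\underline{X}$-property. First one checks that $R$ has the $\underline{X}$-property w.r.t.~$<$ if and only if the relation $\rho_R \defeq \{(a,b) \mid a \xrightarrow{R} b\}$ on $V(H)$ is closed under coordinatewise minimum: given $a \xrightarrow{R} b$ and $a' \xrightarrow{R} b'$, one obtains $\min_{<}(a,a') \xrightarrow{R} \min_{<}(b,b')$ by a short case split in which each case is either immediate or a direct instance of Definition~\ref{def:X-property}. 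The algorithm then maintains a \emph{domain} $D(u) \subseteq V(H)$ for every vertex $u$ of~$G$, initialised to $V(H)$, and repeatedly enforces arc consistency: for each edge $u \xrightarrow{R} v$ of~$G$ it deletes from $D(v)$ every $b$ with no $R$-predecessor in $D(u)$, and from $D(u)$ every $a$ with no $R$-successor in $D(v)$. Since domains only shrink, this reaches a fixpoint; the algorithm answers ``$G \leadsto H$'' iff every $D(u)$ is then non-empty.

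For correctness, the ``only if'' direction is routine: if $h$ is a homomorphism, then an induction on the propagation steps shows that $h(u)$ is never deleted from $D(u)$ (deleting $h(v)$ via an edge $u \xrightarrow{R} v$ would contradict $h(u) \in D(u)$ together with $h(u) \xrightarrow{R} h(v)$), so all domains stay non-empty. The ``if'' direction is the conceptual core and is where the $\underline{X}$-property is used. Assume all domains are non-empty at the fixpoint and set $h(u) \defeq \min_{<} D(u)$. For each edge $u \xrightarrow{R} v$ of~$G$, arc consistency yields some $b \in D(v)$ with $h(u) \xrightarrow{R} b$ and some $a \in D(u)$ with $a \xrightarrow{R} h(v)$; as $h(u) \leq a$ and $h(v) \leq b$, applying min-closedness of $\rho_R$ to the pairs $(h(u),b)$ and $(a,h(v))$ gives $(h(u),h(v)) \in \rho_R$, i.e.\ $h(u) \xrightarrow{R} h(v)$. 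Hence $h$ is a homomorphism, so a non-empty arc-consistent fixpoint does certify $G \leadsto H$. (Conceptually, this is the classical fact that CSP instances all of whose constraints are min-closed are solved by arc consistency, with the pointwise minimum of the reduced domains being a solution.)

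What remains --- and, for a fully self-contained argument, the only genuinely delicate point --- is to realise the scheme above within the claimed $O(|H| \times |G|)$ time, rather than merely in polynomial time. A naive arc-consistency loop is far too slow, so one processes the edges of~$G$ through a worklist, maintains for each (edge, candidate-image) pair a support counter and keeps each $D(u)$ in a compact form (e.g.\ sorted, exploiting that the $\underline{X}$-property constrains how the domains thin out), and argues by an amortised analysis that all deletions and re-examinations together cost $O(|H| \times |G|)$; this is precisely the bookkeeping of Gutjahr~\cite{gutjahr1992polynomial} for paths, generalised to labels and to graphs with the $\underline{X}$-property by Gottlob, Koch and Schulz~\cite{gottlob2006conjunctive}. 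As in the statement, we assume a witnessing order~$<$ is supplied together with~$H$ (it is explicit in all the uses we make of the theorem). The main obstacle is thus not the soundness of the propagation, which follows cleanly from the min-closed reformulation, but this amortised running-time analysis.
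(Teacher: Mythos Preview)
The paper does not actually prove Theorem~\ref{thm:xbar}: it is stated as a black-box citation of Theorem~3.5 of~\cite{gottlob2006conjunctive} (extending Theorem~3.1 of~\cite{gutjahr1992polynomial}), and is only \emph{used} in the proof of Proposition~\ref{prp:l-connected-2WP}. There is therefore no ``paper's own proof'' to compare your proposal against.

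That said, your proposal is a faithful reconstruction of the argument in the cited references. Your first observation --- that Definition~\ref{def:X-property} is equivalent to coordinatewise min-closedness of each $\rho_R$ --- is correct and is exactly the key reformulation used there; your case analysis is complete. The correctness proof of arc-consistency propagation (both directions) is sound and is the standard argument for min-closed constraint languages. Your honest caveat about the running time is appropriate: naive arc consistency only gives something like $O(|G|\cdot|H|^2)$, and reaching $O(|G|\cdot|H|)$ genuinely requires the support-counter and amortisation machinery of~\cite{gutjahr1992polynomial,gottlob2006conjunctive}, which you correctly defer to. Since the present paper only ever invokes the theorem on \TWP{} subpaths (where the $\underline{X}$-property holds trivially and even a simpler dynamic programme would do), the precise constant in the exponent is immaterial for the paper's purposes.
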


We can use this result to check, for all connected subpaths of the $\TWP$ instance
graph, whether the query graph has a homomorphism to the subpath.
This leads to the following sketch for the proof of
Proposition~\ref{prp:l-connected-2WP} (the full proof is in Appendix):

\begin{proofsketch}
  We proceed following the three-step process outlined above.
  We first enumerate the possible query matches in the instance, i.e., the
  quadratic number of connected subpaths. Second, we test for each subpath 
  $a_i - \cdots - a_{i+k}$ whether it satisfies the query. We can do so
  tractably because the subpath clearly has the $\underline{X}$-property w.r.t.\
  the order $a_i < \cdots < a_{i+k}$: using the notation of
  Definition~\ref{def:X-property}, there are in fact no $n_0,n_1,n_2,n_3$ that satisfy the
  conditions. Third, having computed the resulting DNF, we compute its
  probability using $\beta$-acyclicity, eliminating variables in the order of
  the path as we explained above.
\end{proofsketch}

\section{Unlabeled Connected Queries}\label{sec:u-connected_CQs}
We now turn to the unlabeled setting, whose classification is presented in
Table~\ref{tab:unlabeled}. We start with an intractability result 
which follows directly from the well-known intractability of query evaluation in
probabilistic databases~\cite{suciu2011probabilistic}:
\begin{proposition}[\cite{suciu2011probabilistic}]
  \label{prp:unlabeled_1wp_on_connected}
  The \phomu{\OWP}{\Connected} problem is \mbox{\#P-hard}.
\end{proposition}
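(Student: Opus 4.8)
The plan is to recognise this as a restatement of a classical \#P-hardness result for probabilistic query evaluation, and to spend the (little) remaining effort on checking that the relevant instances can be taken connected. The cleanest route I would take is to start from the canonical intractable conjunctive query $h = \exists x\,\exists y\; R(x)\wedge S(x,y)\wedge T(y)$, whose probabilistic evaluation over tuple-independent databases is \#P-hard~\cite{dalvi2007efficient,suciu2011probabilistic}, and to recast it over a single binary relation. Fix two fresh vertices $\rho$ and~$\tau$; encode each fact $R(a)$ by an edge $\rho\to a$, each fact $T(b)$ by an edge $b\to\tau$, and each fact $S(a,b)$ by an edge $a\to b$, keeping the same probability on each edge as on the corresponding original fact. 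Using the standard hardness reduction for~$h$, in which $R$ and $T$ range over two disjoint domains $D_X$ and~$D_Y$ and $S\subseteq D_X\times D_Y$, the constructed probabilistic graph~$H$ has its edges organised in four layers $\{\rho\}$, $D_X$, $D_Y$, $\{\tau\}$, so that the only directed paths with three edges in any subgraph of~$H$ are of the form $\rho\to a\to b\to\tau$ with $a\in D_X$ and $b\in D_Y$. Hence a possible world of~$H$ satisfies~$h$ iff it contains a directed path with three edges, so $\Pr(G_0\leadsto H)$, for the fixed query $G_0$ equal to the one-way path with three edges, equals the probability that~$h$ holds; this establishes \#P-hardness of $\phomu{\OWP}{\calH}$, where $\calH$ is the class of instance graphs arising in this reduction.

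It only remains to make sure the instances are connected. This holds already after discarding the (irrelevant) elements of $D_X$ or $D_Y$ occurring in no $S$-fact: the remaining graph is connected, and in fact $4$-partite with parts $\{\rho\}$, $D_X$, $D_Y$, $\{\tau\}$, which also explains the ``even just $4$-partite'' remark in the statement. Alternatively, and independently of the particular reduction chosen, since the query graph $G_0$ is connected one can invoke Lemma~\ref{lem:disconnected-instances}: \#P-hardness of $\phomu{}{}$ of $G_0$ on arbitrary instances reduces to \#P-hardness on the connected components of those instances. Either way we obtain that $\phomu{\OWP}{\Connected}$ is \#P-hard.

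The only genuine obstacle here is the underlying \#P-hardness of a short path query on probabilistic graphs, which I would import wholesale from~\cite{dalvi2007efficient,suciu2011probabilistic} rather than reprove; one may equally cite directly the classical hardness of the two-edge path query $\exists x\,\exists y\,\exists z\;R(x,y)\wedge R(y,z)$, again on connected instances, which matches the phrasing ``path of length $2$'' used above. The new content is just the routine recasting, whose one point requiring care is the verification that no \emph{unintended} directed path of the target length appears in~$H$ --- which is exactly what the layered orientation of the encoding edges guarantees, and which is what makes the bijection between valuations and positive-probability possible worlds match satisfaction of~$h$.
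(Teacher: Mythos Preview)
Your proposal is correct. In fact, your ``alternative'' route---citing directly the \#P-hardness of the two-edge path query $\exists x\,y\,z\;U(x,y)\wedge U(y,z)$ on TID instances and then invoking Lemma~\ref{lem:disconnected-instances} to pass to connected instances---is precisely the paper's proof.

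Your primary route, encoding the classical hard query $h=\exists x\,y\;R(x)\wedge S(x,y)\wedge T(y)$ as a three-edge one-way path over a layered $4$-partite graph, is a genuinely different (and slightly longer) argument. Its payoff is that it directly exhibits connected---indeed $4$-partite---hard instances without appealing to Lemma~\ref{lem:disconnected-instances}, which is exactly what justifies the paper's parenthetical remark ``or even just $4$-partite graphs'' preceding the proposition. The paper's proof is shorter because it treats hardness of the unlabeled two-edge path as a black box; your encoding essentially reproves that black box in passing. Both are fine; the only point to double-check in your version (which you do address) is that the layered orientation rules out spurious length-$3$ directed paths, and that every variable occurring in the underlying PP2DNF reduction ensures connectedness once isolated vertices are discarded.
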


\begin{inlineproof}
  Example~3.3 of~\cite{suciu2011probabilistic} states that the conjunctive
  query $\exists x\exists y\exists z\: U(x,y)\land U(y,z)$ is
  \#P-hard on TID instances. In other words, \phomu{\{\to\to\}}{\All} is
  \#P-hard, which implies the \#P-hardness of
  \phomu{\OWP}{\All}.
  We conclude using
  Lemma~\ref{lem:disconnected-instances}, which provides a
  PTIME (Turing) reduction\footnote{Note that it is usual to define \#P-hardness under
  Turing reductions rather than under Karp reductions, as \#P is a
  counting complexity class.} from the \phomu{\OWP}{\Connected} problem to the
  \phomu{\OWP}{\All} problem. 
\end{inlineproof}

Note that this \phomu{\OWP}{\Connected} problem
can be phrased in a very simple way: given an unlabeled connected 
probabilistic graph $(H,\pi)$ 
and a length $m$ as input (namely, that of the \OWP graph query),
compute the probability that $H$ contains a directed path of length~$m$.

This result suggests that, to obtain tractability, we need to restrict the
instance graphs. In fact, such tractability results were already obtained in the
previous sections.
In Section~\ref{sec:disconnected-CQs}, we proved
(Proposition~\ref{prp:u-all-DWT}) that 
\phomu{\All}{\DWT} has PTIME combined complexity. Similarly, in the previous
section, we proved
that \phoml{\Connected}{\TWP} is PTIME
(Proposition~\ref{prp:l-connected-2WP}), which means
\phomu{\Connected}{\TWP} is also PTIME.
This completes the analysis of the unlabeled case for \OWP{}, \TWP{} and \DWT{}
instances (see Table~\ref{tab:unlabeled}), so 
the only remaining case is that of
\PT{} instances.

\begin{table}[t]
  \caption{Tractability of \phomU{} in the connected case (Section~\ref{sec:u-connected_CQs})}

  \centering
\begin{minipage}{.63\linewidth}
\begin{tabular}{c|ccccc}
    $\downarrow$$G$\qquad $H$$\rightarrow$ & \OWP & \TWP & \DWT & \PT & \Connected \\
    \hline
    \OWP & & & & \vlineright{} &
    \hardcell\ref{prp:unlabeled_1wp_on_connected}\\
    \hhline{~|~~~|-|>{\hardline}-}\regline
    \TWP & & & \vlineright{} & \hardcell\ref{prp:u-2WP-PT} & \hardcell\\
    \hhline{~|~~~|->{\hardline}-}\regline
    \DWT & & & & \vlineright{\ref{prp:u-DWT-PT}} & \hardcell\\
    \hhline{~|~~~|-|>{\hardline}-}\regline
    \PT & & & \vlineright{} & \hardcell & \hardcell\\
    \Connected & &\ref{prp:l-connected-2WP} &
    \vlineright{\ref{prp:u-all-DWT}} & \hardcell & \hardcell\\
    \hhline{~|--->{\hardline}--}
\end{tabular}
\label{tab:unlabeled}

\tabexplanation
\end{minipage}
\end{table}

We start our study of \phomU for \PT{} instances with the simplest queries,
namely, \OWP{}, for which we will show tractability.
We will proceed by translating the \OWP{} query to a \emph{bottom-up deterministic tree
automata}~\cite{tata}:

\begin{definition}
	Given an alphabet $\Gamma$, a \emph{bottom-up
deterministic tree automaton}
on full binary (every node has either $0$ or $2$ children) rooted trees whose nodes are labeled by $\Gamma$ is a tuple $A = (Q,F,\iota,\Delta)$,
where:
        \smallskip
\begin{itemize}[(i)]
	\item $Q$ is a finite set of \emph{states};
	\item $F \in Q$ is a subset of \emph{accepting states};
	\item $\iota : \Gamma \to Q$ 
          is an \emph{initialization function} 
          determining the state of a leaf from its label;
	\item $\Delta :  \Gamma \times Q^2
          \to Q$ 
          is a \emph{transition function} 
		determining the state of an internal 
                node from its label and the states of its two
                children.
\end{itemize}
        \smallskip
Given a $\Gamma$-tree $\la T,\lambda\ra$ 
(where $\lambda : T \to \Gamma$ is the \emph{labeling function}),
we define the \emph{run} of $A$ on~$\la T,\lambda\ra$
as the function $\phi : T \to Q$ such that
(1)~$\phi(l) = \iota(\lambda(l))$  for every leaf $l$ of~$T$; and
(2)~$\phi(n) = \Delta( \lambda(n), \phi(n_1),\phi(n_2))$
for every internal node $n$ of~$T$ with children $n_1$ and $n_2$.
The automaton $A$ \emph{accepts} $\la T,\lambda\ra$
if its run on~$T$ maps
the root of~$T$ to a state of~$F$.
\end{definition}

We will evaluate \OWP{} queries by translating them to a tree automaton and
running it on an uncertain tree.
This will use again the notion of \emph{lineage} 
(recall Definition~\ref{def:lineage}),
which was extended in~\cite{amarilli2015provenance} to
tree automata running on trees with uncertain Boolean labels: the \emph{lineage}
of an automaton on such a tree
describes the set of annotations of the tree that makes the automaton accept. 
In this context, the lineage of
\emph{deterministic} tree automata 
was shown in~\cite{amarilli2016tractable} to be
compilable to a
\emph{deterministic decomposable negation normal form}
circuit~\cite{darwiche2001tractable}:

\begin{definition}
	A \emph{deterministic decomposable negation normal form} (d-DNNF) is a Boolean circuit $C$ with the following properties:

        \begin{enumerate}[(i)]
		\item negations are only applied to input gates;
		\item the inputs of any AND-gate depend on disjoint sets of input gates;
		\item the inputs of any OR-gate are mutually exclusive,
                  i.e., for any two input gates $g_1 \neq g_2$ of $g$, there is no valuation of the inputs of $C$
			under which $g_1$ and~$g_2$ both evaluate to true.
	\end{enumerate}

\end{definition}
We can then straightforwardly extend the Boolean probability computation problem (Definition~\ref{def:booleanproba}) to take circuits as inputs, and
the properties of \mbox{d-DNNF} circuits are designed to ensure that the Boolean probability computation problem restricted to d-DNNF has PTIME complexity~\cite{darwiche2001tractable}.
Combining these tools, we can show that \phomU on one-way path queries and polytree instances is tractable:

\begin{propositionrep}
	\label{prp:u-1WP-PT}
	\phomu{\OWP}{\PT} is PTIME.
\end{propositionrep}
\begin{proofsketch}
	The idea of the proof is to construct in polynomial time in the query
        graph~$G$ a bottom-up deterministic
        automaton~$A_G$, which runs on binary trees~$T$ representing 
	possible worlds of the polytree instance $H$, and accepts such a
        tree~$T$ iff the corresponding possible world satisfies $G$.
	We can then construct a d-DNNF representation of the lineage of $G$ on $H$
        by~\cite[Theorem~6.11]{amarilli2016tractable},
        which allows us to efficiently compute $\Pr(G \leadsto H)$:
        the complexity of this process is in $O(\left|A_G\right| \cdot \left|H\right|)$, hence polynomial
        in $\left|H\right| \cdot \left|G\right|$.
        (An alternative way to see this is to use the results of \cite{cohen2009running}.)

        Intuitively, the design of the bottom-up automaton ensures that, when it
        reaches a node~$n$ after having processed the subtree $T_n$ rooted
        at~$n$, its state reflects three linear-sized quantities about $T_n$:

        \begin{enumerate}
        \item the length of the longest path leading out of~$n$;
        \item the length of the longest path leading to~$n$;
        \item the length of the longest path overall in~$T_n$ (not necessarily
          via~$n$).
        \end{enumerate}

        The final states are those where the third quantity is greater than the
        length of~$G$. The transitions compute each triple from the child
        triples by considering how the longest leading paths are extended, and
        how longer overall paths can be formed by joining an incoming and
        outgoing path.
\end{proofsketch}
\begin{proof}
	Let $G$, $(H,\pi)$ be the \OWP{} query graph and the probabilistic \PT{} instance, and $m$ be the length of $G$. 
	Then $\Pr(G \leadsto H)$ is the probability that $H$ contains a directed path of length at least $m$.

	Because we will use automata that run on full binary trees, we will have to represent possible worlds of $H$ as full binary trees.
	The first step is to transform $H$ in linear time into a full binary polytree $H'$
        by applying a variant of the left-child-right-sibling encoding:
        in so doing, in addition to unlabeled edges of both orientations that exist in the polytree,
        we will also introduce some edges called \emph{$\epsilon$-edges} that are labeled by~$\epsilon$ and whose orientation does not matter
        (so we see them as undirected edges and write them $a - b$);
        intuitively, the $\epsilon$-edge $a - b$ means that $a$ and $b$ are in fact the same.
	For a node $a \in H$ and a child $b$ of $a$, we say that $b$ is an \emph{up-child} of $a$ if we have $b \rightarrow a$ and a \emph{down-child} of $a$
	if we have $a \rightarrow b$.
	We do this transformation by processing $H$ bottom-up as follows:

	\begin{itemize}
		\item If $n$ is a leaf node of $H$, then create a node $n'$ in~$H'$.
		\item If $n$ is an internal node of $H$ with up-children $u_1,\ldots,u_k$ and down-children $d_1,\ldots,d_l$ then,
			letting $u'_1,\ldots,u'_k$ and $d'_1,\ldots,d'_l$ be the corresponding nodes in $H'$:
			create a node $n'$ in~$H'$ and nodes $n'_1,\ldots,n'_{k+l-2}$ with the following $\epsilon$-edges:
			$n' - n'_1 - \ldots - n'_{k+l-2}$, all having probability $1$. 
			Create an edge $u'_1 \rightarrow n'$ whose probability
                        is that of $u_1 \rightarrow n$.
			For $2 \leq i \leq k$ create an edge $u'_{i} \rightarrow
                        n'_{i-1}$ annotated with the same probability as $u_{i} \rightarrow n$.
			For
                        $1 \leq i \leq l-1$
                        create an edge $n'_{k-1+i} \rightarrow d'_i$ annotated with the
                        same
                        probability as $n \rightarrow d'_i$, and finally
                        create an edge $n'_{k-2+l} \rightarrow d'_l$ annotated with the
                        same probability as $n \rightarrow d'_l$. Last, if any
                        node has exactly one children (specifically, $n'$, in
                        case $k + l = 1$), then create a node $n''$ in~$H'$ and
                        connect it with an $\epsilon$-edge to the node.
	\end{itemize}
	One can check that $H'$ is indeed a full binary polytree (with some
        edges being labeled by~$\epsilon$ and being undirected) and that 
	$\Pr(G \leadsto H)$ equals the probability that $H'$ contains a path of the form
        $(\rightarrow {-}^{*})^m$, that is, $m$ occurrences of a directed
        edge~$\rightarrow$ followed by some sequence of $\epsilon$-edges~$-$.

	The second step is to transform in linear time $H'$ into a probabilistic tree $T$ 
        to which we can apply the construction
        of~\cite{amarilli2015provenance_extended}.
        Specifically, $T$ must be 
	an ordered full binary rooted tree whose edges do not have a
        label or an orientation, but whose nodes $n$ carry a label in some
        finite alphabet $\Gamma$ (written $\lambda(n)$, where $\lambda$ is the
        labeling function) and with a
        probability value written $\pi(n)$.
        Writing  $\overline{\Gamma} \colonequals \Gamma \times \{0,1\}$
        as in \cite{amarilli2015provenance_extended},
        the semantics
        of~$T$ is that it stands for a probability distribution on
        $\bar{\Gamma}$-trees, i.e., trees $T'$ labeled with $\Gamma \times
        \{0, 1\}$,
        which have same skeleton as~$T$: for each node $n$ of~$T$, the
        corresponding node $n'$ in a possible world $T'$ has label $(\lambda(n),
        1)$ with probability $\pi(n)$ and label $(\lambda(n), 0)$
        otherwise. We do this transformation by first adding a new root vertex
        to~$H'$ with an $\epsilon$-edge with probability~$1$ to the original root
        (this clearly does not change the probability that $H'$ has a path of the
        prescribed form), and then simply create $T$ from $H'$ by assigning the
	label and probability of each node that is not the new root as the
        direction of its parent edge (in $\Gamma \defeq \{\uparrow, \downarrow, -\}$) and its probability (so the root of~$T'$ has label
        $-$ and probability~$1$).

	Our last step is to construct a bDTA $A_G$ running on
        $\overline{\Gamma}$-trees such that for every possible world $W$ of
        $H'$, letting $T_W$ be its representation as a $\bar{\Gamma}$-tree,
	$A_G$ accepts $T_W$ if and only if $W$ contains a path of the form
        $(\rightarrow {-}^{*})^m$.
	The states of~$A_G$ are of the form $\la \uparrow :~i, \downarrow:~j , \mathrm{Max:}~k \ra$ for
	$0 \leq i,j \leq k \leq m$, which ensures that $A_G$ is of size
        polynomial in~$|G|$ (and we will construct it in PTIME from $G$).
	The idea is that when a node $n$ of~$T_W$ will be in such a state, it will mean that:

	\begin{itemize}
		\item Letting $W_n$ be the subinstance of~$W$ which is
                  represented by the subtree of $T_W$ rooted at $n$, and letting $r_n$ be the root 
			of $W_n$, the longest directed upwards path in $W_n$ finishing at $r_n$ has length $i$
			(the path is the longest of the form $( \uparrow -^*)^*$ that ends at
			$r_n$).
		\item The longest directed downwards path in $W_n$ beginning at $r_n$ has length $j$
			(the path is the longest of the form $(\downarrow -^*)^*$ that begins at $r_n$).
		\item The longest directed path in $W_n$ has length $k$ (the path is of the form $(\rightarrow -^*)^k$ and is the longest in $W_n$).
	\end{itemize}
	We now describe the initialization function $\iota$ of $A_G$:

	\begin{itemize}
		\item $\iota((s,0)) \colonequals \la \uparrow :~0, \downarrow:~0 ,
                  \mathrm{Max:}~0 \ra$ for any $s \in \Gamma$.
		\item $\iota((-,1)) \colonequals \la \uparrow :~0, \downarrow:~0 ,
                  \mathrm{Max:}~0 \ra$.
		\item $\iota((\uparrow,1)) \colonequals \la \uparrow :~1, \downarrow:~0 ,
                  \mathrm{Max:}~1 \ra$.
		\item $\iota((\downarrow,1)) \colonequals \la \uparrow :~0, \downarrow:~1 ,
                  \mathrm{Max:}~1 \ra$.
		\item $\Delta((\uparrow,1) , \la \uparrow :~i, \downarrow:~j , \mathrm{Max:}~k \ra, \la \uparrow :~i', \downarrow:~j' , \mathrm{Max:}~k' \ra) 
			\colonequals \la \uparrow :~i'', \downarrow:~0 , \mathrm{Max:}~k''
                        \ra$ where $i'' \colonequals \min(m, \max(i+1,i'+1))$ and
			$k'' \colonequals \min(m,\max(i'', i+j', i'+j, k, k'))$. 
		\item $\Delta((\downarrow,1) , \la \uparrow :~i, \downarrow:~j , \mathrm{Max:}~k \ra, \la \uparrow :~i', \downarrow:~j' , \mathrm{Max:}~k' \ra) 
			\colonequals \la \uparrow :~0, \downarrow:~j'' , \mathrm{Max:}~k''
                        \ra$ where $j'' \colonequals \min(m, \max(j+1,j'+1))$ and
			$k'' \colonequals \min(m,\max(j'', i+j', i'+j, k, k'))$. 
		\item $\Delta((-,1) , \la \uparrow :~i, \downarrow:~j , \mathrm{Max:}~k \ra, \la \uparrow :~i', \downarrow:~j' , \mathrm{Max:}~k' \ra) 
			\colonequals \la \uparrow :~i'', \downarrow:~j'' ,
                        \mathrm{Max:}~k'' \ra$ where $i'' \colonequals
                        \max(i,i')$ and $j'' \colonequals \max(j,j')$
                        and $k'' \colonequals \min(m,\max( k, k', i+j', i'+j))$. 
		\item $\Delta((s,0) , \la \uparrow :~i, \downarrow:~j , \mathrm{Max:}~k \ra, \la \uparrow :~i', \downarrow:~j' , \mathrm{Max:}~k' \ra) 
			\colonequals \la \uparrow :~0, \downarrow:~0 , \mathrm{Max:}~k''
                        \ra$ where $k'' \colonequals \min(m, \max( k, k', i+j', i'+j))$ for every $s\in \{-,\uparrow,\downarrow\}$. 
	\end{itemize}
	The final state of $A_G$ are all the states $\la \uparrow :~i, \downarrow:~j , \mathrm{Max:}~k \ra$ such that $k = m$.
	One can check by a straightforward induction that the semantics of each
        state is respected, so that indeed the automaton tests the query~$G$.

	We conclude thanks to 
        Proposition~3.1 of
        \cite{amarilli2015provenance_extended} by computing in linear time in
        $|A_G|$ and $|H'|$ a representation of the lineage 
	on $H'$ of the query that checks whether the input contains a directed path of
	the form $(\rightarrow -^{*})^m$, and observe by Theorem~6.11
        of \cite{amarilli2016tractable} that it is a d-DNNF.
        We then compute the probability of this d-DNNF
        \cite{darwiche2001tractable}, yielding $\Pr(G \leadsto H)$ in PTIME: this
        concludes the proof.
\end{proof}

Hence, \PT{} instances enjoy tractability for the simplest query graphs. We now
study whether this result can be extended to more general queries. We first
notice that this result immediately extends to branching (i.e., to \DWT{}
queries), and even to unions of \DWT{} queries. Indeed, in the unlabeled
setting, as we already observed
at the end of Section~\ref{sec:labeled-disconnected-queries}, such queries are
equivalent to \OWP{} queries:

\begin{proposition}
	\label{prp:u-DWT-PT}
        \phomu{\DWT}{\PT} and \phomu{\UDWT}{\PT} are PTIME.
\end{proposition}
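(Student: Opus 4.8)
The plan is to reduce both problems to \phomu{\OWP}{\PT}, which is PTIME by Proposition~\ref{prp:u-1WP-PT}, by exploiting the fact that, in the unlabeled setting, a downwards tree --- or a disjoint union of downwards trees --- is equivalent (in the sense of the equivalence relation on graphs defined in Section~\ref{sec:preliminaries}) to the one-way path whose length is that of the longest directed path in the query.

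The key step I would carry out first is the following equivalence claim: if $G$ is an unlabeled \DWT{} and $m$ denotes the length (number of edges) of a longest root-to-leaf directed path of $G$, i.e., the depth of~$G$, then $G$ is equivalent to the \OWP{} graph $\rightarrow^m$. The forward direction ($G \leadsto H$ implies $\rightarrow^m \leadsto H$) is immediate: restricting a homomorphism $h \colon G \to H$ to a longest directed path of~$G$ yields a homomorphism from $\rightarrow^m$ to~$H$. For the converse, given a homomorphism from $\rightarrow^m = v_0 \to \cdots \to v_m$ to~$H$, I would send every vertex of~$G$ at depth~$d$ to the image of~$v_d$; since every edge of the downwards tree~$G$ goes from some depth~$d$ to depth~$d+1$ with $d+1 \leq m$, this is a valid homomorphism. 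For the \UDWT{} case, writing $G = G_1 \sqcup \cdots \sqcup G_k$ with each~$G_i$ a \DWT{} of depth~$m_i$, I would observe that $G \leadsto H$ iff every $G_i \leadsto H$ iff every $\rightarrow^{m_i} \leadsto H$ iff $\rightarrow^{m} \leadsto H$ where $m = \max_i m_i$, the last step because a graph containing a directed path of length~$m$ contains directed paths of all smaller lengths, namely its prefixes.

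From this claim the proposition follows easily: given a \DWT{} (resp.\ \UDWT) query~$G$, compute in linear time the relevant value~$m$ (a longest directed path in a DAG, or in a forest, is obtained by a straightforward bottom-up traversal), and note that for \emph{every} possible world $H'$ of the probabilistic instance $(H,\pi)$ we have $G \leadsto H'$ iff $\rightarrow^m \leadsto H'$; hence $\Pr(G \leadsto H) = \Pr(\rightarrow^m \leadsto H)$, and the latter quantity is computed in PTIME by Proposition~\ref{prp:u-1WP-PT}, since $H$ is a polytree and $\rightarrow^m$ is a one-way path. I do not anticipate any genuine obstacle here: the only points requiring a little care are the boundary case $m = 0$ (a query made of isolated vertices), where both sides collapse to the always-true query of probability~$1$, and keeping the measure of ``length'' (number of edges) consistent throughout.
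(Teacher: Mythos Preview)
Your proposal is correct and follows essentially the same approach as the paper: both reduce to Proposition~\ref{prp:u-1WP-PT} by showing that an unlabeled \DWT{} (resp.\ \UDWT) query is equivalent to the \OWP{} of length equal to its depth (resp.\ the maximum depth over its components), via the same two homomorphisms you describe. Your treatment is in fact slightly more thorough, as the paper does not explicitly mention the boundary case $m=0$.
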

\begin{inlineproof}
        We first show the result for a \DWT{} query graph~$G$. Let $m$ be its
        height, i.e., the length of the longest
        directed path it contains, and 
	let $G'$ be the \OWP{} of length $m$, which can be computed in PTIME from
        $G$.
        It is easy to observe that $G$ and $G'$ are equivalent. Indeed, we can
        find $G'$ as a subgraph of~$G$ by taking any directed path of maximal
        length, and conversely there is
        a homomorphism from $G$ to $G'$: map the root of~$G$ to
        the first vertex of $G'$ and each element of~$G'$ at distance $i$ from
        the root to the $i$-th element of~$G'$.
	Hence, \phomU{} on $G$ and an input probabilistic \PT{} instance reduces to \phomU{} on $G'$ and
        the same instance, so that the result follows from Proposition~\ref{prp:u-1WP-PT}.

        The same argument extends to \UDWT{} by considering the greatest height
        of a connected component of~$G$.
\end{inlineproof}

Thus, we have successfully extended from \OWP{} to \UDWT{} queries
while preserving tractability on \PT{} instances. However, as we now show, tractability
is not preserved if we extend queries to allow two-wayness. Indeed:

\begin{proposition}
	\label{prp:u-2WP-PT}
	\phomu{\TWP}{\PT} is \#P-hard.
\end{proposition}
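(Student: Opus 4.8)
The plan is to prove \#P-hardness by reduction from \pptdnf{}, recycling the polytree construction from the proof of Proposition~\ref{prp:l-1WP-PT} and making it label-free by using two-wayness to simulate the two labels, in the same spirit as the proof of Proposition~\ref{prp:u-conj_of_2WP-2WP}. Given a PP2DNF formula $\phi$ on $n = n_1 + n_2$ variables, I would first build the $\{S,T\}$-labeled polytree instance $H$ and the $\{S,T\}$-labeled \OWP{} query $G = \xrightarrow{T} (\xrightarrow{S})^{m+3} \xrightarrow{T}$ of that proof (with $m$ the number of clauses). The feature that makes this amenable to a label-free rewriting is that, although $H$ is a genuine polytree, every $S$- and every $T$-labeled edge of $H$ is oriented consistently along the branch that a match of $G$ can follow, namely $A_{x_j,j} \xrightarrow{T} X_{x_j,j} \xrightarrow{S} \cdots \xrightarrow{S} R \xrightarrow{S} \cdots \xrightarrow{S} Y_{y_j,j} \xrightarrow{T} B_{y_j,j}$, with all arrows pointing the same way; the branching of the polytree happens only at junction vertices such as $R$, where several $X$-branches merge and several $Y$-branches split off, and at the $X_{i,j}$ and $Y_{i,j}$ carrying a $T$-edge.

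Next I would perform the label-simulation rewriting: replace each edge $a \xrightarrow{S} b$ of $H$ and of $G$ by the three-edge path $a \rightarrow \rightarrow \leftarrow b$ (an \emph{$S$-gadget}), and each edge $a \xrightarrow{T} b$ by the five-edge path $a \rightarrow \rightarrow \rightarrow \rightarrow \leftarrow b$ (a \emph{$T$-gadget}). Since this is just an edge subdivision, it turns $H$ into an unlabeled polytree $H'$ and $G$ into an unlabeled \TWP{} query $G'$ (still a single path). All edges of $H'$ receive probability $1$, except that in each $S$-gadget that replaced one of the ``valuation'' edges $X_i \xrightarrow{S} R$ or $R \xrightarrow{S} Y_i$ I set the first edge to probability $\frac{1}{2}$: deleting this edge makes the gadget impassable, simulating setting the corresponding variable to false. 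Both $H'$ and $G'$ are built in polynomial time, so it remains to show that $\Pr(G' \leadsto H')$ equals the number of satisfying valuations of $\phi$ divided by $2^n$.

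The correctness proof is where the real work lies. One direction is immediate: a homomorphism $G \leadsto J$ into a possible world $J$ of $H$ lifts, gadget by gadget, to a homomorphism $G' \leadsto J'$ into the corresponding possible world $J'$ of $H'$. For the converse, I would argue, as in Proposition~\ref{prp:u-conj_of_2WP-2WP}, that any homomorphism $G' \leadsto J'$ is forced to respect gadget boundaries. The crucial structural observation is that in $J'$ every directed path of four consecutive forward edges lies inside a single $T$-gadget (indeed, it is its initial four-edge segment): $S$-gadgets contribute only forward runs of length $2$, every original vertex of $H$ becomes a source of $J'$ so no long forward path threads through a junction, and the trailing $\leftarrow$ of each gadget breaks forward runs at gadget boundaries. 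Hence the leading, resp.\ trailing, $T$-gadget of $G'$ maps onto the forward part of a $T$-gadget of $J'$; inspecting the remaining backward edge pins its endpoints to original vertices of $H$, and since the $T$-edges on the $X$-side and the $Y$-side of $H$ point in opposite senses relative to the traversal, the leading gadget must land on an $A$-side $T$-gadget and the trailing one on a $B$-side $T$-gadget, exactly as in Proposition~\ref{prp:l-1WP-PT}. An induction along the query path then shows the homomorphism enters each successive $S$-gadget at an original vertex and leaves it at the next one; the only alternative, ``crawling'' forward into a $T$-gadget, hits a sink after at most three steps and can never be completed. This reconstructs a homomorphism $G \leadsto J$, giving the desired equality and thus \#P-hardness of \phomu{\TWP}{\PT}.

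The main obstacle is precisely this converse direction, i.e., ruling out misaligned homomorphisms. In the \OWP{}-instance setting of Proposition~\ref{prp:u-conj_of_2WP-2WP} this was a one-line remark about a run of consecutive forward edges, but here one must additionally check that the high-degree junction vertices of the polytree (notably $R$, and the $X_{i,j}$ and $Y_{i,j}$ carrying $T$-edges) create no unintended matches: in particular that no homomorphism can enter a $T$-gadget partway through its sink-side endpoint, nor crawl forward into one. This is a finite case analysis of the local out-neighbourhoods in $H'$, and the two gadget shapes together with the placement of the probabilistic edges are chosen exactly so that it goes through.
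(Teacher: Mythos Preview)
Your proposal is correct and takes essentially the same approach as the paper: reduce from \pptdnf{} via the labeled polytree construction of Proposition~\ref{prp:l-1WP-PT}, then erase the labels with two-way gadgets in the spirit of Proposition~\ref{prp:u-conj_of_2WP-2WP}. The only difference is in the specific gadget shapes: the paper encodes $a\xrightarrow{T}b$ by three forward edges $a\rightarrow\rightarrow\rightarrow b$ (rather than your five-edge $\rightarrow^4\leftarrow$), places the probability-$\tfrac12$ edge in the \emph{middle} of each valuation $S$-gadget, and anchors the correctness argument on the observation that the only $\rightarrow^5$ runs in~$H'$ arise as a $T$-gadget concatenated with the first two edges of an adjacent $S$-gadget on the $X$-side; your variant instead makes every original vertex of~$H$ a source in~$H'$, so that $\rightarrow^4$ runs alone already pin down the $T$-gadgets. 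Both choices work.

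One small wording issue: the $T$-edges on the $X$-side and $Y$-side of~$H$ do not actually ``point in opposite senses relative to the traversal'' (both are $A\xrightarrow{T}X$ and $Y\xrightarrow{T}B$, in the same direction as the query). What actually forces the leading $T$-gadget of~$G'$ onto the $A$-side is that $B_{y_j,j}$ is a leaf of~$H$, so any attempt to continue the match from there immediately dead-ends --- which is exactly what your subsequent case analysis establishes.
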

\begin{inlineproof}
	We adapt the proof of Proposition~\ref{prp:l-1WP-PT}, but we face the additional difficulty of not being allowed
	to use labels. Fortunately, we can use the two-wayness in the query graph
        to simulate labels.

        We reduce from \pptdnf{} (recall Definition~\ref{def:pp2dnf}):
        the input consists of two disjoint sets $\calX=\{X_1, \ldots,
        X_{n_1}\}$, $\calY=\{Y_1, \ldots, Y_{n_2}\}$ of Boolean
        variables, and a PP2DNF formula $\phi$.
        We construct a \TWP{} query graph~$G'$ and \PT{} instance~$H'$ with the
        same construction as the one that yielded $H$ and $G$ in that proof,
        except that we perform the following replacements (see
        Figure~\ref{fig:u-2WP-PT}):

        \begin{itemize}
          \item replace every edge $a \xrightarrow{S} b$ of~$H$ and $G$ by $3$ edges 
	$a \rightarrow \rightarrow \leftarrow b$;
          \item replace every edge $a \xrightarrow{T} b$ of~$H$ and $G$ by $3$ edges 
	$a \rightarrow \rightarrow \rightarrow b$.
        \end{itemize}

        In particular, the query graph is then defined as follows:
        \[G' \colonequals\,\,\rightarrow \rightarrow \rightarrow (\rightarrow
        \rightarrow \leftarrow)^{m+3} \rightarrow \rightarrow \rightarrow.\]
        All the edges of~$H'$ have probability $1$, except the middle edge of the edges that replaced the $S$-labeled edges
	used to code the valuation of the variables (e.g., for $X_i$, the middle edge of the 3 edges $X_i \rightarrow \rightarrow \leftarrow R$), 
	which have probability $\frac{1}{2}$.

	One can check that any image of  $G'$ 
	 must again go from the vertex $A_{x_j,j}$ to the vertex~$B_{y_j,j}$ 
         for some $1 \leq j \leq m$. 
	The key insight is that the first $\rightarrow^5$ of $G$ must be matched
        to a $\rightarrow^5$-path in $H'$, which only 
	exist as the concatenation of a $\rightarrow^3$ obtained by rewriting
        a $T$-edge for some variable~$X_j$, and of the first $\rightarrow^2$
        of the (undirected) path from $X_{x_j,j}$ to~$R$. Then there is no 
	choice left to match the next edges without failing.

	From this we deduce that,
        from any possible world $H_{\mathrm{W}}'$ of the modified instance $H'$,
        considering the corresponding possible world $H_{\mathrm{W}}$ of the
        unmodified instance~$H$ following the natural bijection,
        the modified query graph $G'$ has a homomorphism to~$H_{\mathrm{W}}'$ iff 
        the unmodified query graph $G$ has a homomorphism to~$H_{\mathrm{W}}$.
        We thus conclude that the probabilistic homomorphism problem on~$G'$ and~$H'$ has the same answer
        as the one on~$G$ and~$H$, which finishes the proof.
\end{inlineproof}

\begin{figure}
  \begin{tikzpicture}[xscale=1.52,yscale=.95,every node/.style={inner sep=.1,outer
    sep=.1}]
	\node[ name = H] at (0,5) {$H'$:};

	\node[ name = R, outer sep=3] at (3.5,5) {$R$};

	\node[ name = X1] at (0.5,3) {$X_1$};
	\node[ name = X2] at (3,3) {$X_2$};

	\node[ name = Y1] at (4,3) {$Y_1$};
	\node[ name = Y2] at (6.5,3) {$Y_2$};

	\node[ name = X13] at (0.5,1.5) {$X_{1,3}$};
	\node[ name = X12] at (0.5,0) {$X_{1,2}$};
	\node[ name = X11] at (0.5,-1.5) {$X_{1,1}$};

	\node[ name = X23] at (3,1.5) {$X_{2,3}$};
	\node[ name = X22] at (3,0) {$X_{2,2}$};
	\node[ name = X21] at (3,-1.5) {$X_{2,1}$};

	\node[ name = Y11] at (4,1.5) {$Y_{1,1}$};
	\node[ name = Y12] at (4,0) {$Y_{1,2}$};
	\node[ name = Y13] at (4,-1.5) {$Y_{1,3}$};

	\node[ name = Y21] at (6.5,1.5) {$Y_{2,1}$};
	\node[ name = Y22] at (6.5,0) {$Y_{2,2}$};
	\node[ name = Y23] at (6.5,-1.5) {$Y_{2,3}$};

	\node[ name = A11] at (-1,-2.5) {$A_{1,1}$};
	\node[ name = A12] at (-1,-1) {$A_{1,2}$};

	\node[ name = A23] at (1.5,0.5) {$A_{2,3}$};

	\node[ name = B12] at (5.5,-1) {$B_{1,2}$};

	\node[ name = B21] at (8,0.5) {$B_{2,1}$};
	\node[ name = B23] at (8,-2.5) {$B_{2,3}$};

	\draw[line width = 1pt, ->]  (X1) -- (1.5,3.666);
	\draw[dashed, line width = 1pt, ->]  (1.5, 3.666) -- (2.5, 4.333);
	\draw[line width = 1pt, <-]  (2.5,4.333) -- (R);

	\draw[line width = 1pt, ->]  (X2) -- (3.166, 3.666);
	\draw[dashed, line width = 1pt, ->]  (3.1666, 3.666) -- (3.333, 4.333);
	\draw[line width = 1pt, <-]  (3.333, 4.333) -- (R);

	\draw[line width = 1pt, ->]  (R) -- (3.666, 4.333);
	\draw[dashed, line width = 1pt, ->]  (3.666, 4.333) -- (3.833, 3.666);
	\draw[line width = 1pt, <-]  (3.833, 3.666) -- (Y1);

	\draw[line width = 1pt, ->]  (R) -- (4.5, 4.333);
	\draw[dashed, line width = 1pt, ->]  (4.5, 4.333) -- (5.5, 3.666);
	\draw[line width = 1pt, <-]  (5.5, 3.666) -- (Y2);

	\draw[line width = 1pt, ->]  (X1) -- (0.5, 2.45);
	\draw[line width = 1pt, <-]  (0.5, 2.45) -- (0.5, 2.05);
	\draw[line width = 1pt, <-]  (0.5, 2.05) -- (X13);

	\draw[line width = 1pt, ->]  (X13) -- (0.5, 0.95);
	\draw[line width = 1pt, <-]  (0.5, 0.95) -- (0.5, 0.55);
	\draw[line width = 1pt, <-]  (0.5, 0.55) -- (X12);

	\draw[line width = 1pt, ->]  (X12) -- (0.5, -0.55);
	\draw[line width = 1pt, <-]  (0.5, -0.55) -- (0.5, -0.95);
	\draw[line width = 1pt, <-]  (0.5, -0.95) -- (X11);

	\draw[line width = 1pt, ->]  (X2) -- (3, 2.45);
	\draw[line width = 1pt, <-]  (3, 2.45) -- (3, 2.05);
	\draw[line width = 1pt, <-]  (3, 2.05) -- (X23);

	\draw[line width = 1pt, ->]  (X23) -- (3, 0.95);
	\draw[line width = 1pt, <-]  (3, 0.95) -- (3, 0.55);
	\draw[line width = 1pt, <-]  (3, 0.55) -- (X22);

	\draw[line width = 1pt, ->]  (X22) -- (3, -0.55);
	\draw[line width = 1pt, <-]  (3, -0.55) -- (3, -0.95);
	\draw[line width = 1pt, <-]  (3, -0.95) -- (X21);

	\draw[line width = 1pt, ->]  (Y1) -- (4, 2.45);
	\draw[line width = 1pt, ->]  (4, 2.45) -- (4, 2.05);
	\draw[line width = 1pt, <-]  (4, 2.05) -- (Y11);

	\draw[line width = 1pt, ->]  (Y11) -- (4, 0.95);
	\draw[line width = 1pt, ->]  (4, 0.95) -- (4, 0.55);
	\draw[line width = 1pt, <-]  (4, 0.55) -- (Y12);

	\draw[line width = 1pt, ->]  (Y12) -- (4, -0.55);
	\draw[line width = 1pt, ->]  (4, -0.55) -- (4, -0.95);
	\draw[line width = 1pt, <-]  (4, -0.95) -- (Y13);

	\draw[line width = 1pt, ->]  (Y2) -- (6.5, 2.45);
	\draw[line width = 1pt, ->]  (6.5, 2.45) -- (6.5, 2.05);
	\draw[line width = 1pt, <-]  (6.5, 2.05) -- (Y21);

	\draw[line width = 1pt, ->]  (Y21) -- (6.5, 0.95);
	\draw[line width = 1pt, ->]  (6.5, 0.95) -- (6.5, 0.55);
	\draw[line width = 1pt, <-]  (6.5, 0.55) -- (Y22);

	\draw[line width = 1pt, ->]  (Y22) -- (6.5, -0.55);
	\draw[line width = 1pt, ->]  (6.5, -0.55) -- (6.5, -0.95);
	\draw[line width = 1pt, <-]  (6.5, -0.95) -- (Y23);

	\draw[line width = 1pt, ->]  (A11) -- (-0.4,-2.1);
	\draw[line width = 1pt, ->]  (-0.4,-2.1) -- (-0.1,-1.9);
	\draw[line width = 1pt, ->]  (-0.1,-1.9) -- (X11);

	\draw[line width = 1pt, ->]  (A12) -- (-0.4,-0.6);
	\draw[line width = 1pt, ->]  (-0.4,-0.6) -- (-0.1, -0.4);
	\draw[line width = 1pt, ->]  (-0.1,-0.4) --(X12);

	\draw[line width = 1pt, ->]  (A23) -- (2.1,0.9);
	\draw[line width = 1pt, ->]  (2.1,0.9) -- (2.4,1.1);
	\draw[line width = 1pt, ->]  (2.4,1.1) --(X23);

	\draw[line width = 1pt, ->]  (Y12) -- (4.6,-0.4);
	\draw[line width = 1pt, ->]  (4.6,-0.4) -- (4.9,-0.6);
	\draw[line width = 1pt, ->]  (4.9,-0.6) -- (B12);

	\draw[line width = 1pt, ->]  (Y21) -- (7.1,1.1);
	\draw[line width = 1pt, ->]  (7.1,1.1) -- (7.4,0.9);
	\draw[line width = 1pt, ->]  (7.4,0.9) -- (B21);

	\draw[line width = 1pt, ->]  (Y23) -- (7.1,-1.9);
	\draw[line width = 1pt, ->]  (7.1,-1.9) -- (7.4,-2.1);
	\draw[line width = 1pt, ->]  (7.4,-2.1) -- (B23);

	\node[fill = white, name = G] at (3.5,-3) {$G'$:   $\rightarrow \rightarrow \rightarrow (\rightarrow \rightarrow \leftarrow)^{6} \rightarrow \rightarrow \rightarrow$};

\end{tikzpicture}
\centering
\caption{Illustration of the proof of Proposition~\ref{prp:u-2WP-PT} for the
PP2DNF formula $X_1 Y_2 \lor X_1 Y_1 \lor X_2 Y_2$. Dashed edges have probability $\frac{1}{2}$, all others have probability~$1$.}
\label{fig:u-2WP-PT}
\end{figure}
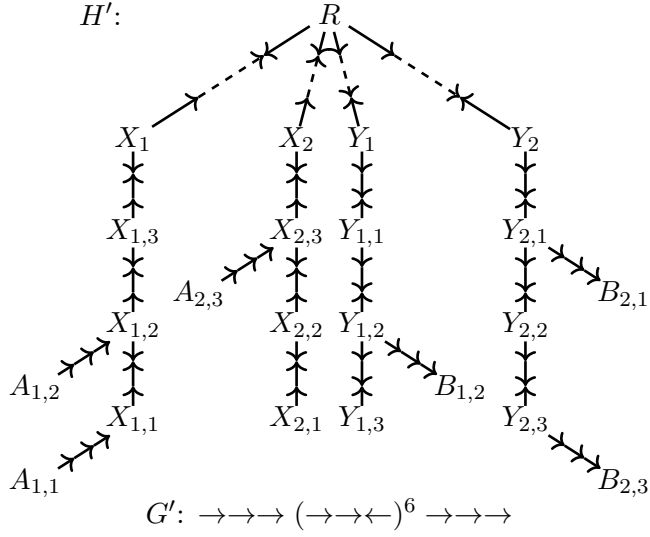

\section{Conclusion}\label{sec:conclusion}
We have introduced the probabilistic homomorphism problem,
also known in the database community as probabilistic evaluation of conjunctive queries
on TID instances, and studied its combined complexity for 
various restricted classes of query and instance graphs.
Our classes illustrate the impact on \phom of
various features: acyclicity, two-wayness, branching, connectedness, and labeling.
As we show, the landscape is already quite enigmatic, even for those seemingly restricted classes!
In particular, we have identified four incomparable maximal
tractable cases, reflecting various tradeoffs between the expressiveness that we
can allow in the queries and in the instances:
\begin{itemize}
\item arbitrary queries on unlabeled downward trees
  (Proposition~\ref{prp:u-all-DWT});
\item one-way path queries on labeled downward trees
  (Proposition~\ref{prp:l-1WP-DWT});
\item connected queries on two-way labeled path instances
  (Proposition~\ref{prp:l-connected-2WP});
\item downward tree queries on unlabeled polytrees
  (Pro\-position~\ref{prp:u-DWT-PT});
\end{itemize}
These results all extend to disconnected instances, as shown in
Section~\ref{sec:disconnected-instances}.
The
(somewhat sinuous)
tractability border is described in
Tables~\ref{tab:unlabeled-conj_of},~\ref{tab:labeled}, and~\ref{tab:unlabeled}.

It is debatable whether the tractable classes we have identified yield
interesting tractable cases for practical applications. The
settings of Propositions~\ref{prp:u-all-DWT}, \ref{prp:u-DWT-PT},
and~\ref{prp:l-connected-2WP} may look restrictive, as both labels and branching
are important features of real-world instances, though some situations
may involve unlabeled tree-like instances, or labeled words.
The setting of Proposition~\ref{prp:l-1WP-DWT} may be richer,
and is reminiscent of probabilistic
XML~\cite{kimelfeld2013probabilistic}: the instance is a labeled
(downward) tree, while the query is a path evaluated on that
tree.

\paragraph*{Future work.}
The query and instance features studied in this paper could be completed by
other dimensions: e.g., studying an \emph{unweighted} case inspired by counting
CSP where all probabilities are $1/2$ (as our hardness proofs seem to
heavily rely
on some edges being certain); imposing \emph{symmetry} in the
sense of \cite{beame2015symmetric}; or alternatively restricting the
\emph{degree} of graphs (though all our hardness proofs on polytrees and lower
classes can seemingly be modified to work on bounded-degree). Another option
would be to modify some of the existing dimensions: first, 
polytrees could be generalized to \emph{bounded-treewidth instances}, as we believe
that the relevant tractability result (Proposition~\ref{prp:u-DWT-PT}) adapts to
this setting; second, non-branching instances could be generalized to
\emph{bounded-pathwidth instances}, or maybe general instances with the
$\underline{X}$-property (recall Definition~\ref{def:X-property}).

Of course, another natural direction would be to lift the arity-two restriction,
although it is not immediate to generalize the definition of our classes to
work in higher arity signatures.
We could also extend the query language: in particular, allow \emph{unions}
of conjunctive queries as in~\cite{dalvi2012dichotomy}; allow a
\emph{descendant} axis in the spirit of XML query languages~\cite{benedikt2009xpath}; or more generally allow fixpoint constructs as done
in~\cite{amarilli2017combined} in the non-probabilistic case. An interesting
question is whether an extended query language could capture the
tractability results obtained in the context of probabilistic XML
by~\cite{cohen2009running} (remembering, however, that such
results crucially depend on having an order relation on node children~\cite{amarilli2014possibility}).
Another possibility would be to search for extensions of the
\mbox{$\beta$-acyclicity}
approach, and investigate which restrictions on the queries and instances ensure
that the lineages are \mbox{$\beta$-acyclic}.

Last, the connection to CSP would seem to warrant further investigation. In
particular, we do not know whether one could show a general dichotomy result on
the combined complexity of query evaluation on TID instances, to provide a
probabilistic analogue to the Feder--Vardi conjecture~\cite{feder1998computational}.

\paragraph*{Acknowledgements.}
We are grateful to Florent Capelli for pointing out to us the connection to
$\beta$-acyclic instances and suggesting the idea used in the proof of
Proposition~\ref{prp:l-1WP-DWT}, and to Tyson Williams for pointing out a connection to
holographic algorithms,
in a very informative CSTheory post~\cite{cstheory_tyson_williams}.
We also thank anonymous referees for their valuable feedback.
This work was partly funded by the Télécom ParisTech Research Chair on Big Data
and Market Insights.

\begin{toappendix}
  \section{Proof of Hardness of Counting Edge Covers}\label{apx:holographic}
  In this appendix, following a connection pointed out
in~\cite{cstheory_tyson_williams}, we give a proof of the following
strengthening of Theorem~\ref{thm:bipec}, which is independent from the proof
of~\cite{khanna2011queries}.

\begin{theorem}
  \label{thm:bipec2}
  The \bipec{} problem is \#P-complete. Hardness holds even for 2--3
  regular bipartite undirected graphs that are planar.
\end{theorem}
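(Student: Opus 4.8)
Membership in \#P is immediate: an edge cover is a subset of the edge set satisfying a polynomial-time checkable condition, so \bipec{} counts the accepting computations of an obvious nondeterministic machine. The content is thus the hardness, which we get from Valiant's holographic framework~\cite{valiant2008holographic} together with the Holant dichotomy of Cai, Lu and Xia~\cite{cai2012holographic}, following~\cite{cstheory_tyson_williams}. The first step is to recast \bipec{} as a Holant problem: regard the edges of the input graph as Boolean variables, and put on each vertex of degree~$d$ the symmetric ``at-least-one'' signature $\mathrm{OR}_d = [0,1,\dots,1]$ (value~$0$ if all incident variables are~$0$, and~$1$ otherwise); the Holant of this signature grid then equals the number of edge covers. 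Restricted to planar $2$-$3$ regular bipartite inputs, this is the bipartite Holant problem $\mathrm{Holant}([0,1,1] \mid [0,1,1,1])$ on planar signature grids, with the binary signature $[0,1,1]$ on the degree-$2$ side and the ternary signature $[0,1,1,1]$ on the degree-$3$ side.

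The second step is a holographic transformation. The binary signature $[0,1,1]$ corresponds to the symmetric $2\times 2$ matrix with rows $(0,1)$ and $(1,1)$, which is nondegenerate (determinant~$-1$); hence there is a matrix $M \in \mathrm{GL}_2(\mathbb{C})$ transforming $[0,1,1]$ into the binary equality signature $[1,0,1]$ --- for instance $M = \left(\begin{smallmatrix} -i & 0 \\ i & 1 \end{smallmatrix}\right)$ works over~$\mathbb{C}$. By Valiant's Holant theorem, applying $M$ on the degree-$2$ side and $M^{-1}$ on the degree-$3$ side yields a polynomial-time, value-preserving equivalence between $\mathrm{Holant}([0,1,1] \mid [0,1,1,1])$ and $\mathrm{Holant}([1,0,1] \mid \mathcal{G}')$, where $\mathcal{G}'$ is the ternary symmetric signature obtained from $[0,1,1,1]$ by applying $M^{-1}$ to each of its three arguments. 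A degree-$2$ vertex carrying $[1,0,1]$ merely identifies its two incident variables and so can be contracted, so $\mathrm{Holant}([1,0,1] \mid \mathcal{G}')$ is just the ordinary Holant problem $\mathrm{Holant}(\mathcal{G}')$ on $3$-regular graphs. Crucially, a holographic transformation leaves the underlying graph untouched, and contracting degree-$2$ vertices preserves planarity, so the whole chain respects the restriction to planar $3$-regular (multi)graphs; conversely, subdividing every edge of a $3$-regular graph produces a simple $2$-$3$ regular bipartite graph, so the reduction can also be run in the direction we need.

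The third step is to compute $\mathcal{G}'$ explicitly --- a short calculation with $2\times 2$ matrices over~$\mathbb{C}$ --- and to check it against the dichotomy of~\cite{cai2012holographic} for Holant problems defined by a single ternary symmetric signature, in its refinement to planar graphs. One verifies that $\mathcal{G}'$ is not degenerate (rank one), not of generalized-equality / affine type, not of product type, not a Fibonacci-type gate, and not realizable by matchgates; in particular it is not made tractable by the FKT algorithm on planar instances. It follows that $\mathrm{Holant}(\mathcal{G}')$ is \#P-hard already on planar $3$-regular graphs, and unwinding the reductions above (subdivide a hard $\mathrm{Holant}(\mathcal G')$ instance, then apply the holographic transformation backwards) gives \#P-hardness of \bipec{} on planar $2$-$3$ regular bipartite graphs; together with \#P membership, this yields \#P-completeness.

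The main obstacle is exactly the book-keeping of this last step: producing the transformed signature $\mathcal{G}'$ under the chosen complex transformation, and then checking that it avoids \emph{every} tractable family of the ternary Holant dichotomy --- notably the matchgate-realizable signatures, which are precisely those that would become easy on planar inputs. A secondary point needing care is that $M$ lives over~$\mathbb{C}$ (a suitable algebraic extension of~$\mathbb{Q}$ suffices), and that all intermediate reductions --- in particular the passage through the bipartite signature grid and the contraction or subdivision of the equality vertices --- stay polynomial-time and value-preserving (up to a known nonzero scalar), which is where the precise statement of Valiant's Holant theorem is invoked.
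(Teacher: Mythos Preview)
Your high-level plan---reformulate \bipec{} as a bipartite Holant problem, apply a holographic transformation, and invoke the dichotomy of~\cite{cai2012holographic}---is sound, and your identification of $\mathrm{Holant}([0,1,1]\mid[0,1,1,1])$ with edge-cover counting on $2$--$3$ regular bipartite graphs is exactly how the paper starts. However, the route you then take is considerably more involved than the paper's, and you leave the hardest step unexecuted.

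The paper avoids your ``main obstacle'' entirely. Instead of transforming $[0,1,1]$ into the equality $[1,0,1]$ via a complex matrix and then having to compute the resulting ternary signature~$\mathcal{G}'$ and test it against every tractable family of the dichotomy, the paper uses the trivial \emph{reversal} (bit-flip) transformation $\nu\mapsto 1-\nu$ on edge subsets: a set $S$ is an edge cover iff for every vertex some incident edge lies in~$S$, which is the same as saying that $E\setminus S$ has, at every vertex, at least one incident edge \emph{missing}. This bijection shows that $\#[0,1,1]\mid[0,1,1,1]$ and $\#[1,1,0]\mid[1,1,1,0]$ have equal values on every instance. The latter problem is \emph{explicitly} proved \#P-hard on planar $2$--$3$ regular bipartite graphs in~\cite{cai2008holographic,cai2012holographic}, so the paper simply cites that and is done. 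No complex coefficients, no signature computation, no case analysis against the tractable families.

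By contrast, your proof is only a plan at the decisive point: you write ``a short calculation'' and ``one verifies that $\mathcal{G}'$ is not degenerate, not affine, not product type, not Fibonacci, not matchgate-realizable,'' but you do not actually compute $\mathcal{G}'$ or perform any of these checks. Since this verification is precisely where hardness is established---and since the matchgate-realizability test for planar instances is not a one-line check---the argument as written has a genuine gap. Your approach can be made to work, but to close it you would either need to carry out those computations in full, or (much more economically) replace your complex transformation by the bit-flip and cite the specific hardness of $\#[1,1,0]\mid[1,1,1,0]$ as the paper does.
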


\begin{proof}
  Membership in~\#P is straightforward: the machine guesses a subset of edges
  and accepts in PTIME iff the subset is a matching. Hence, we focus on hardness.

        Recall that \emph{$2$--$3$ regular bipartite undirected graphs} are
        bipartite undirected graphs $\Gamma = (U \sqcup V, E)$ where the degree
of each vertex in $U$ is $2$ and that of each vertex in $V$ is~$3$.
  We will show how the result derives from the
  holographic reduction results of~\cite{cai2012holographic}.

  For $t \in U \sqcup V$, we denote by $E(t)$ the set of edges to which $t$ is
  adjacent.
For a valuation of the edges $\nu: E \to \{0,1\}$ and a vertex $t$, we write $\nu(E(t))$ the multiset $\{\!\!\{ \nu(e) \mid e \in E(t) \}\!\!\}$.
Given a multiset of bits $B$, the \emph{Hamming weight} of~$B$ is the number of $1$ bits in~$B$.
For each $x_0, \ldots, x_n \in \{0, 1\}$, let $[x_0, \ldots, x_n]$ denote
  the function that takes a multiset of~$n$ bits as input and outputs $x_i$ if the Hamming weight
of those $n$ bits is $i$.

  For every $x_0, x_1, x_2, y_0, y_1, y_2, y_3 \in \{0,1\}$, the problem
  \#$[x_0, x_1, x_2] | [y_0, y_1, y_2, y_3]$ is the following~\cite{cai2012holographic}:
  given a~\mbox{$2$--$3$} regular bipartite undirected graph $\Gamma = (U \sqcup V, E)$, 
compute the quantity 
  \[
    \sum_{\nu: E \to \{0,1\}} \prod_{u \in U} [x_0, x_1, x_2] (\nu(E(u))
  \times\prod_{v \in V} [y_0, y_1, y_2, y_3] ( \nu(E(v))  ).
  \]

  Then, when we restrict our attention to $2$--$3$ regular bipartite undirected graphs, our problem
  \bipec{} can be seen to be the same as $\#[0, 1, 1] | [0, 1, 1, 1]$. 
Indeed, seeing a valuation~$\nu$ of the edges as a set of edges, the
  value under the sum for a valuation $\nu$ will be $1$ if and only if,
  for every vertex, there exists
an adjacent edge such that $\nu(e) = 1$, which exactly means that $\nu$ is an edge cover of $\Gamma$.

Now, let us consider the problem $\#[1, 1, 0] | [1, 1, 1, 0]$.
As observed at the end of Section~8
  of~\cite{cai2008holographic}, it is the
    \emph{reversal} of the problem $\#[0, 1, 1] | [0, 1, 1, 1]$.
    Indeed, the problem $\#[1, 1, 0] | [1, 1, 1, 0]$ amounts
to counting the number of subsets $S$ of edges such that, for every
vertex $v$, there exists at least one edge adjacent to~$v$ that is not
in~$S$, i.e., that is in~$E\setminus S$. But this means that $\#[1, 1, 0] |
[1, 1, 1, 0]$ counts the number of sets $S$ such that $E \setminus S$ is
an edge cover of~$\Gamma$. As there is a trivial bijection between the sets
$S$ that are edge covers and the sets $S'$ such that $E \setminus S'$ is
an edge cover,
\bipec{} is PTIME-equivalent to $\#[1, 1, 0] | [1, 1, 1, 0]$ on $2$--$3$
regular bipartite undirected graphs.

  Now, the problem $\#[1,1,0] | [1,1,1,0]$ is shown in~\cite{cai2008holographic,cai2012holographic} to be \#P-hard.
However, one subtle problem is that the notion of graph used in these works is
different from our own notion, because they generally allow the graph
  to contain multiple occurrences of the same edge, and they allow \emph{self-loops}
(edges from a vertex to itself): this is what we would call a
\emph{multigraph}. By contrast, the graphs that we used throughout the paper are
\emph{simple graphs}, where edges cannot be repeated, and we disallow
self-loops.
Now, even though this is implicit in~\cite{cai2008holographic,cai2012holographic}, the line of works on Holant problems typically consider
multigraphs instead of graphs (e.g., see footnote~1 page~217
of~\cite{williams2015advances}), so the hardness result
in~\cite{cai2008holographic,cai2012holographic} would in principle apply to
input \emph{multigraphs}. Thus, it does not directly allow us to conclude the proof of
our result, as we stated that the problem \bipec{} is \#P-hard on
\emph{simple} graphs.
For this reason, to conclude our proof, we will follow in detail the reasoning used by~\cite{cai2008holographic} to
show that $\#[1,1,0] | [1,1,1,0]$ is \#P-hard on 2--3 regular bipartite planar
(multi)graphs, and check that their argument in fact establishes hardness even
when the input graphs are required to be simple.

The first step of the hardness proof
for $\#[1,1,0] | [1,1,1,0]$ 
in~\cite{cai2008holographic} is \cite[Lemma 4.1]{cai2008holographic}. For our
purposes, this lemma shows that the problem $\#[x_0,x_1,x_2] | [1,1,1,0]$ is
\#P-hard for some~$x_0,x_1,x_2 \in \mathbb{R}$ (which uses some generalized
definition of the notation \#$[x_0, x_1, x_2] | [y_0, y_1, y_2, y_3]$ that
allows
values outside of $\{0, 1\}$), and that this hardness holds even for inputs that are 2--3 regular
bipartite planar (multi)graphs. We want to show that this is true also when the input
graphs are required to be simple. 
This hardness result in \cite[Lemma 4.1]{cai2008holographic} is shown by reducing from the problem of counting vertex covers on~$3$-regular planar (multi)graphs, which
is \#P-hard by~\cite{xia2007computational}: as they explain, this implies that
the problem $\#[0,1,1] | [1,0,0,1]$ is \#P-hard on 2--3 regular bipartite planar
(multi)graphs. Now, we note that \cite[Theorem~9]{xia2007computational} actually shows
hardness of the problem \#3RBP-VC of counting vertex covers
on $3$-regular planar (multi)graphs that are \emph{bipartite}: this means in
particular that these input graphs cannot contain self-loops (but they can contain
multi-edges). We can use this to show \#P-hardness of the problem
$\#[0,1,1] | [1,0,0,1]$ on 2--3 regular bipartite \emph{simple} graphs. Indeed,
as in \cite[Lemma 4.1]{cai2008holographic},
we reduce from \#3RBP-VC by taking an input graph $G$ to \#3RBP-VC and
creating a new graph $G'$ by subdividing each edge (adding
a vertex of degree~2 in the middle of the edge). Now, the resulting graph~$G'$ is
2--3 regular bipartite, it is still planar, it contains no multi-edges because $G$
does not contain any self-loops (any multi-edges in~$G$ get
translated to different edges in~$G'$ as each of them is subdivided with a different
middle vertex), and the answer to \#3RBP-VC on~$G$ is exactly the answer to $\#[0,1,1] |
[1,0,0,1]$ on~$G'$. Indeed, as explained in \cite[Lemma
4.1]{cai2008holographic}, counting the vertex covers on~$G$ amounts to
counting the number of subsets $S$ of the vertices of degree~3 of~$G'$
such that every vertex of degree~2 is adjacent to a vertex of that subset.
Equivalently, this is counting subsets $S'$ of the edges of~$G'$ such that each vertex of
degree~2 is adjacent to an edge of~$S'$ (hence the constraint $[0,1,1]$) and
such that for every vertex of degree~3 we either keep all incident edges in~$S'$ (i.e.,
we keep the vertex in~$S$) or keep none (i.e., we do not keep the vertex
in~$S$), hence the constraint $[1,0,0,1]$. Thus, this reduction shows that the
problem $\#[0,1,1] | [1,0,0,1]$ on~2--3 regular bipartite planar
graphs is \#P-hard, even when the input graphs are required to be simple. Let us
denote this problem by ($\star$).

The rest of the proof of~\cite[Lemma 4.1]{cai2008holographic} uses a holographic
reduction to reduce problem~($\star$) to the problem~$\#[x_0,x_1,x_2] | [1,1,1,0]$,
for some~$x_0,x_1,x_2 \in \mathbb{R}$. More precisely, the hardness is shown
by reducing either from problem~($\star$), or from a different problem of counting matchings,
depending on the signature that we are interested in. Specifically, the
reduction from~($\star$) is used when
reducing to a signature $\#[x_0,x_1,x_2] | [y_0,y_1,y_2,y_3]$ when we can find values
$\alpha_1, \alpha_2, \beta_1, \beta_2$ such that $\alpha_1 \beta_2 - \alpha_2
\beta_1 \neq 0$ and we have $y_i = \alpha_1^{3-i} \alpha_2^i +
\beta_1^{3-i}\beta_2^i$ for all~$i$. For the signature that we are interested in, namely
$[y_0,y_1,y_2,y_3] = [1,1,1,0]$, we can
take~$\alpha_1\colonequals 0,\alpha_2\colonequals -1,\beta_1\colonequals
1,\beta_2\colonequals 1$ and verify that the conditions
are satisfied, so indeed the reduction is from problem~($\star$). The reduction from
problem~($\star$) in \cite[Lemma 4.1]{cai2008holographic} is a holographic reduction,
which does not modify the input graphs (see for instance~\cite[Section 3.2]{williams2015advances} for an introduction to 
holographic reductions). Thus, it follows from the proof of~\cite[Lemma 4.1]{cai2008holographic} that
there is a choice of $x_0,x_1,x_2 \in \mathbb{R}$ such that the
problem~$\#[x_0,x_1,x_2] | [1,1,1,0]$ is \#P-hard on 2--3 bipartite planar
graphs that are simple.

Then, the authors of~\cite{cai2008holographic} use what is called the \emph{interpolation technique},
in order to show that any problem of the form~$\#[x_0,x_1,x_2] | [1,1,1,0]$ is
polynomial-time reducible (under Turing reductions) to the problem~$\#[1,1,0] |
[1,1,1,0]$, which is the problem that we are interested in. This is done in the
appendix of~\cite{cai2008holographic}, on page~15, by replacing the degree-2 nodes of the input graph by some graph gadgets.
On page 15 we see that they use graph gadget number 1, and by looking at that
gadget (Figure~3), we see that performing replacements with this gadget can
never introduce parallel edges or self-loops. Thus, as the reduction is from
simple graphs (as we argued in the preceding paragraphs), the graphs that are
images of this reduction are also simple graphs. Thus, the proof
of~\cite{cai2008holographic} actually shows that the problem~$\#[1,1,0] |
[1,1,1,0]$ is \#P-hard on 2--3 regular bipartite planar graphs that are simple.
This is the result needed to conclude the proof.
\end{proof}

\end{toappendix}

\bibliographystyle{abbrv}
\bibliography{main}

\end{document}